\newcommand{\N}{\mathbb{N}}
\newcommand{\crn}[1]{\mathcal{#1}}
\newcommand{\species}{\Lambda}
\newcommand{\reactions}{\Gamma}
\newcommand{\reaction}{\mathcal{\gamma}}
\newcommand{\size}[1]{\lvert #1 \rvert}
\newcommand{\config}[1]{\overrightarrow{#1}}
\newcommand{\single}[1]{\vec{#1}}
\newcommand{\reactants}{\config{R}}
\newcommand{\products}{\config{P}}
\newcommand{\undefConfig}{\bot}
\newcommand{\cmap}{M}
\newcommand{\stepsto}{\mathop{\rightarrow}\limits}
\newcommand{\reaches}{\mathop{\leadsto}\limits}
\newcommand{\rxn}{\mathop{\longrightarrow}\limits}
\newcommand{\macrotrans}{\Rightarrow}
\newcommand{\Zero}{Z_{\emptyset}}
\newcommand{\SeqReacts}{\dashrightarrow}
\newcommand{\SeqProds}{\raisebox{2.5pt}{\tikz[baseline]{
    \draw[->] (0,0) -- (0.2,0);
    \draw[->] (0,0) -- (0.4,0);
    \draw[->] (0,0) -- (0.6,0);}}}
\newcommand{\RM}{\mathcal{R}} 
\newcommand{\PM}{\mathcal{P}} 
\newcommand{\Set}[1]{\{#1\}}
\newcommand{\para}[1]{\vspace*{.2cm}\noindent{\textbf{#1.}}}
\newcommand{\fillhor}{{\par\setlength{\parfillskip}{0pt}}}
\newcommand{\iCRNsimKVG}{
        \begin{tabular}{| l  l  l |}\hline
         $\forall \reaction_i \in \reactions:$ & $1.$ & $\config{\RM}_i \xrightarrow{I} \config{e}_{\Set{\RM_i}} + \config{\PM}_i + \size{\Set{\RM_i}} \cdot \single{I}$ \\ \hline
        \multirow{2}{*}{$\forall \lambda_i \in \species_1:$} & $2.$ & $\single{e}_{\lambda_i} + \single{I} + \single{\lambda}_i \rightarrow \single{\lambda}_i$ \\
        & $3.$ & $\single{e}_{\lambda_i} + \single{I} \xrightarrow{\lambda_i} \single{z}_{\lambda_i}$ \\ \hline
        \end{tabular}
}
\newcommand{\KVGsimiCRN}{
\begin{tabular}{| l  l  l |}\hline
            \multirow{2}{*}{$\forall \reaction_i \in \reactions:$} & $1.$ & $\single{G} + \config{\RM_i} + \config{Z_i} \rightarrow \single{G}_i + \config{\RM_i} + \config{Z_i}$ \\
            & $2.$ & $\single{G}_{i} + \config{\RM_i} \SeqProds \single{G} + \config{\PM_i}$ \\ \hline
        \end{tabular}
}
\newcommand{\iCRNiffKVG}{
    \begin{tabular}{|p{1.3cm} p{0.4cm} @{} p{4cm}| @{}p{.02cm}@{} | p{1.1cm} p{0.4cm} @{} p{3.8cm}|}\cline{1-3}\cline{5-7}
         $\forall \reaction_i \in \reactions:$ & $1.$ & $\config{\RM}_i \xrightarrow{I} \config{e}_{\Set{\RM_i}} + \config{\PM}_i + \size{\Set{\RM_i}} \cdot \single{I}$ & &
         \multirow{2}{*}{$\forall \reaction_i \in \reactions:$} & $1.$ & $\single{G} + \config{\RM_i} + \config{Z_i} \rightarrow \single{G}_i + \config{\RM_i} + \config{Z_i}$ \\ \cline{1-3}
         
        \multirow{2}{*}{$\forall \lambda_i \in \species_1:$} & $2.$ & $\single{e}_{\lambda_i} + \single{I} + \single{\lambda}_i \rightarrow \single{\lambda}_i$ 
        & &
        & $2.$ & $\single{G}_{i} + \config{\RM_i} \SeqProds \single{G} + \config{\PM_i}$ \\ 
        
        & $3.$ & $\single{e}_{\lambda_i} + \single{I} \xrightarrow{\lambda_i} \single{z}_{\lambda_i}$ 
        & & & &\\ \cline{1-3}\cline{5-7}
        \multicolumn{3}{l}{\textbf{(a)} iCRN simulating $k$-VG} & \multicolumn{1}{c}{}& \multicolumn{3}{l}{\textbf{(b)} $k$-VG simulating iCRN}\\
    \end{tabular}
}
\newcommand{\CRsimKVG}{
\begin{tabular}{| l  l  l |}\hline
        \multicolumn{3}{|c|}{\textbf{Fast Reactions} (Rank 2)} \\ 
        $\forall$ $\reaction_i \in \reactions:$ & $1.$ & $\single{G} + \config{\RM_i} \rightarrow \single{x} + \config{e}_{\{\RM_i\}} + \config{\PM_i}$ \\ \hline
        \multirow{2}{*}{$\forall \lambda_i \in \species_1:$} & $2.$ & $\single{x} + \single{e}_{\lambda_i} + \single{\lambda}_i \rightarrow \single{x} + \single{\lambda}_i$ \\
        & $3.$ & $\single{y} + \single{e}_{\lambda_i} \rightarrow \single{y} + \single{z}_{\lambda_i}$ \\ \hline
        \multicolumn{3}{|c|}{\textbf{Slow Reactions} (Rank 1)} \\ 
        & $4.$ & $\single{x} \rightarrow \single{y}$ \\
        & $5.$ & $\single{y} \rightarrow \single{G}$ \\ \hline
    \end{tabular}
}
\newcommand{\KVGsimCR}{
\begin{tabular}{| l  l  l |}\hline
            & $1.$ & $\single{G} + \single{g}_i + \config{\RM_i} \rightarrow \single{r}_1 + \config{\PM_i}$ \\
            $\forall \reaction_i^2 \in \reactions^2$,   & $1b.$ & $\single{G} + \single{g}_{i} + \single{z}_k \rightarrow \single{G} + \single{z}_k$ \\
            $z_k \in \{\RM_i\}$: & $2.$ & $\single{r}_j + \single{g}_j \rightarrow \single{r}_{j+1} + \single{g}_j$ \\
            & $3.$ & $\single{r}_j + \single{z}_{g_j} \rightarrow \single{r}_{j+1} + \single{g}_j$ \\ \hline
            & $4.$ & $\single{r}_{|\reactions^2|+1} \rightarrow \single{G}$ \\ 
            & $5.$ & $\single{G} + \single{z}_{g_1} + \ldots + \single{z}_{g_{|\reactions^2|}} \rightarrow \single{s}$ \\ 
            & $6.$ & $\single{t} \rightarrow \single{G} + \single{g}_1 + \ldots + \single{g}_{|\reactions^2|}$ \\ \hline 
            $\forall \reaction_i^1 \in \reactions^1$: & $7.$ & $\single{s} + \config{\RM_i} \SeqProds \single{t} + \config{\PM_i}$ \\ \hline
        \end{tabular}
}
\newcommand{\CRiffVG}{
\begin{tabular}{| @{}l @{} l @{ } l | @{}l@{} |@{}l @{}l@{ } p{4cm} |}\cline{5-7}
        \multicolumn{3}{c}{} & &
        & $1.$ & $\single{G} + \single{g}_i + \config{\RM_i} \rightarrow \single{r}_1 + \config{\PM_i}$ \\ \cline{1-3}
        
        \multicolumn{3}{|c|}{\textbf{Fast Reactions} (Rank 2)} & &
        $\forall \reaction_i^2 \in \reactions^2$,   & $1b.$ & $\single{G} + \single{g}_{i} + \single{z}_k \rightarrow \single{G} + \single{z}_k$ \\
        
        $\forall$ $\reaction_i \in \reactions:$ & $1.$ & $\single{G} + \config{\RM_i} \rightarrow \single{x} + \config{e}_{\{\RM_i\}} + \config{\PM_i}$ & &
        $z_k \in \{\RM_i\}$: & $2.$ & $\single{r}_j + \single{g}_j \rightarrow \single{r}_{j+1} + \single{g}_j$
        \\ \cline{1-3}
        
        \multirow{2}{*}{$\forall \lambda_i \in \species_1:$} & $2.$ & $\single{x} + \single{e}_{\lambda_i} + \single{\lambda}_i \rightarrow \single{x} + \single{\lambda}_i$ & &
        & $3.$ & $\single{r}_j + \single{z}_{g_j} \rightarrow \single{r}_{j+1} + \single{g}_j$ \\ \cline{5-7}
        
        & $3.$ & $\single{y} + \single{e}_{\lambda_i} \rightarrow \single{y} + \single{z}_{\lambda_i}$ & &
        & $4.$ & $\single{r}_{|\reactions^2|+1} \rightarrow \single{G}$ \\ \cline{1-3}
        
        \multicolumn{3}{|c|}{\textbf{Slow Reactions} (Rank 1)} & &
        & $5.$ & $\single{G} + \single{z}_{g_1} + \ldots + \single{z}_{g_{|\reactions^2|}} \rightarrow \single{s}$ \\ 
        
        & $4.$ & $\single{x} \rightarrow \single{y}$ & &
         & $6.$ & $\single{t} \rightarrow \single{G} + \single{g}_1 + \ldots + \single{g}_{|\reactions^2|}$ \\ 
         
        & $5.$ & $\single{y} \rightarrow \single{G}$ & &
        $\forall \reaction_i^1 \in \reactions^1$: & $7.$ & $\single{s} + \config{\RM_i} \SeqProds \single{t} + \config{\PM_i}$ \\ \cline{1-3}\cline{5-7}
        \multicolumn{3}{l}{\textbf{(a)} Coarse-Rate CRN simulating $k$-VG CRN} & \multicolumn{1}{c}{}& \multicolumn{3}{l}{\textbf{(b)} $k$-VG CRN simulating Coarse-Rate CRN}\\
    \end{tabular}

}
\newcommand{\SCsimKVG}{
\begin{tabular}{| l @{} l @{} l |}\hline
        $\forall \reaction_i \in \reactions:$ & $1.$ & $\single{G} + \config{\RM_i} \rightarrow \config{e}_{\{\RM_i\}} + \config{\PM_i}$ \\ \hline
        \multirow{2}{*}{$\forall \lambda_i \in \species_1:$} & $2.$ & $\single{\lambda}_i + \single{e}_{\lambda_i} \rightarrow \single{\lambda}_i$ \\
        & $3.$ & $\single{y} + \single{e}_{\lambda_i} \rightarrow \single{y} + \single{z}_{\lambda_i}$ \\ \hline
        & $4.$ & $\single{y} + \single{y} \rightarrow \single{G}$ \\
        & $5.$ & $\single{G} + \single{y} \leftrightarrow \single{G} + \single{w}$ \\ 
        & $6.$ & $\single{w} \leftrightarrow x$ \\ \hline
        \end{tabular}
}
\newcommand{\KVGsimSC}{
        \begin{tabular}{|l @{}l @{}l|}
            \hline
            & $1.$ & $\single{G} + \single{g}_i \SeqReacts \single{r}_1 +  \config{\PM_i} $ \\
            \multirow{1}{*}{$\forall \reaction_i \in \reactions$, $z_k \in \{\RM_i\}$:}  
            & $2.$ & $\single{r}_j + \single{g}_j \rightarrow \single{r}_{j+1} + \single{g}_j$ \\
            & $3.$ & $\single{r}_j + \single{z}_{g_j} \rightarrow \single{r}_{j+1} + \single{g}_j$ \\ 
            \hline
            & $4.$ & $\single{r}_{|\reactions|+1} \rightarrow \single{G}$ \\
            & $5.$ & $\single{G} + \single{z}_{g_1} + \ldots + \single{z}_{g_{|\reactions|}} \rightarrow \single{s}$ \\
            & $6.$ & $\single{t} \rightarrow \single{G} + \single{g}_1 + \ldots + \single{g}_{|\reactions|}$ \\ 
            \hline
            \multirow{1}{*}{$\forall S_i \in S \setminus \{ S_{k-1} \}:$}  
            & $7.$ & $\single{s} + \single{s}_i \SeqProds \single{t} + \single{s}_{i+1} + \config{S}_i$ \\ 
            \hline
            & $8.$ & $\single{s} + \single{s}_{k-1} \SeqProds \single{t} + \single{s}_0 + \config{S}_{k-1}$ \\ 
            \hline
        \end{tabular}
}
\newcommand{\SCiffKVG}{
\begin{tabular}{| l @{}l @{}l | l | l @{} l @{} l |}\cline{5-7}
        \multicolumn{3}{l}{} & &
        & $1.$ & $\single{G} + \single{g}_i \SeqReacts \single{r}_1 +  \config{\PM_i} $ \\
        
        \multicolumn{3}{l}{} & &
        \multirow{1}{*}{$\forall \reaction_i \in \reactions$, $z_k \in \{\RM_i\}$:}  
        & $2.$ & $\single{r}_j + \single{g}_j \rightarrow \single{r}_{j+1} + \single{g}_j$ \\ \cline{1-3}

        $\forall \reaction_i \in \reactions:$ & $1.$ & $\single{G} + \config{\RM_i} \rightarrow \config{e}_{\{\RM_i\}} + \config{\PM_i}$ & &
        & $3.$ & $\single{r}_j + \single{z}_{g_j} \rightarrow \single{r}_{j+1} + \single{g}_j$ \\ \cline{1-3}\cline{5-7}
        
        \multirow{2}{*}{$\forall \lambda_i \in \species_1:$} & $2.$ & $\single{\lambda}_i + \single{e}_{\lambda_i} \rightarrow \single{\lambda}_i$  & &
         & $4.$ & $\single{r}_{|\reactions|+1} \rightarrow \single{G}$ \\

        & $3.$ & $\single{y} + \single{e}_{\lambda_i} \rightarrow \single{y} + \single{z}_{\lambda_i}$ & &
        & $5.$ & $\single{G} + \single{z}_{g_1} + \ldots + \single{z}_{g_{|\reactions|}} \rightarrow \single{s}$ \\ \cline{1-3}

        & $4.$ & $\single{y} + \single{y} \rightarrow \single{G}$ & &
        & $6.$ & $\single{t} \rightarrow \single{G} + \single{g}_1 + \ldots + \single{g}_{|\reactions|}$ \\ \cline{5-7}
        
        & $5.$ & $\single{G} + \single{y} \leftrightarrow \single{G} + \single{w}$ & &
        \multirow{1}{*}{$\forall S_i \in S \setminus \{ S_{k-1} \}:$} 
        & $7.$ & $\single{s} + \single{s}_i \SeqProds \single{t} + \single{s}_{i+1} + \config{S}_i$ \\ \cline{5-7}

        & $6.$ & $\single{w} \leftrightarrow x$ & &
         & $8.$ & $\single{s} + \single{s}_{k-1} \SeqProds \single{t} + \single{s}_0 + \config{S}_{k-1}$ \\ \cline{1-3}\cline{5-7}
         \multicolumn{3}{l}{\textbf{(a)} Step-Cycle CRN sim. $k$-VG CRN} & \multicolumn{1}{c}{}& \multicolumn{3}{l}{\textbf{(b)} $k$-VG CRN simulating a Step-Cycle CRN}\\
    \end{tabular}
}
\newcommand{\CheckReactants}{
\begin{tabular}{| l  l |}\hline
        & $\single{G} + \single{g}_i \rightarrow \single{R}_1^{1}$ + $\single{z}_{g_i}$ \\ \hline
        \multirow{2}{*}{$\forall \lambda_j \in \RM_i:$}
        & $\single{R}_i^k + \single{\lambda}_j \rightarrow \single{R}_i^{k+1} + \single{\lambda}_j'$ \\ 
        & $\single{R}_{i}^{|\RM_i|+1} + \single{\lambda}_1' + \ldots + \single{\lambda}_{|\RM_i|}' \rightarrow \single{r}_1 + \config{\PM}_i$ \\ \hline
    \end{tabular}
}
\newcommand{\UndoReactants}{
\begin{tabular}{| l  l |}\hline
        & $\single{R}_i^k + \single{z}_{\lambda_j} \rightarrow \single{R}_i^{k^-} + \single{z}_{\lambda_j}$ \\ 
        \multirow{2}{*}{$\forall \lambda_j \in \RM_i:$} 
        & $\single{R}_i^{k^-} + \single{z}_{\lambda_j} \rightarrow \single{R}_i^{k-1^-} + \single{\lambda}_j$ \\
        & $\single{R}_i^{k^-} + \single{\lambda}_j' \rightarrow \single{R}_i^{k^-} + \single{\lambda}_j$ \\ 
        & $\single{R}_1^{1^-} \rightarrow \single{G}$ \\ \hline
    \end{tabular}
}
\newcommand{\SCReactants}{
\begin{tabular}{| l  l | l |l l|}\cline{4-5}
        \multicolumn{2}{c}{} & &
        & $\single{R}_i^k + \single{z}_{\lambda_j} \rightarrow \single{R}_i^{k^-} + \single{z}_{\lambda_j}$ \\ \cline{1-2}

        & $\single{G} + \single{g}_i \rightarrow \single{R}_1^{1}$ + $\single{z}_{g_i}$ & &
        \multirow{2}{*}{$\forall \lambda_j \in \RM_i:$} 
        & $\single{R}_i^{k^-} + \single{z}_{\lambda_j} \rightarrow \single{R}_i^{k-1^-} + \single{\lambda}_j$ \\ \cline{1-2}
        \multirow{2}{*}{$\forall \lambda_j \in \RM_i:$}
        & $\single{R}_i^k + \single{\lambda}_j \rightarrow \single{R}_i^{k+1} + \single{\lambda}_j'$ & &
        & $\single{R}_i^{k^-} + \single{\lambda}_j' \rightarrow \single{R}_i^{k^-} + \single{\lambda}_j$ \\ 
        & $\single{R}_{i}^{|\RM_i|+1} + \config{\RM_i}' \SeqProds \single{r}_1 + \config{\PM}_i$ & &
        & $\single{R}_1^{1^-} \rightarrow \single{G}$ \\ \cline{1-2}\cline{4-5}
        \multicolumn{2}{l}{\textbf{(a)} Check reactants} & \multicolumn{1}{c}{}& \multicolumn{2}{l}{\textbf{(b)} Undo reactants}\\
    \end{tabular}
}
\newcommand{\UIsimsVG}{
    \begin{tabular}{|l @{}l l|} \hline
        $\forall \reaction_i \in \reactions:$ & $1.$ & $\single{G} + \config{\RM_i} \rightarrow \single{G_i} + \config{e}_{\{\RM_i\}} + \config{\PM_i}$ \\ \hline
        {$\forall \lambda_j \in \species_1:$} & $2.$ & $\single{e}_{\lambda_j} \rightarrow \single{r}^1_j + \single{t}^1_j$ \\ \cline{2-3}
         & $3.$ & $\single{r}^1_j + \single{\lambda}_i \rightarrow \single{r}^2_j$ \\
         &      & $\single{t}^1_j \rightarrow \single{t}_2$ \\ \cline{2-3}
         & $4.$ & $\single{t}_2 + \single{r}^1_j \rightarrow \single{z} + \single{E}$ \\ 
         &      & $\single{t}_2 + \single{r}^2_j \rightarrow \single{\lambda}_i + \single{E}$ \\ \hline
         & $5.$ & $\single{G_i} + |\{\RM_i\}|\cdot \single{E} \rightarrow  \single{G}$ \\ \hline
    \end{tabular}
}
\newcommand{\KVGsimsUI}{
        \begin{tabular}{|l l l|}\hline
            $\forall \gamma_i \in \reactions,$   & $1.$ & $\single{G}_i + \single{z}_{r_j} \rightarrow \single{N}_i + \single{X}_i + \single{z}_{r_j}$  (can not run)  \\ 
            $\forall r_j \in \RM_i:$ && \\ \hline
            seq. & $2.$ & $\single{G}_i + \single{r}_1 \rightarrow \single{R}_i^2$\\
            reacts. & $3.$ & $\single{R}_i^j + \single{r}_j \rightarrow \single{R}_i^{j+1}$ \\
             & $4.$ & $\single{R}_i^j + \single{z}_{r_j} 
             \rightarrow \single{N}_i + \single{X}_i + \single{r}_1 + \ldots + \single{r}_{j-1} + \single{z}_{r_j}$ \\
            $j = |\RM_i|$  & $5.$ & $\single{R}_i^j + \single{r}_{j} \rightarrow \single{I}_i + \single{X}_i$ (rule selected) \\ \hline
             & $6.$ & $X_1 + \ldots + X_{|\reactions|} \rightarrow \single{F}$ (rules selected)\\ \hline
            $\forall \gamma_i \in \reactions$   & $7.$ & $\single{I}_i + \single{F} \rightarrow \single{F}_i + \single{F} + \config{\PM}_i$ (output)\\
            & $8.$ & $\single{N}_i + \single{F} \rightarrow \single{F}_i + \single{F}$ (not run)\\ \hline
            reset & $9.$ & $\single{F} + \single{F}_1 + \ldots + \single{F}_{|\Gamma|} \rightarrow \single{G}_i + \ldots + \single{G}_{|\reactions|}$\\ \hline
        \end{tabular}
   }
\newcommand{\UIiffVG}{
    \begin{tabular}{|@{}l@{}l@{}l@{}|@{}l@{}|l@{}l@{}|} \cline{5-6}
        \multicolumn{3}{c}{}& & 
        $\forall \gamma_i \in \reactions,$  & $ \single{G}_i + \single{z}_{r_j} \rightarrow \single{N}_i + \single{X}_i + \single{z}_{r_j}$  (can not run)  \\ 
        \multicolumn{3}{c}{}& &
         $\forall r_j \in \RM_i:$  &   \\ \cline{5-6}
        
        \multicolumn{3}{c}{}& & 
        seq. & $\single{G}_i + \single{r}_1 \rightarrow \single{R}_i^2$\\ \cline{1-3}
        $\forall \reaction_i \in \reactions:$ & $1.$ & $\single{G} + \config{\RM_i} \rightarrow \single{G_i} + \config{e_{\{\RM_i\}}}+ \config{\PM_i} $ 
        & &
        reacts. & $\single{R}_i^j + \single{r}_j \rightarrow \single{R}_i^{j+1}$ \\ \cline{1-3}
        {$\forall \lambda_j \in \species_1:$} & $2.$ & $\single{e}_{\lambda_j} \rightarrow \single{r}^1_j + \single{t}^1_j$  & &
        & $\single{R}_i^j + \single{z}_{r_j} \rightarrow \single{N}_i + \single{X}_i + \single{r}_1 + \ldots + \single{r}_{j-1} + \single{z}_{r_j}$ \\ \cline{2-3}
        & $3.$ & $\single{r}^1_j + \single{\lambda}_i \rightarrow \single{r}^2_j$ & &
        $j = |\RM_i|$  & $\single{R}_i^j + \single{r}_{j} \rightarrow \single{I}_i + \single{X}_i$ (rule selected) \\ \cline{5-6}
        &  & $\single{t}^1_j \rightarrow \single{t}_2$ & &
        & $X_1 + \ldots + X_{|\reactions|} \rightarrow \single{F}$ (rules selected) \\ \cline{2-3}\cline{5-6}
        & $4.$ & $\single{t}_2 + \single{r}^2_j \rightarrow \single{\lambda}_i + \single{E}$ & &
        $\forall \gamma_i \in \reactions$   & $\single{I}_i + \single{F} \rightarrow \single{F}_i + \single{F} + \config{\PM}_i$ (output)\\
        &      & $\single{t}_2 + \single{r}^1_j \rightarrow \single{z} + \single{E}$ & &
        & $\single{N}_i + \single{F} \rightarrow \single{F}_i + \single{F}$ (not run)\\ \cline{1-3}\cline{5-6}
        & $5.$ & $\single{G_i} + |\{\RM_i\}|\cdot \single{E} \rightarrow  \single{G}$ & &
        reset & $\single{F} + \single{F}_1 + \ldots + \single{F}_{|\Gamma|} \rightarrow \single{G}_i + \ldots + \single{G}_{|\reactions|}$\\ \cline{1-3}\cline{5-6}
        \multicolumn{3}{l}{\textbf{(a)} UI parallel CRN simulating a VG CRN} & \multicolumn{1}{c}{}& \multicolumn{2}{l}{\textbf{(b)} $k$-VG simulating UI parallel CRN}\\
    \end{tabular}
   }
\title{Polynomial Equivalence of Extended Chemical Reaction Models}
\author{Divya Bajaj}{University of Texas Rio Grande Valley}{divya.bajaj@utrgv.edu}{}{}
\author{Jose-Luis Castellanos}{University of Texas Rio Grande Valley}{joseluis.castellanos01@utrgv.edu}{}{}
\author{Ryan Knobel}{University of Texas Rio Grande Valley}{ryan.knobel01@utrgv.edu}{}{}
\author{Austin Luchsinger}{University of Texas Rio Grande Valley}{austin.luchsinger@utrgv.edu}{}{}
\author{Aiden Massie}{University of Texas Rio Grande Valley}{aiden.massie01@utrgv.edu}{}{}
\author{Adrian Salinas}{University of Texas Rio Grande Valley}{adrian.salinas08@utrgv.edu}{}{}
\author{Pablo Santos}{University of Texas Rio Grande Valley}{pablo.santos01@utrgv.edu}{}{}
\author{Ramiro Santos}{University of Texas Rio Grande Valley}{ramiro.santos01@utrgv.edu}{}{}
\author{Robert Schweller}{University of Texas Rio Grande Valley}{robert.schweller@utrgv.edu}{}{}
\author{Tim Wylie}{University of Texas Rio Grande Valley}{timothy.wylie@utrgv.edu}{}{}
\authorrunning{D.~Bajaj et al.}
\keywords{Chemical Reaction Networks, Simulations, Petri-nets, Vector Addition Systems, Bi-simulation, Turing-universality, Inhibitors.}
\begin{document}
\maketitle
\begin{abstract}
The ability to detect whether a species (or dimension) is zero in Chemical Reaction Networks (CRN), Vector Addition Systems, or Petri Nets is known to increase the power of these models--- making them capable of universal computation.
While this ability may appear in many forms, such as extending the models to allow transitions to be inhibited, prioritized, or synchronized, we present an extension that directly performs this zero checking.
We introduce a new \emph{void genesis} CRN variant with a simple design that merely increments the count of a specific species when any other species' count goes to zero. As with previous extensions, we show that the model is Turing Universal. We then analyze several other studied CRN variants and show that they are all equivalent through a \textit{polynomial} simulation with the void genesis model, which does not merely follow from Turing-universality. Thus, inhibitor species, reactions that occur at different rates, being allowed to run reactions in parallel, or even being allowed to continually add more volume to the CRN, does not add additional \textit{simulation power} beyond simply detecting if a species count becomes zero.

\end{abstract}

\newpage

\section{Introduction}\label{sec:intro}

\para{Background}
Chemical Reaction Networks \cite{Aris:1965:ARMA}, Vector Addition Systems \cite{karp1967parallel}, and Petri nets \cite{petri1966communication} are three formalisms that arose in different disciplines to study distributed/concurrent systems.
Despite their distinct origins, these models are mathematically equivalent in expressive power \cite{cook2009programmability, hack1976decidability}.
While these models are capable of very complex behavior --- e.g., deciding if one configuration is reachable from another via a sequence of transitions was recently shown to be Ackermann-complete \cite{czerwinski2022reachability,leroux2022reachability} --- they fall short of Turing-universality.

Even though these base models are not capable of universal computation, they are each right on the cusp of doing so.
It is well known that extending the models in any way that allows ``checking for zero'' immediately results in Turing-universality \cite{agerwala1974complete,hack1976petri,peterson1981petri}.
This has been shown for model extensions allowing transition inhibition \cite{agerwala1974complete, calabrese2024inhibitory, esparza2024decidability, hack1976petri}, transition prioritization \cite{hack1976petri, senum2011rate, shea2010writing}, synchronous (parallel) transitions \cite{berthomieu2024sleptsov,thomas1976petri}, or even continual volume increase \cite{anderson2024steps, anderson2024computing}.
These results are typically established by showing how each extended model can simulate a register (counter) machine \cite{peterson1981petri}.

However, Turing equivalence alone is a blunt instrument.
It tells us that models can emulate one another, but not how efficiently this can be done.
We often care not just that two models are Turing-complete, but if they can be easily transformed into one another.
For instance, Hack gave one such transformation between Inhibitory and Priority Petri nets \cite{hack1976petri}, but discussed their equivalence in terms of languages.
For this paper, we focus on the concept of simulation to draw comparisons between models.

In the literature, several notions of simulation have been proposed to capture structural or behavioral equivalence between systems.
While concepts like strong/weak bisimulation \cite{antoniotti2004taming, dong2012bisimulation,fernandez1990implementation,johnson2019verifying,milner1989communication} and pathway decomposition \cite{shin2012compiling,shin2019verifying} have been used to compare the expressiveness of different systems, there is typically a tradeoff between reasonably preserving dynamics and maintaining structural correspondence.
Furthermore, efficiency of simulation is often either implicitly included or sometimes omitted altogether.
Part of our work aims to introduce a more wieldy definition of simulation that explicitly accounts for efficiency.
We give a more detailed discussion on simulation later in the paper.

\para{Our Contributions}
In this work, we make two main contributions:
\begin{enumerate}
    \item We define a general-purpose notion of polynomial efficient simulation that is intended to capture a broader notion of simulation while still remaining reasonable.  Our definition ensures that the simulation respects a polynomial correspondence in both time (dynamics) and space (structure).  This allows us to formally compare the simulation efficiency of different CRN variants.
    \item We introduce a new model, which we call Void Genesis (VG), that makes zero-testing explicit: it creates a designated species whenever a tracked species reaches zero. We use this model as a unifying model that captures the essence of zero-testing in a clean and modular way. We show that the Void Genesis model is polynomially equivalent to other known Turing-universal extensions of CRNs.
\end{enumerate}

Since all of these extended models can be simulated by VG (and vice versa) with only polynomial overhead, our results establish a hub of polynomial simulation equivalence among them.
This yields a stronger and more precise understanding of the relationships between these CRN variants and highlights VG simulation as a useful proof technique for adding models to this hub.
Figure \ref{fig:vg_hub} summarizes the simulation relationships we establish here.

\para{Organization}
Given the number of models and results, we have arranged the paper in a way that systematically builds up understanding and techniques. Section \ref{sec:definitions} covers the formal definitions of the models and a simulation of one model with another. Section \ref{sec:zeros} provides minimum working examples in the models. The polynomial equivalence between models is proven through a series of simulations in Section \ref{sec:equivalence}. In Section \ref{sec:conc}, we discuss our results and some open problems.

\begin{figure}[t]
    \centering
    \includegraphics[width=0.6\linewidth]{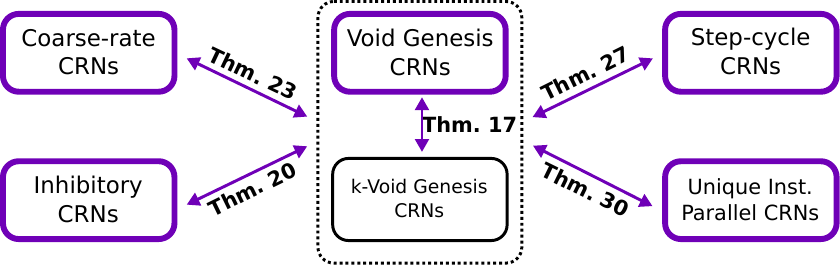}
    \caption{CRN Model variants and their connections to Void Genesis CRNs. Theorems 1-5 each show a polynomial equivalence between two CRN variants. With the Void Genesis model as a central hub, the implication of these results is a polynomial equivalence between all models.}
    \label{fig:vg_hub}
\end{figure}
\section{Preliminaries}\label{sec:definitions}
In Section~\ref{subsec:models}, we define the extended chemical reaction network models considered in this paper with examples, and in Section~\ref{subsec:sim} we define the concept of inter-model simulation.

\subsection{Models}\label{subsec:models}
Here, we define the six chemical reaction networks considered in this paper: the basic chemical reaction network model (Section~\ref{subsec:crn}), the Void Genesis model (Section~\ref{subsec:vg}), the Inhibitory CRN model (Section~\ref{subsec:inhib}), the Coarse-Rate CRN model (Section~\ref{subsec:coarse}), the Step-Cycle CRN model (Section~\ref{subsec:step-cycle}), and the Unique Instruction Parallel model (Section~\ref{subsec:parallel}).

\subsubsection{Chemical Reaction Networks}\label{subsec:crn}

A \emph{chemical reaction network} (CRN) $\crn{C} = (\species, \reactions)$
is defined by a finite set of species $\species$
, and a finite set of reactions $\reactions$ where each reaction is a pair $( \reactants,\products) \in \N^\species \times \N^\species$, sometimes written $\config{R} \rxn \config{P}$, that denotes the \textit{reactant} species consumed by the reaction and the \textit{product} species generated by the reaction.
For example, given $\species = \{a,b,c\}$, the reaction $( (2,0,0), (0,1,1) )$ represents $2a \rxn b + c$; 2 $a$ species are removed, and 1 new $b$ and $c$ species are created.   

A \textit{configuration} $\config{C}\in \N^\species$ of a CRN assigns integer counts to every species $\lambda \in \species$, and we use notation $\config{C} \, [\lambda]$ to denote that count.
For a species $\lambda \in \species$, we denote the configuration consisting of a single copy of $\lambda$ and no other species as $\single{\lambda}$.  
It is often useful to reference the set of species whose counts are not zero in a given configuration.
In such cases, the notation $\Set{\config{C}}$ is used.
Formally, $\Set{\config{C}} = \{\lambda \in \species \mid \config{C}[\lambda] > 0\}$, and when convenient and clear from the context, we further use $\{\config{C}\}$ to denote the configuration (vector) representation in which each element has a single copy.  Finally, let $|\config{C}| = \sum_{\lambda\in \species} C[\lambda]$ denote the total number of copies of all species in a configuration, sometimes referred to as the \emph{volume} of $\config{C}$.

A reaction $(\reactants,\products)$ is said to be \textit{applicable} in configuration $\config{C}$ if $\reactants \leq \config{C}$; in other words, a reaction is applicable if $\config{C}$ has at least as many copies of each species as $\reactants$. If the reaction $(\reactants,\products)$ is applicable, it results in configuration $\config{C'} = \config{C} - \reactants + \products$ if it occurs, and we write $\config{C} \rightarrow_{crn}^{(\species,\reactions)} \config{C'}$, or simply $\config{C} \stepsto \config{C'}$ when the model and CRN are clear from context. We use the same notation as configuration vectors to denote the size ($\size{\reactants}$ and $\size{\products}$) and explicit content ($\{\reactants\}$ and $\{\products\}$) of reactants and products for a reaction.

\begin{definition}[Discrete Chemical Reaction Network]  A discrete chemical reaction network (CRN) is an ordered pair $(\species, \reactions)$ where $\species$ is an ordered alphabet of species, and $\reactions$ is a set of rules over $\species$.
\vspace{-.1cm}
\end{definition}
\begin{definition}[Basic CRN Dynamics]  For a CRN $(\species, \reactions)$ and configurations $\config{A}$ and $\config{B}$, we say that $\config{A} \rightarrow_{crn}^{(\species,\reactions)} \config{B}$ if there exists a rule $(\reactants,\products)\in \reactions$ such that $\reactants \leq \config{A}$, and $\config{A} - \reactants + \products = \config{B}$.
\end{definition}

If there exists a finite sequence of configurations such that $\config{C} \stepsto \config{C}_1 \stepsto \dots \stepsto \config{C}_n \stepsto \config{D}$, then we say that $\config{D}$ is \textit{reachable} from $\config{C}$ and we write $\config{C} \reaches \config{D}$.
A configuration is said to be \textit{terminal} if no reactions are applicable. We also define an \emph{initial configuration} for a CRN as its starting configuration. A \emph{CRN System} $T$ is then defined as a pair of a CRN model and its initial configuration. The following sections define extensions of the basic CRN model by way of defining modified dynamics.

\subsubsection{Void Genesis CRNs}\label{subsec:vg}

A \emph{Void Genesis CRN} $\crn{C_{VG}} = ((\species, \reactions), z)$ is a basic CRN with a \emph{zero} species $z \in \species$ whose count is incremented whenever the count of any species other than $z$ goes to zero. See Figure \ref{fig:void_genesis_example} for an example.

\begin{definition}[Void-Genesis Dynamics]
    For a Void-Genesis CRN $((\species, \reactions), z\in \species)$ and configurations $\config{A}$, $\config{B_t}$ and $\config{B}$, we say that $\config{A} \rightarrow_{\crn{C_{VG}}}^{(\species, \reactions)} \config{B} $ if there exists a rule $(\reactants,\products) \in \reactions$ such that $\reactants \leq \config{A}$, $\config{A}-\reactants+\products=\config{B_t}$, and $\config{B} = \config{B_t} + n\cdot \single{z}$ where n is the cardinality of \{$\lambda \in \species\setminus\{z\}$ $|$ $\config{A}[\lambda] \not= 0$ and $\config{B_t}[\lambda]=0$\}.
\end{definition}

It is straightforward to show that the Void Genesis model is Turing-universal via simulating a register machine, and we do so in Section \ref{sub:undecide}.

\subsubsection{Inhibitory CRNs}\label{subsec:inhib}

A reaction $\reaction$ is said to be \textit{inhibited} by a species $\lambda$ when the reaction $\reaction$ may only be applied if $\lambda$ is absent in the system. We define an inhibitor mapping $\mathcal{I} : \reactions \rightarrow \mathbb{P}(\species)$ that maps a reaction to a subset of species that inhibit the reaction. An \emph{Inhibitory CRN} $\crn{C_{IC}} = ((\species, \reactions), \mathcal{I})$ as defined by \cite{calabrese2024inhibitory} is then a basic CRN along with the mapping $\mathcal{I}$. See Figure \ref{fig:inhibitory_example} for an example.

\begin{definition}[Inhibitory Dynamics]\label{def:inhibitory-dynamics}
    For a Inhibitory CRN $((\species, \reactions), \mathcal{I})$ and configurations $\config{A}$ and $\config{B}$, we say that $\config{A} \rightarrow_{\crn{C_{IC}}}^{(\species, \reactions)} \config{B}$ if there exists a rule $\reaction = (\reactants, \products) \in \reactions$ such that $\reactants \leq \config{A}$, $\config{A} -\reactants+\products=\config{B}$, and $A[\lambda] = 0, \forall \lambda \in \mathcal{I}(\reaction)$.
\end{definition}

\subsubsection{Coarse-Rate CRNs} \label{subsec:coarse}

A \emph{Coarse-Rate CRN} $\crn{C}_{CR} = ((\species, \reactions), rank)$ as introduced by \cite{shea2010writing} is a basic CRN along with a function $rank : \reactions \rightarrow \N$. We define a set of reactions $\reactions^l$ as the set of all reactions $\reaction$ where $rank(\reaction) = l$. The set of reactions $\reactions$ is then defined as an ordered partition set given by $\{\reactions^1, \reactions^2, \ldots, \reactions^n\}$. Any applicable reaction $\reaction_i^\ell$ may only be applied if no reaction $\reaction_j^k \in \reactions^k, \forall k \in [\ell+1, n]$ is applicable. We use $(\reactants, \products)^\ell$ to denote a reaction $\reaction^\ell \in \reactions^\ell$. In the context of this paper we focus on models with rank at most $2$. For clarity, we will refer to reactions with rank $2$ as \emph{fast} reactions, and the ones with rank $1$ as \emph{slow} reactions. See Figure \ref{fig:priority_example} for an example.

\begin{definition}[Coarse-Rate Dynamics]
For a Coarse-Rate CRN $((\species, \reactions), rank)$ and configurations $\config{A}$ and $\config{B}$, we say that $\config{A} \rightarrow_{\crn{C_{CR}}}^{(\species,\reactions)} \config{B}$ if there exists a rule $(\reactants, \products)^\ell \in \reactions$ such that $\reactants \leq \config{A}$, $\config{A}-\reactants+\products=\config{B}$, and $\not\exists$ an applicable reaction $\reaction^{k > \ell}$.
\end{definition}

\subsubsection{Step-Cycle CRNs}\label{subsec:step-cycle}
A \emph{step-cycle CRN} is a step CRN \cite{anderson2024steps} that infinitely repeats steps 0 through $k-1$: that is, once $\config{S}_{k-1}$ is added to the terminal configuration in the ${(k-1)}^{th}$ step, the resulting configuration is treated as the new initial configuration for the step CRN. More formally, a \emph{step-cycle CRN} of $k$ steps is an ordered pair $( (\species, \reactions), (\config{S}_0, \config{S}_1, \ldots, \config{S}_{k-1}))$, where the first element is a normal CRN $(\species, \reactions)$ and the second is a sequence of length-$|\species|$ vectors of non-negative integers denoting how many copies of each species type to add after each step. We define a \textit{step-configuration} $\config{C}_i$ for a step-cycle CRN as a valid configuration $\config{C}$ over $(\species, \reactions)$ along with an integer $i \in \{0,\ldots, k-1\}$ that denotes the configuration's step.

\begin{definition}[Step-Cycle Dynamics]
For a step-cycle CRN $((\species, \reactions), (\config{S}_0, \config{S}_1, \ldots, \config{S}_{k-1}))$ and step-configurations $\config{A}_i$ and $\config{B}_j$, we say that $\config{A}_i \rightarrow_{\crn{C_{SC}}}^{(\species,\reactions)} \config{B}_j$ if either
    \begin{enumerate}
        \item $i=j$, there exists a rule $(\reactants, \products) \in \Gamma$ s.t. $\reactants \leq \config{A}_i$, and $\config{A}_i-\reactants+\products=\config{B}_j$, or
        \item $(i+1) \mod k=j$, $\config{A}_i$ is terminal, and $\config{A}_i + \config{S}_i=\config{B}_j$.
    \end{enumerate}
\end{definition}

\subsubsection{Unique Instruction Parallel CRNs}\label{subsec:parallel}
The Unique Instruction Parallel model modifies the dynamics of a normal CRN $(\species, \reactions)$ by applying a maximal set of compatible rules as a single transition.  In this paper, we restrict this maximal set to contain only one application of any given rule, leaving the study of more relaxed parallel models for future work.

\begin{definition}[Plausibly Parallel Rules] A multiset of $n$ (not necessarily distinct) rules $\{(\reactants_1, \products_1),\ldots, (\reactants_n,\products_n)\}$ are plausibly parallel for a configuration $\config{C}$ over $\species$ if the vector $\reactants=\sum_{i=1}^{n} \reactants_i$ is such that $\reactants \leq \config{C}$.
\end{definition}

\begin{definition}[Unique-Instruction Plausibly Parallel] A plausibly parallel multiset is said to be Unique-Instruction if it contains at most one copy of any given rule (i.e., it is a set).  It is considered unique-instruction maximal if it is not a proper subset of any other unique-instruction plausibly parallel set.
\end{definition}

\begin{definition}[Unique-Instruction Parallel Dynamics]\label{def:unique-dynamic}
For a CRN $(\species, \reactions)$ and configurations $\config{A}$ and $\config{B}$, we say that $\config{A} \rightarrow_{\crn{C_{UI}}}^{(\species,\reactions)} \config{B}$ if there exists a unique-instruction maximal plausibly parallel set $\{ (\reactants_1, \products_1), \ldots , (\reactants_k, \products_k)\}$ for configuration $\config{A}$ and rule set $\reactions$ such that $\config{B}= \config{A}-\sum_{i=1}^{k} \reactants_i + \sum_{i=1}^{k} \products_i$.
\end{definition}

\begin{figure}[t]
\vspace{-.2cm}
    \centering
    \begin{subfigure}{1\textwidth}
        \centering
        \includegraphics[width=0.85\textwidth]{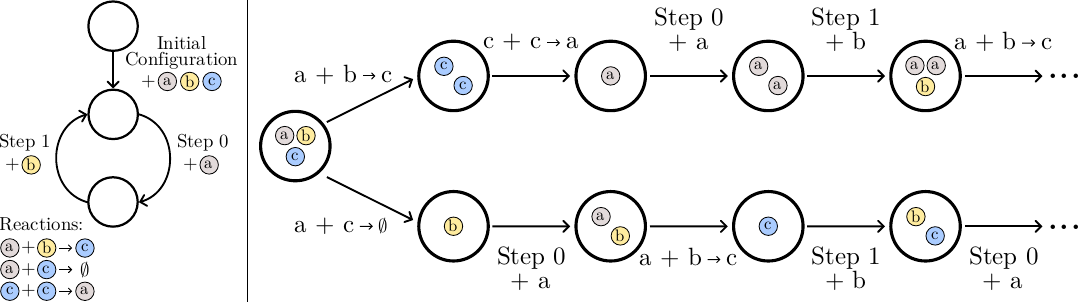}
        \subcaption{Step-Cycle Example}
        \label{fig:step_cycle_example}
    \end{subfigure}
    \begin{subfigure}{.47\textwidth}
        \centering
        \includegraphics[width=.8\textwidth]{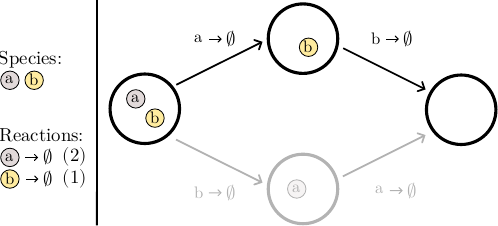}
        \subcaption{Coarse-Rate Example}
        \label{fig:priority_example}
    \end{subfigure}
    \begin{subfigure}{.47\textwidth}
        \centering
        \includegraphics[width=.8\textwidth]{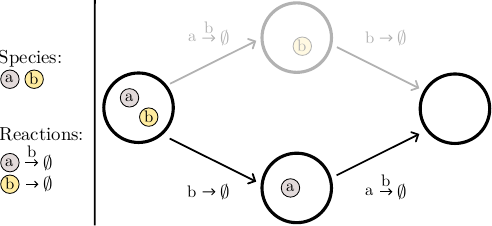}
        \subcaption{iCRN Example}
        \label{fig:inhibitory_example}
    \end{subfigure}
    \begin{subfigure}{.47\textwidth}
        \centering
        \includegraphics[width=.9\textwidth]{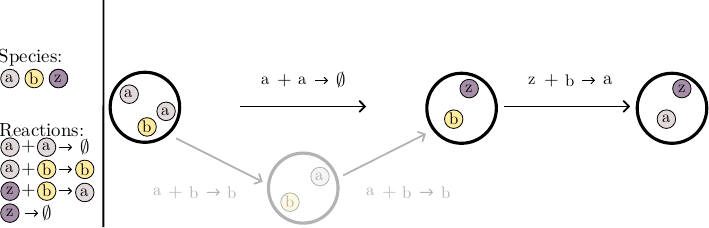}
        \subcaption{Void Genesis Example}
        \label{fig:void_genesis_example}
    \end{subfigure}
    \begin{subfigure}{.47\textwidth}
        \centering
        \includegraphics[width=.8\textwidth]{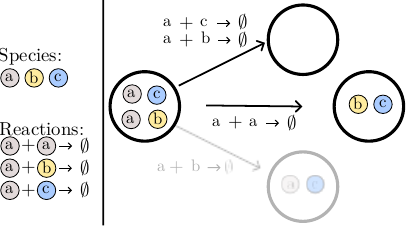}
        \subcaption{Unique Instruction Parallel Example}
        \label{fig:UI-example}
    \end{subfigure}
    \vspace{-.2cm}
    \caption{Example systems for the 5 CRN models. The blurred portions of the figure represent invalid reaction sequences. (a) Step-cycle with $2$ steps. Species are added when configurations reach a terminal state. (b) Coarse-rate. The numbers next to each reaction denote the rank. The bottom reaction sequence is invalid as the reaction with rank $2$ must occur first. (c) Inhibitory. The top reaction sequence is invalid as there exists a $b$ in the initial configuration, which is an inhibitor for the rule $a \rightarrow \emptyset$. (d) Void Genesis. Once the $a$ species reaches a count of 0, the $z$ species is created. In the bottom reaction sequence, the $z$ species reacts with the $b$ species, and another $z$ species is created. (e) Unique-Instruction Parallel. The bottom reaction sequence is invalid as the rule $a + b \rightarrow \emptyset$ is not a maximal set.}
    \label{fig:crn_models}
    \vspace{-.2cm}
\end{figure}

\subsection{Simulation}\label{subsec:sim}
By way of Petri nets, discrete CRNs have seen various model extensions.
To meaningfully compare the computational capabilities of these variants, we turn to the notion of simulation, which serves as a tool to compare the relative expressive power of each model.  
However, existing definitions in the literature vary in scope and applicability.
Some emphasize strict structural correspondence while others focus purely on dynamic behavior.  
Thus, it is worthwhile to discuss why we formulate our own definition of simulation and equivalence.

Borrowed from classical process theory, (strong) \emph{bisimulation} \cite{milner1989communication,fernandez1990implementation,antoniotti2004taming} is perhaps the most stringent form of equivalence.
It requires that for every state and transition in one system, there exists a matching state and transition in the other, and vice versa.
This strong bidirectional constraint means bisimulation ensures both behavioral and structural fidelity, and notably also implicitly guarantees efficiency.

Weak bisimulation \cite{dong2012bisimulation,johnson2019verifying}, on the other hand, relaxes the strict step-by-step matching of bisimulation.
Instead, a transition in one system may correspond to a \textit{macrotransition} in the other: a sequence of transitions possibly allowing ``hidden'' or ``silent'' intermediate steps.
This makes bisimulation more flexible and applicable to realistic implementations, but it is not without its own limitations.
Pathway decomposition, as presented in \cite{shin2012compiling,shin2019verifying}, takes a different approach.
Rather than comparing reactions directly, they identify a CRN's \emph{formal basis} --- a set of ``indivisible'' formal pathways that collectively define its behavior.
This allows pathway decomposition to capture phenomena such as \emph{delayed choice}, in which nondeterministic behavior is distributed across multiple steps.

Our framework seeks to strike a middle ground.
We define simulation in terms of a configuration map and macrotransitions, retaining the core idea of weak bisimulation but allowing for a more general structural correspondence between systems.
Since this correspondence is so general, we explicitly include a measure of efficiency with our definition.
We define polynomial efficient simulation that permits abstraction and internal nondeterminism, like weak bisimulation and pathway decomposition, but captures both dynamic behavior and bounded structural transformation.
We say two systems are polynomially equivalent if they can each simulate each other via our definition.
This is analogous to bisimilarity \cite{milner1989communication}, but in the context of efficient, weak simulation.

\para{Simulation Definition}
To define the concept of one CRN system $T'$ simulating another CRN system $T$ we introduce \emph{configuration maps} and \emph{representative configurations}.  A configuration map is a polynomial-time computable function $\cmap: \textrm{configs}_{T'} \rightarrow \textrm{configs}_{T} \bigcup \{\undefConfig\}$ mapping at least one element in $\textrm{configs}_{T'}$ to each element in $\textrm{configs}_{T}$, and the \emph{representative} configurations for a configuration $\config{C}\in \textrm{configs}_{T}$ are $[\![\config{C}\,]\!] = \{ \config{C'} \mid \config{C}=\cmap(\config{C'})\}$, also computable in polynomial time.\footnote{We write $M(\config{C'}) = \undefConfig$ to mean that the mapping is undefined, which is not the same as $M(\config{C'}) = \config{0}$.} Finally, we define the concept of \emph{single-step transition}, $\config{A} \rightarrow_{T} \config{B}$, to mean that $\config{A}$ transitions to $\config{B}$ in system $T$. And, the concept of \emph{macro transitionable}, $\config{A'} \macrotrans_{T'} \config{B'}$, to mean that $\config{B'}$ is reachable from $\config{A'}$ in system $T'$ through a sequence of $k$ intermediate configurations $\langle \config{A'}, \config{X_1'},\ldots, \config{X_k'}, \config{B'}\rangle$ 
such that $M(\config{A'}) \neq \undefConfig$, $M(\config{B'}) \neq \undefConfig$, and each $M(\config{X_i'}) \in \{ M(\config{A'}), \undefConfig\}$. 
Note that $k$ can be zero, resulting in the sequence $\langle \config{A'}, \config{B'} \rangle$.

\para{Intuition}  Each representative configuration set $[\![\config{C}\,]\!]$ is a particular collection (of at least 1 configuration) that adheres to the strictest modeling of system $T$ within system $T'$. That is, any $\config{C'} \in [\![\config{C}\,]\!]$ must be able to grow into anything that $\config{C}=M(\config{C'})$ can grow into, and no $\config{C'} \in [\![\config{C}\,]\!]$ may grow into something that $\config{C}=M(\config{C'})$ cannot grow into.

For a given configuration map and collection of representative configurations, we define the concepts of \emph{following} and \emph{modeling}, followed by our definition of \emph{simulation}.

\begin{definition}[It Follows]  We say system $T$ follows system $T'$ if whenever $\config{A'}\macrotrans_{T'} \config{B'}$ and $\cmap(\config{A'})\neq \cmap(\config{B'})$, then $\cmap(\config{A'})\rightarrow_T \cmap(\config{B'})$.
\end{definition}

\begin{definition}[Models] We say system $T'$ models system $T$ 
    if $\config{A} \rightarrow_T \config{B}$ implies that $\forall \config{A'}\in [\![\config{A}\,]\!]$, $\exists \config{B'}\in [\![\config{B}\,]\!]$ such that $\config{A'} \Rightarrow_{T'} \config{B'}$.
\end{definition}

\begin{definition}[Simulation]
A system $T'$ simulates a system $T$ if there exists a polynomial time computable function $M: \textrm{configs}_{T'} \rightarrow \textrm{configs}_T$ and polynomial time computable set $[\![\config{C}\,]\!] = \{ \config{C'} \mid \cmap(\config{C'}) = \config{C}\}$ for each $\config{C} \in \textrm{configs}_T$, such that:

\vspace{-.1cm}

\begin{enumerate} \setlength\itemsep{0em}
    \item $T$ follows $T'$.
    \item $T'$ models $T$.
\end{enumerate}
\end{definition}

\begin{figure}[t]
\vspace{-.2cm}
    \centering
    \includegraphics[width=0.75\textwidth]{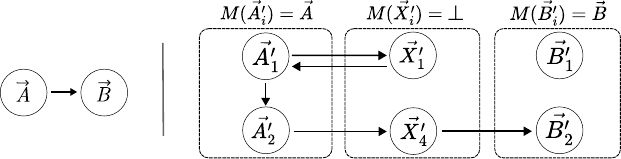}
    \vspace{-.2cm}
    \caption{(Left) A system $T$ with states $\config{A}$ and $\config{B}$, and transition $\config{A} \rightarrow_T \config{B}$. (Right) A state-space diagram for system $T'$ that simulates $T$. Here, each arrow represents some transition $\config{X} \rightarrow_{T'} \config{Y}$ under the dynamics of $T'$. Observe how $T$ follows $T'$ and $T'$ models $T$.}
    \label{fig:general_simulation}
    \vspace{-.2cm}
\end{figure}

\para{Polynomial Simulation}  
We say a simulation is \emph{polynomial efficient} if the simulating system adheres to the following:
\vspace{-.1cm}

\begin{description} \setlength\itemsep{0em}
    \item [polynomial species and rules.] The number of species and rules is at most polynomial in the number of species and rules of the simulated system.
    \item [polynomial rule size.] The maximum rule size (number of products plus number of reactants) is polynomial in the maximum rule size of the simulated system.
    \item [polynomial transition sequences.] For all $B$ such that $A \rightarrow_T B$, the expected number of transitions taken to perform a macro transition from $M(A) \macrotrans_{T'} M(B)$, conditioned that $M(A)$ does macro transition to $M(B)$, has expected number of transitions polynomial in the number of rules and species of the simulated system based on a uniform sampling of applicable rules.
    \item [polynomial volume.] Each $\config{C'} \in [\![\config{C}\,]\!]$ has a volume that is polynomially bounded by the volume of $C$, and for any macro transition $A' \macrotrans_{T'} B'$, any intermediate configuration within this macrotransition has volume polynomially bounded in the volume of $M(A')$ and $M(B')$.
\end{description} 

\begin{restatable}{theorem}{transitivity} \label{thm:transitivity}
    (\textbf{Transitivity}.) Given three CRN systems $T_1, T_2$ and $T_3$ such that $T_2$ simulates $T_1$ under polynomial simulation, and $T_3$ simulates $T_2$ under polynomial simulation, then $T_3$ simulates $T_1$ under polynomial simulation.
\end{restatable}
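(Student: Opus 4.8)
The plan is to obtain the simulation of $T_1$ by $T_3$ by composing the two given simulations. Let $M_{12},[\![\cdot]\!]_{12}$ witness that $T_2$ simulates $T_1$ and $M_{23},[\![\cdot]\!]_{23}$ witness that $T_3$ simulates $T_2$. I would define $M_{13}\colon \mathrm{configs}_{T_3}\to\mathrm{configs}_{T_1}\cup\{\undefConfig\}$ by $M_{13}(\config{C})=M_{12}(M_{23}(\config{C}))$ with the convention $M_{12}(\undefConfig)=\undefConfig$, and take the induced representative sets $[\![\config{C}]\!]_{13}=\{\config{C'}\mid M_{13}(\config{C'})=\config{C}\}$. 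Two bookkeeping facts are immediate: $M_{13}$ is onto $\mathrm{configs}_{T_1}$ (since $M_{12}$ is onto $\mathrm{configs}_{T_1}$ and $M_{23}$ is onto $\mathrm{configs}_{T_2}$), so each $[\![\config{C}]\!]_{13}$ is nonempty; and $[\![\config{C}]\!]_{13}=\bigcup_{\config{C'}\in[\![\config{C}]\!]_{12}}[\![\config{C'}]\!]_{23}$. Polynomial-time computability of $M_{13}$ and of membership in $[\![\cdot]\!]_{13}$ follows by composing the two polynomial-time procedures, using that the output of $M_{23}$ has size polynomial in its input because $M_{23}$ runs in polynomial time.

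Next I would verify the two defining conditions of simulation. For ``$T_1$ follows $T_3$'': given a macrotransition $\config{A'''}\macrotrans_{T_3}\config{B'''}$ with respect to $M_{13}$ and $M_{13}(\config{A'''})\neq M_{13}(\config{B'''})$, take the witnessing $T_3$-path and mark the positions where $M_{23}$ is defined. Each pair of consecutive marked configurations is joined by a sub-path whose interior configurations all have $M_{23}=\undefConfig$, hence is a macrotransition with respect to $M_{23}$, so by ``$T_2$ follows $T_3$'' consecutive $M_{23}$-images are either equal or related by a single $T_2$-step; after deleting repeats we obtain a $T_2$-path $Z_0\to_{T_2}\cdots\to_{T_2}Z_q$ from $M_{23}(\config{A'''})$ to $M_{23}(\config{B'''})$ with $q\ge 1$. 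The macrotransition condition on the original $T_3$-path forces each interior $M_{12}(Z_b)$ into $\{M_{13}(\config{A'''}),\undefConfig\}$, so this is a macrotransition with respect to $M_{12}$ having distinct endpoints, and ``$T_1$ follows $T_2$'' yields $M_{13}(\config{A'''})\to_{T_1}M_{13}(\config{B'''})$. For ``$T_3$ models $T_1$'': given $\config{A}\to_{T_1}\config{B}$ and $\config{A'''}\in[\![\config{A}]\!]_{13}$, put $\config{A'}=M_{23}(\config{A'''})\in[\![\config{A}]\!]_{12}$; ``$T_2$ models $T_1$'' gives a macrotransition $\config{A'}=\config{W_0}\to_{T_2}\cdots\to_{T_2}\config{W_{r+1}}=\config{B'}$ with $\config{B'}\in[\![\config{B}]\!]_{12}$, and applying ``$T_3$ models $T_2$'' to each step $\config{W_j}\to_{T_2}\config{W_{j+1}}$ and concatenating produces a $T_3$-path from $\config{A'''}$ to some $\config{B'''}\in[\![\config{B'}]\!]_{23}\subseteq[\![\config{B}]\!]_{13}$; its junction and interior configurations have $M_{23}$-image in $\{\config{W_j},\undefConfig\}$, whose $M_{12}$-images lie in $\{\config{A},\undefConfig\}$, so it is a macrotransition with respect to $M_{13}$.

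For polynomial efficiency, the bounds on the number of species and rules, on maximum rule size, and on volume compose directly: a polynomial of a polynomial is a polynomial, and an interior configuration of the concatenated macrotransition is an interior configuration of one of the lifted $T_3$-macrotransitions, so its volume is polynomially bounded in the volumes of the corresponding $\config{W_j},\config{W_{j+1}}$, which are polynomially bounded in $|\config{A}|$ and $|\config{B}|$.

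The step I expect to be the main obstacle is the polynomial transition-sequence bound: starting from $\config{A'''}\in[\![\config{A}]\!]_{13}$ and running $T_3$ under uniform random choice of applicable rule, the expected number of steps to complete a macrotransition to $[\![\config{B}]\!]_{13}$, conditioned on doing so, must be polynomial in $|\species_{T_1}|$ and $|\reactions_{T_1}|$. The natural approach is a strong-Markov-property decomposition of the run into phases, each phase being the stretch of $T_3$-steps that advances the $M_{23}$-induced $T_2$-trajectory by one transition; ``$T_3$ models $T_2$'' bounds the conditional expected length of a phase by a polynomial in $|\species_{T_2}|,|\reactions_{T_2}|$ (hence in $|\species_{T_1}|,|\reactions_{T_1}|$), and ``$T_2$ models $T_1$'' bounds the expected number of phases by a polynomial in $|\species_{T_1}|,|\reactions_{T_1}|$, so the product is polynomial. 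The delicate point is that conditioning on the \emph{global} success event reweights the behavior within a phase and the $M_{23}$-induced $T_2$-trajectory is not literally a uniform-rule $T_2$ process; the resolution is to invoke the Markov property at phase boundaries, so that the future success event depends on a phase only through its endpoint, together with the fact that the per-phase bound from ``$T_3$ models $T_2$'' holds uniformly over representative starting configurations and over admissible target $T_2$-configurations, which keeps each phase's conditional expected length polynomially bounded regardless of the later conditioning. Making this fully rigorous is where I expect the bulk of the work to lie.
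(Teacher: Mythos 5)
Your proposal is correct and follows essentially the same route as the paper: compose the two configuration maps (undefined whenever either component is undefined), lift the ``follows'' and ``models'' arguments through $T_2$ by splitting and concatenating macrotransitions, and observe that all four polynomial-efficiency bounds compose. If anything, you are more careful than the paper in two places it glosses over --- the decomposition of a $T_3$-macrotransition into sub-macrotransitions with respect to $M_{23}$ before invoking ``$T_2$ follows $T_3$'', and the conditioning issue in the expected transition-sequence bound, which the paper treats as a deterministic product of lengths.
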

\begin{proof}
    Given three CRN systems $T_1$, $T_2$ and $T_3$ such that $T_2$ simulates $T_1$ under polynomial simulation, and $T_3$ simulates $T_2$ under polynomial simulation. Let $\cmap_{21} : \textrm{configs}_{T_2} \rightarrow \textrm{configs}_{T_1}$ and $\cmap_{32} : \textrm{configs}_{T_3} \rightarrow \textrm{configs}_{T_2}$ be the polynomial-time computable configuration mappings. We define a configuration mapping function $\cmap_{31} : \textrm{configs}_{T_3} \rightarrow \textrm{configs}_{T_1}$ by composing functions $\cmap_{21}$ and $\cmap_{32}$ as follows. 

    \[
        \cmap_{31}(\config{C_3}) =
        \begin{cases}
            (\cmap_{21} \circ \cmap_{32})(\config{C_3}) & \text{if}\; \cmap_{32}(\config{C_3})\not=\phi \wedge \cmap_{21}(\cmap_{32}(\config{C_3}))\not=\phi\\
            \phi & \text{otherwise}
        \end{cases}
    \]

    And the set of representative configurations for a configuration $\config{C_1}\in \textrm{configs}_{T_1}$ as $[\![\config{C_1}\,]\!] = \{ \config{C_3} \mid \config{C_1}=\cmap_{31}(\config{C_3})\}$. If the configuration $\config{C_1}$ has a non-empty set of representative configurations in $T_2$, and each of those configurations have representative configurations in $T_3$, then this set will contain all such configurations. Therefore, if both $\cmap_{21}$ and subsequently $\cmap_{32}$ are defined, this set will be non-empty.
    
    We now show that $T_3$ simulates $T_1$ by proving that $T_1$ follows $T_3$ and $T_3$ models $T_1$ for the configuration mapping and representative configurations defined above. We then show that the simulation is polynomial efficient.

    \para{$T_1$ follows $T_3$} $T_1$ follows $T_3$ if for any two configurations $\config{A_3}$ and $\config{B_3}$ in $T_3$ where $\cmap_{31}(\config{A_3})$ and $\cmap_{31}(\config{B_3})$ are defined, such that $\config{A_3} \macrotrans_{T_3} \config{B_3}$, and $\cmap_{31}(\config{A_3}) \not= \cmap_{31}(\config{B_3})$, then $\cmap_{31}(\config{A_3}) \rightarrow_{T_1} \cmap_{31}(\config{B_3})$. For these configurations if $\cmap_{32}(\config{A_3}) \not= \cmap_{32}(\config{B_3})$ then $\cmap_{32}(\config{A_3}) \rightarrow_{T_2} \cmap_{32}(\config{B_3})$. This is true because $T_2$ follows $T_3$.\\
    Because $T_1$ follows $T_2$, for any two configurations $\config{A_2}$ and $\config{B_2}$ in $T_2$, if $\config{A_2} \macrotrans_{T_2} \config{B_2}$, and $\cmap_{21}(\config{A_2}) \not= \cmap_{21}(\config{B_2})$, then $\cmap_{21}(\config{A_2}) \rightarrow_{T_1} \cmap_{21}(\config{B_2})$. A single-step transition is simply a macro-transition with one step. Therefore, if $\cmap_{21}(\cmap_{32}(\config{A_3})) \not= \cmap_{21}(\cmap_{32}(\config{B_3}))$, then $\cmap_{21}(\cmap_{32}(\config{A_3})) \rightarrow_{T_1} \cmap_{21}(\cmap_{32}(\config{B_2}))$. When $\cmap_{31}$ is defined we can infer based on the definition of $\cmap_{31}$ that, if $\cmap_{31}(\config{A_3}) \not= \cmap_{31}(\config{B_3})$ then $\cmap_{31}(\config{A_3}) \rightarrow_{T_1} \cmap_{31}(\config{B_3})$. Therefore, $T_1$ follows $T_3$.

    \para{$T_3$ models $T_1$} $T_3$ models $T_1$ if for any two configurations $\config{A_1}$ and $\config{B_1}$ in $T_1$ such that $\config{A_1}\rightarrow_{T_1}\config{B_1}$, $\forall \config{A_3}\in[\![\config{A_1}\,]\!]$, $\exists\config{B_3}\in[\![\config{B_1}\,]\!]$ under $\cmap_{31}$ such that $\config{A_3}\macrotrans_{T_3}\config{B_3}$. For all such configurations $\config{A_1}$ and $\config{B_1}$ in $T_1$, $\forall \config{A_2}\in [\![\config{A_1}\,]\!], \exists \config{B_2}\in [\![\config{B_1}\,]\!]$ under $\cmap_{21}$ such that $\config{A_2}\macrotrans_{T_2}\config{B_2}$ because $T_2$ models $T_1$. This macro transition $\config{A_2}\macrotrans_{T_2}\config{B_2}$ is represented as a sequence of single-step transitions $\config{X_2}^i\rightarrow_{T_2}\config{X_2}^{i+1}$ where $\cmap_{21}(\config{X_2}^i), \,\cmap_{21}(\config{X_2}^{i+1}) \in \{\config{A_1}, \phi \}$ for $0 \leq i < k$ as shown in Figure \ref{fig:models-transitivity}. Furthermore, $\forall \config{X_3^j}\in[\![\config{X_2}^i\,]\!]$, $\exists \config{X_3}^\ell\in[\![\config{X_2}^{i+1}\,]\!]$ under $\cmap_{32}$ such that $\config{X_3}^j\macrotrans_{T_3}\config{X_3}^\ell$ because $T_3$ models $T_2$. 
    
    Because $\cmap_{32}(\config{X_3}^j)=\config{X_2}^i$ and $\cmap_{32}(\config{X_3}^\ell)=\config{X_2}^{i+1}$ and $\cmap_{21}(\config{X_2}^i), \,\cmap_{21}(\config{X_2}^{i+1}) \in \{\config{A_1}, \phi \}$, therefore, $\cmap_{31}(\config{X_3}^j), \,\cmap_{31}(\config{X_3}^\ell) \in \{\config{A_1}, \phi \}$ for $0 \leq j < k$ and $j < \ell \leq k$. As shown in Figure \ref{fig:models-transitivity}, the sequence of macro-transitions $\config{X_3}^j\macrotrans_{T_3}\config{X_3}^\ell$ can be represented as a single macro-transition $\langle\config{A_3},\config{X_3}^1,\ldots,\config{X_3}^k,\config{B_3}\rangle$, where $\cmap_{31}(\config{X_3}^j)\in\{\config{A_1}, \phi\}$ for $0\leq j \leq k$. Also, $\cmap_{32}(\config{A_3})=\config{A_2}$ and $\cmap_{32}(\config{B_3})=\config{B_2}$. We know that $\cmap_{21}(\config{A_2})=\config{A_1}$ and $\cmap_{21}(\config{B_2})=\config{B_1}$, therefore, $\cmap_{31}(\config{A_3})=\config{A_1}$ and $\cmap_{31}(\config{B_3})=\config{B_1}$ when both $\cmap_{21}$ and $\cmap_{32}$ are defined. Hence, such a macro-transition $\config{A_3}\macrotrans_{T_3}\config{B_3}$ models the single-step transition $\config{A_1}\rightarrow_{T_1}\config{B_1}$ when $\config{A_3}\in[\![\config{A_1}\,]\!]$ and $\config{B_3}\in[\![\config{B_1}\,]\!]$. Therefore, $T_3$ models $T_1$.
    \begin{figure}[t!]
        \centering
        \includegraphics[width=0.7\linewidth]{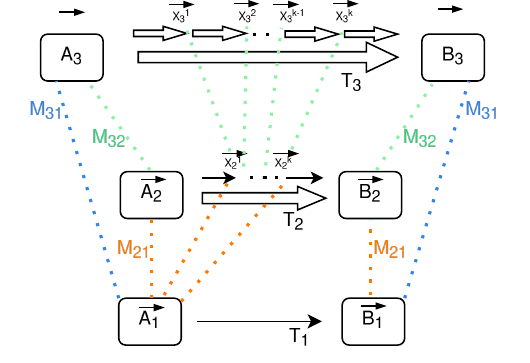}
        \caption{A single-step transition $\config{A_1}\rightarrow_{T_1}\config{B_1}$ is simulated using a sequence of transitions starting at $\config{A_2}$ through $\config{B_2}$ in $T_2$. Each of these single-step is represented using a sequence of transitions in $T_3$.}
        \label{fig:models-transitivity}
    \end{figure}
    
    \para{Polynomial Simulation} Given that $T_2$ simulates $T_1$ under polynomial simulation, and $T_3$ simulates $T_2$ under polynomial simulation, we now show that $T_3$ simulates $T_1$ under polynomial simulation. Let $\species_i$ and $\reactions_i$ be the set of species and reactions for the model $T_i$ for $1\leq i\leq 3$.
    \begin{enumerate}
        \item \textbf{polynomial species and rules:} We know that $\size{\species_2} = \mathcal{O}(\size{\species_1}) = \size{\species_1}^{c_1}$ and $\size{\species_3} = \mathcal{O}(\size{\species_2}) = \size{\species_2}^{c_2}$, therefore, $\size{\species_3} = \size{\species_1}^{c_1\cdot c_2}$. Similarly, $\size{\reactions_3} = \size{\reactions_1}^{c_1\cdot c_2}$. Therefore, $\size{\species_3} = \mathcal{O}(\size{\species_1})$ and $\size{\reactions_3} = \mathcal{O}(\size{\reactions_1})$.
        \item \textbf{polynomial rule size:} Let $\RM_i$ be the largest rule in the model $T_i$. We know that $\size{\RM_2} = \mathcal{O}(\size{\RM_1})$ and $\size{\RM_3} = \mathcal{O}(\size{\RM_2})$, therefore, $\size{\RM_3} = \mathcal{O}(\size{\RM_1})$.
        \item \textbf{polynomial transition sequences:} The length of the transition sequence for a macro-transition in $T_3$ is bounded by $\mathcal{O}(\size{\species_2} + \size{\reactions_2})$. Given that $\size{\species_2}=\size{\species_1}^{c_1}$ and $\size{\reactions_2}=\size{\reactions_1}^{c_1}$, therefore, each macro-transition in $T_3$ simulating a single-step in $T_1$ uses $\mathcal{O}(\size{\species_1}+\size{\reactions_1})$ single-step transitions.
        \item \textbf{polynomial volume:} For a macro-transition $\config{A_2}\macrotrans_{T_2}\config{B_2}$ representing a transition $\config{A_1}\rightarrow_{T_1}\config{B_1}$, the volume of representative configurations $\config{A_2}$ and $\config{B_2}$ is polynomial in the volume of configurations $\config{A_1}$ and $\config{B_1}$ respectively. Similarly, for a macro-transition $\config{A_3}\macrotrans_{T_3}\config{B_3}$ as shown in Figure \ref{fig:models-transitivity}, the volume of $\config{A_3}$ and $\config{B_3}$ is polynomial in volume of $\config{A_2}$ and $\config{B_2}$ respectively. Therefore, the volume of representative configurations $\config{A_3}$ and $\config{B_3}$ is polynomial in the volume of configurations $\config{A_1}$ and $\config{B_1}$, where $\config{A_3}\macrotrans_{T_3}\config{B_3}$ models $\config{A_1}\rightarrow_{T_1}\config{B_1}$. The volume of all intermediate configurations are bound by $\config{A_3}$ and $\config{B_3}$.
    \end{enumerate}
\end{proof}
\section{Detecting Zero: Examples in Each Model}\label{sec:zeros}

The power each extended CRN model has over the basic CRN model stems from the ability to detect when a species has reached a count of zero. Thus, the focus of this section is twofold: to detail how each model is capable of detecting zero and to provide minimum working examples (MWEs) for ease of comprehension. 
For all examples, we want to check if species $s$ is in the system. The idea is that there is a species $c_s$ that turns to $n_s$ if there are no $s$’s and $y_s$ if there are $s$’s in the system while nothing else in the system changes.

\para{Void Genesis}
Detecting zero is explicitly done by the model. For the simplest version of the model, we assume we have no $z$ species.
To check whether $s$ is zero, simply use either as a catalyst (examples on the left side).
For this to work, you must maintain a single $z$. Thus, when producing any species $s$, continue to check with $s$ as a catalyst and $z$ as a reactant. As an example, the rule $x \rightarrow s$ would be modified to be implemented with the two rules on the right side.

\begin{tabular}{  p{.5\textwidth} p{.3\textwidth}  }
$c_s + s \rightarrow y_s + s$ & $x + s \rightarrow 2s$\\
$s_s + z \rightarrow n_s + z$ & $x + z \rightarrow s$\\
\end{tabular}

\para{Inhibitory CRNs}
We can use an inhibited rule and a catalyst to detect whether a species $s$ exists or not.

\begin{tabular}{  p{2.5cm} l }
 $c_s \xrightarrow{s} n_s$ & runs when $s$ does not exist \\
 $c_s + s \rightarrow y_s + s$  & runs when $s$ does exist\\
\end{tabular}

\para{Course-Rate CRNs}
Detecting a species $s$ only requires that there is a fast rule that uses it and a slow rule that does not. 

\begin{tabular}{  p{2.5cm} p{10cm}  }
    Fast Reaction & $c_s + s \rightarrow y_s + s$ (will always execute first)\\
    Slow Reaction & $c_s \rightarrow n_s$ (will only execute if no $s$ exists in the system)\\
\end{tabular}

\para{Step-Cycle CRNs}
Detecting zero is simple by using $s$ in a reaction and then going to another step. Since each step reaches a terminal configuration, it must not exist if the reaction did not occur.

\vspace{.2cm}
\begin{tabular}{ | c | p{5.5cm} | p{1.5cm} | p{4cm} | }
\hline
  Step & Description & Add & Rules \\ \hline
   
  1 & Add something that only reacts with $s$ & a single $c_s$ & $s + c_s\rightarrow y$\\ \hline
  2 & Check if it reacted & a single $w$ 
  	& 	$w + y \rightarrow y_s + s$ ($s$ exists)\fillhor\
        $w + c_s \rightarrow n_s$ ($s$ not exists) 
       \\ \hline
\end{tabular}

\para{Unique-Instruction Parallel CRNs}
The parallel model UI takes advantage of how rules are applied to force reactions to run in a specific order depending on whether the count of a certain species is zero or not. 
Since we are limiting the possible rules that can run by our species selection, detecting zero can be done in this manner. The intuition is to run two independent rules, followed by a rule that uses the output of both. The `Round' indicates that all rules in this round would execute before the next round due to the maximal selection.

\vspace{.2cm}
\begin{tabular}{| c | p{5cm} | p{6cm} |  }
    \hline
  Round & Description & Rules \\ \hline
  1 & Create a timing/clock species ($t_i$) and a species to use $s$ with. & $c_s \rightarrow r_c + t_1$\\ \hline
  
  2 & Try to use $s$ and use the timer $t$ in another rule.
  	&  $r_c + s \rightarrow r_s$ (can only run if $S$ exists)\fillhor\
       $t_1 \rightarrow t_2$ (runs even if the other rule can not)\\ \hline 
        
  3 & Now we use the timer to see if the other rule ran 
  	&  $t_2 + r_s \rightarrow y_s + s$ (there is an $s$ in the system) \fillhor\
    $t_2 + r_c \rightarrow n_s$ (there are no $s$'s in the system)\\ \hline
\end{tabular}

\section{Equivalence}\label{sec:equivalence}

This section shows equivalence between the CRN models, represented as the purple bi-directional arrows in Figure \ref{fig:vg_hub}. We first introduce a more general Void Genesis model, along with some useful techniques, before presenting our equivalence results. Due to the number of results and limited space, construction details and formal proofs are placed in Appendix \ref{constructions}.


\subsection{Equivalence Preliminaries}\label{sub:technique}

\subsubsection{$k$-Void Genesis} \label{sub:kvg-def}
Due to the complexity of some of the simulations, we first provide a more general version of the VG model that makes simulation easier.
We will prove that the standard Void Genesis model can still simulate the more general model with at most a polynomial blow-up in rules, species, and expected rule applications. Essentially, the only difference is that rather than a single zero species $z$, there can be a different zero species for every species.

Formally, a \emph{$k$-Void Genesis CRN} $\crn{C_{KVG}} = ((\species, \reactions), Z_{\emptyset})$ is a CRN with a partial mapping function $\Zero:\species_1 \rightarrow \species_2$, such that $\species_1 \cup \species_2 = \species$ and $\species_1 \cap \species_2 = \emptyset$, 
that indicates which species is created whenever another species count goes to zero (if mapped). The partition creates a distinction between \emph{normal} species and the special \emph{zero-counting} species, which eliminates chaining effects that could arise with zero-species being created from the elimination of other zero-species. For convenience, we use the notation $\Zero(\lambda)\rightarrow z_\lambda$ to indicate that if the count of species $\lambda$ in the system goes to zero, a $z_\lambda$ species is created.

\begin{definition}[$k$-Void-Genesis Dynamics]\label{def:kvdynamics}
    For a $k$-Void-Genesis CRN $((\species, \reactions), \Zero)$ and configurations $\config{A}$, $\config{B_t}$ and $\config{B}$, we say that $\config{A} \rightarrow_{\crn{C_{KVG}}}^{(\species, \reactions)} \config{B} $ if there exists a rule $(\reactants,\products) \in \reactions$ such that $\reactants \leq \config{A}$, $\config{A}-\reactants+\products=\config{B_t}$, and $\config{B} = \config{B_t} + \config{Z}$ where $\config{Z} = \sum_{\lambda \in C} \single{z}_\lambda$, and $C =\{\lambda \in \species | \config{A}[\lambda] \not= 0$ and $\config{B_t}[\lambda]=0\}$. 

\end{definition}

\subsubsection{Techniques} \label{subsec: techniques}
One technique that is used in our simulations is to maintain a single global leader species $G$ that selects which rule $\mathcal{G}_i$ to execute, and then a sequence of rules to either sequentially consume the reactants or sequentially break down the product of the other models' rules to execute zero-checking before managing zero species. 
Table \ref{tab:react_break} describes both of these processes.

\para{Notation}
To simplify the proofs and for consistency, we use the following notation for rules. For rule $i$, we have $\mathcal{G}_i=(\config{\RM_i},\config{\PM_i})$. We make use of the $\Set{\RM}$ and $|\RM|$ notation defined in the preliminaries, as well the difference between a configuration with many species $\config{X}$ versus a configuration with only a single species $\single{X}$ by the over arrow used. 

\begin{table}[t]
\vspace{-.2cm}
        \centering
        \begin{tabular}{|@{} l @{}l @{}|@{}l@{}|@{} l @{}l@{} |}\cline{1-2}\cline{4-5}
            & $\single{G} + \config{\RM_i} \rightarrow \single{R}_i^{1} + \config{\RM_i}$ & &
            & $\single{G} + \config{\RM_i}\rightarrow \single{P}_i^1$
            \\ \cline{1-2}\cline{4-5}

            & $\single{R}_i^j + \single{r}_j \rightarrow \single{C}_i^{j}$ (check for zero) & &
            $\forall p_j \in \PM_i:$ & $\single{z}_{p_j} + \single{P}_i^j \rightarrow \single{p}_j + \single{P}_i^{j+1}$ (remove zero)
            \\ 
            
           $\forall r_j \in \RM_i:$ & $\single{C}_i^j + \single{z}_{r_j} \rightarrow \single{R}_i^{j+1} + \single{z}_{r_j}$ ($r_j$ is zero) & &
            & $\single{p}_j + \single{P}_i^j \rightarrow \single{p}_j + \single{p}_j + \single{P}_i^{j+1}$ (no zero)
            \\
           
            & $\single{C}_i^{j} +  \single{r}_j \rightarrow \single{R}_i^{j+1} + \single{r}_j$ ($r_j$ is not zero) & & &
            
            \\ \cline{1-2}\cline{4-5}
            
            & $\single{R}_i^{|\RM_i|+1} \rightarrow \single{G} + \config{\PM_i}$ & & 
            & $\single{P}_i^{|\PM_i|+1} \rightarrow \single{G}$
             \\ \cline{1-2}\cline{4-5}
           \multicolumn{2}{l}{\textbf{(a)} Using reactants sequentially $\config{\RM_i} \SeqReacts \config{\PM_i}$} & \multicolumn{1}{c}{}& \multicolumn{2}{l}{\textbf{(b)} Creating products sequentially $\config{\RM_i} \SeqProds \config{\PM_i}$}
        \end{tabular}
    \caption{An overview of two techniques. (a) A procedure to consume the reactants of some rule sequentially in order to test whether a rule can be applied and whether zero species are (or need to be) created. This is abbreviated for rule $i$ as $\config{\RM_i} \SeqReacts \config{\PM_i}$. (b) Creating products sequentially by consuming all reactants and then using counter species to create each product with a different rule until all have been created. This may be needed to decrease the number of rule combinations while handling zero species. Note that the zero species $z_{p_j}$ is consumed when $p_j$ is created. If $p_j$ still exists, then it is used as a catalyst. This process is denoted for some rule $i$ as $\config{\RM_i} \protect\SeqProds \config{\PM_i}$.  } \label{tab:react_break}
        \label{tab:prod_break}
        \vspace{-.2cm}
\end{table}
\subsection{$k$-Void Genesis equivalence with Void Genesis}\label{sub:VG_simulations}

\begin{restatable}{lemma}{kvgvg} \label{lem:kvg-vg}
    The $k$-Void Genesis model can simulate the Void Genesis model. 
\end{restatable}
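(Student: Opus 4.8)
The plan is to compile the single zero species $z$ of a Void Genesis system $T=((\species,\reactions),z)$ out of one \emph{fresh per-species} counter for each other species, together with trivial reactions that merge those counters into $z$. Concretely, write $\species_1=\species\setminus\{z\}$ and define the $k$-Void Genesis system $T'=((\species',\reactions'),\Zero)$ by $\species'=\species\cup\{\hat z_\lambda\mid\lambda\in\species_1\}$, with normal/zero partition $\species'_1=\species$ and $\species'_2=\{\hat z_\lambda\mid\lambda\in\species_1\}$; the partial map is $\Zero(\lambda)=\hat z_\lambda$ for each $\lambda\in\species_1$ and is \emph{undefined on $z$} (this is essential, since $T$ never increments $z$ when $z$ itself reaches zero); and $\reactions'=\reactions\cup\{\single{\hat z}_\lambda\rightarrow\single{z}\mid\lambda\in\species_1\}$. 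The initial configuration of $T'$ is that of $T$ with every $\hat z_\lambda$ at $0$. Intuitively, a rule of $\reactions$ runs exactly as before, dropping a $\hat z_\lambda$ for each $\lambda\in\species_1$ that it zeroes; the ``flush'' rules then turn each such $\hat z_\lambda$ into the real $z$.

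The configuration map is $\cmap(\config{C}')[\lambda]=\config{C}'[\lambda]$ for $\lambda\in\species_1$ and $\cmap(\config{C}')[z]=\config{C}'[z]+\sum_{\lambda\in\species_1}\config{C}'[\hat z_\lambda]$, a total, linear-time computable function, with $[\![\config{C}\,]\!]=\{\config{C}'\mid\cmap(\config{C}')=\config{C}\}$ (nonempty: the ``clean'' representative puts all of the $z$-mass on $z$). The fact that drives both simulation conditions is that a flush $\single{\hat z}_\lambda\rightarrow\single{z}$ is \emph{internal}: it creates no zero species (since $\hat z_\lambda\in\species'_2$ is outside $\mathrm{dom}(\Zero)$, and $z$ rising from $0$ triggers nothing), and it leaves $\cmap$ unchanged (it just moves one unit of $z$-mass). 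For \emph{$T'$ models $T$}: given $\config{A}\rightarrow_T\config{B}$ via a rule $r$ and any $\config{A}'\in[\![\config{A}\,]\!]$, first flush all pending $\hat z_\lambda$ in $\config{A}'$ (each such step keeps $\cmap$ at $\config{A}$), reaching the clean representative where $r$ is applicable because $r$'s reactants live on $\species$ and match; applying $r$ yields $\config{B}''$ whose $\hat z_\lambda$-counts record exactly the species of $\species_1$ just zeroed, and the Void Genesis dynamics give $\cmap(\config{B}'')=\config{B}$, so $\config{B}''\in[\![\config{B}\,]\!]$ and $\config{A}'\macrotrans_{T'}\config{B}''$ is a legal macro transition. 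For \emph{$T$ follows $T'$}: a macro transition $\config{A}'\macrotrans_{T'}\config{B}'$ with $\cmap(\config{A}')\neq\cmap(\config{B}')$ consists of flushes plus at least one rule of $\reactions$ that changes $\cmap$; that rule must be the \emph{last} transition, since immediately after it $\cmap$ already differs from $\cmap(\config{A}')$ and $\cmap$ is never $\undefConfig$, so the path must stop; the same $z$-mass arithmetic shows this rule witnesses $\cmap(\config{A}')\rightarrow_T\cmap(\config{B}')$.

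Polynomial efficiency is then essentially immediate: $|\species'|\le 2|\species|$ and $|\reactions'|\le|\reactions|+|\species|$; the flush rules have size $2$, so the maximum rule size is unchanged up to a constant; each macro transition of $T'$ uses at most one application of a rule of $\reactions$ together with at most $|\species_1|$ flushes, so at most $1+|\species|$ transitions regardless of the (uniformly sampled) choices, bounding the expected length; and since $\cmap$ only redistributes $z$-mass, every representative of $\config{C}$ has volume exactly $|\config{C}|$, while any intermediate configuration of a macro transition differs from these by at most one rule application's products and at most $|\species|$ new zero species, hence has volume polynomially bounded in $|\cmap(\config{A}')|$ and $|\cmap(\config{B}')|$.

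The main obstacle, and where I would be most careful, is the $\cmap$/representative bookkeeping: one must notice that $\cmap$ has to count $z$ \emph{together with} all pending $\hat z_\lambda$ (so that the ``partially flushed'' states are genuine representatives and the ``models'' direction goes through for non-clean representatives), and that the partition $\species'_1/\species'_2$ with $z\notin\mathrm{dom}(\Zero)$ is exactly what keeps the flushes from chaining new zero species; once these are set up correctly, the remainder is routine Void Genesis arithmetic.
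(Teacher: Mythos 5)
Your construction is correct, but it takes a genuinely different and much heavier route than the paper. The paper's entire proof is the observation that Void Genesis is literally a special case of $k$-Void Genesis: given $((\species,\reactions),z)$, set $\Zero(\lambda)=z$ for every $\lambda\in\species\setminus\{z\}$, keep $\species$ and $\reactions$ unchanged, and note that the $k$-VG dynamics then add exactly $n\cdot\single{z}$ copies of the single species $z$ whenever $n$ species are zeroed --- the two systems have identical transition relations, so the identity configuration map is trivially a polynomial simulation. Your version instead routes the $z$-mass through fresh per-species counters $\hat z_\lambda$ and flush rules $\hat z_\lambda\rightarrow z$, which forces you to widen the configuration map to sum the pending counters into $z$, to argue that partially flushed states are legitimate representatives, and to pay $\mathcal{O}(|\species|)$ extra species, rules, and macro-transition length. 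All of that reasoning is sound (the key points --- flushes are $\cmap$-invariant and trigger no chaining because $\hat z_\lambda$ and $z$ lie outside the domain of $\Zero$, and the last step of any $\cmap$-changing macro transition must be an original rule --- are exactly right), but it is machinery the statement does not need: the ``collision'' of all zero outputs onto one target species is already expressible in the $k$-VG mapping itself. Your style of construction is essentially what the paper reserves for the harder converse direction (Lemma~\ref{lem:vg-kvg}), where sequential bookkeeping genuinely is required.
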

\para{Construction}
Given a Void Genesis model $\crn{C_{VG}}=((\species, \reactions), z)$, let $\Zero(\lambda)=z$, $\forall \lambda \in \Lambda\setminus\{z\}$. Then the $k$-Void Genesis model $\crn{C_{KVG}}=((\species, \reactions), \Zero)$ is equivalent.


\begin{restatable}{lemma}{vgkvg} \label{lem:vg-kvg}
    The Void Genesis model can simulate the $k$-Void Genesis model. 
\end{restatable}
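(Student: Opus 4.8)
\para{Proof Plan}
The plan is for a Void-Genesis CRN to serialize the $k$-VG rules through a single leader species $G$, using the model's automatic creation of its zero species --- call it $z_\ast$ --- as the zero detector, in the spirit of the reactant-consumption template $\config{\RM_i}\SeqReacts\config{\PM_i}$ of Table~\ref{tab:react_break}(a) together with the Void-Genesis zero-check of Section~\ref{sec:zeros}.
Write the given $k$-VG CRN as $((\species,\reactions),\Zero)$ with $\species=\species_1\cup\species_2$, $\species_1\cap\species_2=\emptyset$, and $\Zero$ a partial map $\species_1\to\species_2$.
The simulating CRN $\crn{C_{VG}}$ keeps all species of $\species$ --- so each zero-species $z_\lambda\in\species_2$ is an \emph{ordinary} species here --- and adds $z_\ast$, a leader $G$, and per-rule bookkeeping species.
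The configuration map $\cmap$ is the identity on the $\species$-coordinates and is defined exactly on configurations holding one $G$, no bookkeeping species, and a fixed ``buffer'' of one $z_\ast$; it sends such a configuration to the $k$-VG configuration obtained by deleting the $G$ and the buffer $z_\ast$, and every other configuration to $\undefConfig$.
Thus $[\![\config{C}\,]\!]$ is the single configuration $\config{C}+\single{G}+\single{z}_\ast$, and $\crn{C_{VG}}$ starts from the $k$-VG start configuration plus one $G$ and one $z_\ast$.

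For each rule $\mathcal{G}_i=(\config{\RM_i},\config{\PM_i})$ I would install a gadget that runs as one macro transition.
The leader first commits to $i$ by a reaction that consumes $G$ and touches $\config{\RM_i}$ catalytically, so a commit is possible iff $\mathcal{G}_i$ is applicable; then it walks through the distinct reactant species of $\mathcal{G}_i$ one species at a time.
For reactant species $\lambda$ of multiplicity $m$ it removes all $m$ copies at once, and, since $\crn{C_{VG}}$ manufactures a $z_\ast$ exactly when $\lambda$'s count drops to $0$, it branches on whether that $z_\ast$ appeared: it \emph{absorbs} the $z_\ast$ when $\lambda$ is also a product of $\mathcal{G}_i$ or $\lambda$ lies outside the domain of $\Zero$ (both static facts, under which $k$-VG semantics creates no zero-species), and otherwise \emph{recycles} it into one $z_{\Zero(\lambda)}$.
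After every reactant species is processed, the gadget emits $\config{\PM_i}$ in a single reaction (producing species triggers no zero) and restores the leader $G$ and the buffer $z_\ast$.
Because $\mathcal{G}_i$ is applicable, a reactant species reaches $0$ precisely when its count equals the amount consumed and it is not replaced as a product, and then exactly one $z_{\Zero(\lambda)}$ is created, matching Definition~\ref{def:kvdynamics}.
A clean configuration is re-reachable only through a complete gadget run, so any macro transition that changes $\cmap$ executes exactly one $k$-VG step (the \emph{follows} condition), while the correct gadget run realizes every $k$-VG step (the \emph{models} condition).
Each gadget adds $O(\size{\Set{\RM_i}})$ species and rules of size $O(\size{\RM_i}+\size{\PM_i})$, runs in $O(\size{\RM_i}+\size{\PM_i})$ transitions, and never holds more than $O(1)$ auxiliary copies beyond the volumes of the endpoints, so the simulation is polynomial efficient; combined with Lemma~\ref{lem:kvg-vg}, this gives the polynomial equivalence of Void Genesis and $k$-Void Genesis.

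The main obstacle --- and where the construction needs genuine care --- is that Void-Genesis creates a copy of $z_\ast$ whenever \emph{any} species other than $z_\ast$ reaches $0$, not merely a tracked normal species: this includes every bookkeeping species the gadget consumes, every normal species outside the domain of $\Zero$, and every species in $\species_2$ that occurs as a reactant of some rule (these, lying outside $\species_1$, produce nothing under $k$-VG dynamics when they reach $0$).
The gadget must therefore keep the count of $z_\ast$ predictable at each detection step so that ``did a fresh $z_\ast$ appear?'' is an unambiguous test, must absorb or recycle all of the spurious copies while keeping every rule of bounded size, and must never emit a stray $z_\lambda$ along an incorrect path --- that would be a wrong-but-clean configuration and would break the \emph{follows} condition.
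I would handle this by resolving each reactant's zero-event before advancing to the next reactant, by recycling along the production discipline used for Void Genesis in Section~\ref{sec:zeros} (each created $z_\ast$ is spent when its counterpart species is next produced, so the buffer count is preserved), and by arranging the bookkeeping so that every incorrect choice leads to a permanently-$\undefConfig$ configuration; any remaining nondeterminism then damages neither correctness nor the conditional expected macro-transition length, which stays polynomial.
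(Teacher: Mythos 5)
Your proposal is correct and follows essentially the same route as the paper: a single leader $G$ serializes the $k$-VG rules, reactants are consumed sequentially with the automatically generated Void-Genesis zero species serving as the detector, and the branch on its appearance either absorbs it or converts it into the specific $z_{\Zero(\lambda)}$, exactly as in Table~\ref{tab:vgreact_break}(a). Your minor deviations (holding a buffer of one $z_\ast$ rather than keeping its count at zero, and consuming all copies of a reactant species in one step rather than one at a time) are cosmetic, and you are in fact more explicit than the paper's sketch about the spurious zero-species generated by consumed bookkeeping species and about the reactant-that-is-also-a-product case, where Definition~\ref{def:kvdynamics} requires that no $z_\lambda$ be emitted.
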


\para{Construction} This result is straightforward using the methods to sequentially consume reactants as previously defined. We slightly modify them with specifics as shown in Table \ref{tab:vgreact_break}. Whenever we consume a reactant $r_j$, we check if the single $z$ species exists and if it does, create the specific $z_{r_j}$ species if $r_j$ is mapped in $\Zero$.

Given a $k$-Void Genesis model $\crn{C_{KVG}}=((\species, \reactions), \Zero)$, we create a VG CRN $\crn{C_{VG}}=((\species', \reactions'), z)$. We let $\species' = \species \cup \{G\} \cup \{z_{\lambda}: \lambda \in \species\setminus\{z\}\} \cup S_R$ where $S_R = \{R_i^j:1 \leq i \leq |\reactions|, 1 \leq j \leq |\RM_i|+1\}$.
We can simulate having specific $z_j$ species for all $j$ species by keeping the $z$ species count at 0, and checking whether we consumed the last $j$. 

\begin{table}[t]
\vspace{-.2cm}
    \centering
    \begin{tabular}{| @{}l  @{}l |@{} l l l}\cline{1-2}
        & $\single{G} + \config{\RM_i} \rightarrow \single{R}_i^{1} + \config{\RM_i}$ & &
        \multicolumn{2}{c}{$\single{G} + \single{a} + 2\single{b} \rightarrow \single{R}_1^1 + \single{a} + 2\single{b}$}  \\ 
        \cline{1-2}\cline{4-5}
        
        \multirow{4}{*}{$\forall r_j \in \RM_i:$} & $\single{R}_i^j + \single{r}_j \rightarrow \single{C}_i^{j}$ (check for zero)
         & & $\single{R}_1^1 + \single{a} \rightarrow \single{C}_1^1$ & $\single{C}_1^2 + \single{b} \rightarrow \single{R}_1^3 + \single{b}$
        \\ 
        
         & \multirow{2}{*}{$\begin{cases}
            \single{C}_i^j + \single{z} \rightarrow \single{R}_i^{j+1} \text{(unmapped species)}\\ 
            \single{C}_i^j + \single{z} \rightarrow \single{R}_i^{j+1} + \single{z}_{r_j} \text{(zero species)}
        \end{cases}$} & &
        $\single{C}_1^1 + \single{z} \rightarrow \single{R}_1^2 + \single{z}_a$ & $\single{R}_1^3 + \single{b} \rightarrow \single{C}_1^3$ 
        \\
        
        & & &
        $\single{C}_1^1 + \single{a} \rightarrow \single{R}_1^2 + \single{a}$ & $\single{C}_1^3 + \single{z} \rightarrow \single{R}_1^4 + \single{z}_b$\\
        
        & $\single{C}_i^{j} +  \single{r}_j \rightarrow \single{R}_i^{j+1} + \single{r}_j$ (count not zero) & &
        $\single{R}_1^2 + \single{b} \rightarrow \single{C}_1^2$ & $\single{C}_1^3 + \single{b} \rightarrow \single{R}_1^4 + \single{b}$
        \\ \cline{1-2}
         & $\single{R}_i^{|\RM_i|+1} \rightarrow \single{G} + \config{\PM_i}$  & & $\single{C}_1^2 + \single{z} \rightarrow \single{R}_1^3 + \single{z}_b$ & $\single{R}_1^4 \rightarrow \single{G} + 2\single{a} + \single{c}$\\  \cline{1-2}
    \multicolumn{2}{l}{\textbf{(a)} Zero checking reactants $\config{\RM_i} \SeqReacts \config{\PM_i}$} & \multicolumn{1}{c}{}& \multicolumn{2}{l}{\textbf{(b)} Example reaction set}
    \end{tabular}

    \caption{(a) The rules for the simulation of $k$-Void Genesis by Void Genesis. For each applicable rule, we create a set of rules that sequentially consume each reactant $r_j$, and if its count is now zero, it creates the $z_{r_j}$ species (unless it is unmapped, then $z$ is simply consumed). Only one of the zero check rules is added based on the mapping, which is why they are grouped.
    (b) The reactions that simulate the rule $\single{a} + 2\single{b} \rightarrow 2\single{a} + \single{c}$ from the $k$ void genesis model within the void genesis model.
    }\label{tab:vgreact_break}\label{tab:vgexample}
    \vspace{-.2cm}
\end{table}

\begin{restatable}{theorem}{kvgvgthm}\label{thm:vg-kvg}
    The Void Genesis model is equivalent under polynomial simulation to the $k$-Void Genesis model. 
\end{restatable}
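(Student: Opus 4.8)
The plan is to obtain Theorem~\ref{thm:vg-kvg} as the conjunction of Lemma~\ref{lem:kvg-vg} and Lemma~\ref{lem:vg-kvg}: equivalence under polynomial simulation means each model polynomial-efficiently simulates the other, so it suffices to take the configuration maps and representative-configuration sets supplied by the two lemma constructions and verify, in each direction, the four polynomial-efficiency criteria (polynomial species and rules, polynomial rule size, polynomial transition sequences, polynomial volume). No appeal to Theorem~\ref{thm:transitivity} is needed here; that result will be used later to chain this equivalence with the others.

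The $k$-VG-simulates-VG direction is immediate. By the construction for Lemma~\ref{lem:kvg-vg}, the simulating system is the \emph{same} CRN $(\species,\reactions)$ equipped with $\Zero(\lambda)=z$ for every $\lambda\in\species\setminus\{z\}$; we take $\cmap$ to be the identity and $[\![\config{C}\,]\!]=\{\config{C}\}$. Then the species set, rule set, maximum rule size, and every volume coincide exactly with the simulated system, and each macro-transition is a single step, so all four criteria hold with zero overhead. I would simply remark that $k$-VG dynamics under this $\Zero$ reproduce VG dynamics, which is the content of the lemma.

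For the VG-simulates-$k$-VG direction I would use the construction in Table~\ref{tab:vgreact_break}: a single leader $\single{G}$ commits to an applicable rule $\reaction_i$ via $\single{G}+\config{\RM_i}\to\single{R}_i^1+\config{\RM_i}$, then walks the reactants $r_1,\dots,r_{|\RM_i|}$ one at a time, using the transient global $z$ that VG auto-creates when a reactant reaches zero to mint the corresponding $z_{r_j}$ (or consuming it harmlessly when $r_j$ is unmapped), and finally restores $\single{G}$ while emitting $\config{\PM_i}$. The configuration map sends a configuration to $\undefConfig$ unless it contains exactly one $\single{G}$, no $\single{R}_i^j$ or $\single{C}_i^j$, and no stray $\single{z}$, and otherwise erases the scaffolding and returns the underlying $k$-VG configuration; $[\![\config{C}\,]\!]$ is the singleton containing the corresponding scaffolding-free configuration. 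Then: (i) the new species are $\single{G}$, the $z_\lambda$'s, and the $\mathcal{O}(\sum_i|\RM_i|)$ counters $\single{R}_i^j,\single{C}_i^j$, and there are $\mathcal{O}(|\RM_i|)$ new rules per original rule, both polynomial; (ii) every new rule has size $\mathcal{O}(1)$ beyond $|\RM_i|+|\PM_i|$; (iii) once $\single{G}$ selects $\reaction_i$ all of $\config{\RM_i}$ is still present (the rule was applicable, and the reactants act catalytically in the commitment step), so the remainder of the macro-transition is a forced chain of $\mathcal{O}(|\RM_i|)$ steps, giving a polynomial conditional expectation; (iv) the scaffolding adds only the leader token, at most one active counter token, and one transient $z$ on top of the $k$-VG volume, while the $z_\lambda$-counts track exactly what $k$-VG would hold, so the volume blowup is polynomial.

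The main obstacle is criterion (iii) and the correctness bookkeeping it rests on: I need to argue that the only nondeterminism inside a macro-transition is the initial choice of which applicable rule $\single{G}$ commits to, that each such choice leads to a genuine macro-transition onto a representative of the matching $k$-VG successor, and that every automatic $z$-increment triggered along the forced chain (including those caused by intermediate counter tokens dropping to zero) is consumed exactly where intended, so no spurious $z_{r_j}$ is produced and every intermediate configuration maps to $\cmap(\config{A})$ or $\undefConfig$. Pinning down precisely when the transient $z$ appears and is consumed along the chain is the delicate step; the other three criteria reduce to short counting arguments once the construction and this invariant are in hand.
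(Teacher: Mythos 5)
Your proposal matches the paper's (implicit) proof exactly: Theorem~\ref{thm:vg-kvg} is obtained directly as the conjunction of Lemma~\ref{lem:kvg-vg} (identity map, $\Zero(\lambda)=z$ for all $\lambda$) and Lemma~\ref{lem:vg-kvg} (the Table~\ref{tab:vgreact_break} sequential-reactant construction), with no appeal to transitivity. The one delicate point you flag --- accounting for the transient $z$ tokens that VG auto-creates when the scaffolding species $G$, $R_i^j$, $C_i^j$ themselves drop to zero, and ensuring each is consumed where intended --- is likewise left implicit by the paper, which gives only the construction sketch and no appendix proof for these two lemmas.
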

\subsection{Inhibitory CRNs}
\begin{restatable}{lemma}{icrnkvg}\label{lem:i > VG}
Inhibitory CRNs can simulate any given $k$-VG CRN under polynomial simulation.
\end{restatable}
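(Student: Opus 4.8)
The plan is to realize each $k$-VG step as a short, serialized macro-transition in an Inhibitory CRN built from two groups of rules. Introduce a fresh ``lock'' species $I$ and a fresh ``checker'' species $e_\lambda$ for every normal species $\lambda\in\species_1$. For each rule $\reaction_i=(\config{\RM_i},\config{\PM_i})$ of the given $k$-VG CRN, add the reaction $\config{\RM_i}\rightarrow\config{e}_{\Set{\RM_i}\cap\species_1}+\config{\PM_i}+\size{\Set{\RM_i}\cap\species_1}\cdot\single{I}$, inhibited by $I$. For every $\lambda\in\species_1$, add $\single{e}_\lambda+\single{I}+\single{\lambda}\rightarrow\single{\lambda}$ (uninhibited) and $\single{e}_\lambda+\single{I}\rightarrow\single{z}_\lambda$ inhibited by $\lambda$, where the right side of the latter is empty when $\lambda\notin\mathrm{dom}(\Zero)$. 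Define the configuration map $M$ to send a configuration to its projection onto $\species$ when no copy of $I$ and no checker species is present, and to $\undefConfig$ otherwise; then $[\![\config{C}\,]\!]$ is the single configuration agreeing with $\config{C}$ on $\species$ and zero on the fresh species. Both $M$ and $[\![\cdot]\!]$ are manifestly polynomial-time computable.

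Next I would pin down the dynamics. From a \emph{clean} configuration (no $I$, no checker) the only applicable rules are first-group rules, and a first-group rule for $\reaction_i$ is applicable exactly when $\config{\RM_i}$ fits the $\species$-projection --- matching $k$-VG applicability. Firing it immediately performs $\config{A}-\config{\RM_i}+\config{\PM_i}$ on the $\species$-coordinates (producing $\config{B_t}$) and deposits one $e_\lambda$ and one $I$ for each reactant species $\lambda\in\Set{\RM_i}\cap\species_1$; the $I$-inhibitor then blocks any further first-group firing until every lock is consumed. Each checker $e_\lambda$ is then consumed by exactly one of the two second-group rules, which together form a catalytic zero-test: rule $2$ fires (removing $e_\lambda$ with no other effect) when $\lambda$ is still present, and rule $3$ fires (turning $e_\lambda$ into $z_\lambda$, or into nothing if $\lambda$ is unmapped) precisely when $\config{B_t}[\lambda]=0$. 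Crucially, second-group rules leave every $\species_1$-count unchanged (rule $2$ is catalytic in $\lambda$, and rule $3$ only ever produces a species of $\species_2$), so the outcomes of the tests are order-independent; hence every maximal resolution of the checkers is confluent and returns the system to the clean configuration whose $\species$-projection is $\config{B_t}+\sum_{\lambda\in C}\single{z}_\lambda=\config{B}$, the $k$-VG successor of $\config{A}$ via $\reaction_i$.

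With this in hand, both simulation conditions follow. To show $T$ follows $T'$: if $\config{A'}\macrotrans_{T'}\config{B'}$ with $M(\config{A'})$, $M(\config{B'})$ both defined and distinct, the requirement that every intermediate image lie in $\{M(\config{A'}),\undefConfig\}$ forbids the derivation from passing through any clean configuration other than a copy of $M(\config{A'})$, and the $I$-inhibitor serializes first-group firings; so the derivation consists of at most one first-group firing with nontrivial net effect (any others are self-loops via null reactions) followed by its forced checker resolution --- a single $k$-VG step --- whence $M(\config{A'})\rightarrow M(\config{B'})$ in the $k$-VG CRN. To show $T'$ models $T$: given $\config{A}\rightarrow\config{B}$ via $\reaction_i$ in the $k$-VG CRN, start from the unique clean representative of $\config{A}$, fire the first-group rule for $\reaction_i$, and resolve the checkers according to which reactant species survived; every intermediate contains a copy of $I$ and hence maps to $\undefConfig$, and the run terminates at the clean representative of $\config{B}$, the unique element of $[\![\config{B}\,]\!]$.

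Polynomial efficiency is then routine bookkeeping: the construction adds $1+\size{\species_1}$ species and $\size{\reactions}+2\size{\species_1}$ rules; a first-group rule for $\reaction_i$ has size at most $2\size{\config{\RM_i}}+\size{\config{\PM_i}}+\size{\Set{\RM_i}}$, which is linear in the size of $\reaction_i$, and second-group rules have size $3$; each simulating macro-transition is one first-group transition plus at most $\size{\Set{\RM_i}}$ forced checker transitions, so its expected length is polynomial in the number of species and rules; and the only volume growth inside a macro-transition is the $2\size{\Set{\RM_i}\cap\species_1}\le 2\size{\config{\RM_i}}$ extra lock and checker tokens, which stays within the (fixed) maximum rule size and hence within the allowed bound on volume. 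I expect the ``follows'' step to be the main obstacle: rigorously ruling out a clean-to-clean macro-transition that chains two $k$-VG-effecting steps, which depends jointly on the $I$-inhibitor serializing first-group rules, on the macro-transition definition's restriction on intermediate images, and on the confluence of checker resolution that makes the simulating macro-transition well defined.
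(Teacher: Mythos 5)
Your construction is essentially identical to the paper's: the same lock species $I$ inhibiting the main reaction, the same per-species checkers $e_\lambda$ resolved by a catalytic rule versus a $\lambda$-inhibited rule producing $z_\lambda$, the same clean/unclean configuration map keyed on the absence of $I$, and the same follows/models/efficiency arguments. The only differences are cosmetic refinements (restricting checkers to $\species_1$, explicitly handling unmapped species, and adding the no-checker condition to cleanliness, which is redundant on reachable configurations since locks and checkers are created and consumed in matched pairs).
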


\para{Construction}
Given a $k$-VG CRN $\crn{C_{KVG}} = ((\species, \reactions), Z_{\emptyset})$, we construct an Inhibitory CRN $\crn{C_{IC}} = ((\species', \reactions'), \mathcal{I})$. Let $\species' = \species \cup \{ I, e_{\lambda_1}, \ldots, e_{\lambda_{\size{\species_1}}} \}$ where $I$ and $e_{\lambda_1}, \ldots, e_{\lambda_{\size{\species_1}}}$ check if any species of $\species_1$ used in a simulated reaction have a resulting count of zero.
Figure \ref{fig:IC->KVG} shows an example of an Inhibitory CRN simulating a $k$-VG CRN with $\species=\{a,b\}$ and a reaction that consumes the species $a$.

\begin{table}[t]
    \centering
    \iCRNiffKVG
    \caption{(a) Reactions for an Inhibitory CRN to simulate any given k-VG CRN. (b) Reactions for a k-VG CRN to simulate any given Inhibitory CRN.}\label{tab:VG <> iCRN}\label{tab:iCRN > VG}\label{tab:VG > i}
    \vspace{-.2cm}
\end{table}

\begin{figure}[t]
\vspace{-.2cm}
    \begin{subfigure}{.56\textwidth}
        \centering
        \includegraphics[width = \textwidth]{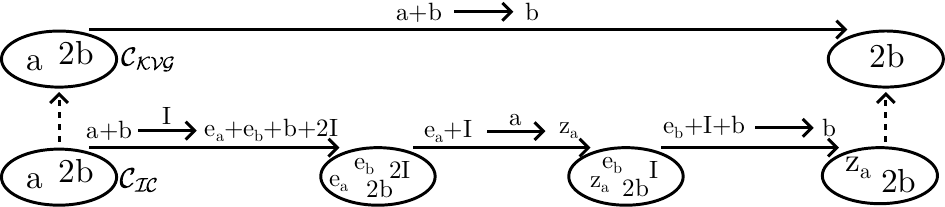}
        \subcaption{$\crn{C_{IC}}\rightarrow\crn{C_{KVG}}$}\label{fig:IC->KVG}
        \label{fig:icrn_vg_construction}
    \end{subfigure}
    \begin{subfigure}{.43\textwidth}
        \centering
        \includegraphics[width = \textwidth]{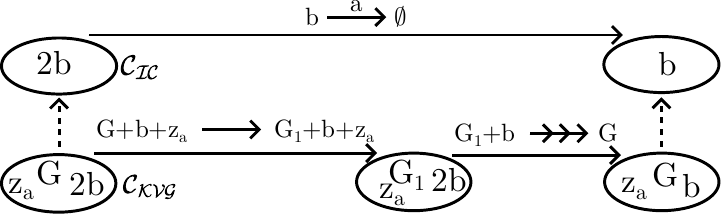}
        \subcaption{$\crn{C_{KVG}}\rightarrow\crn{C_{IC}}$}\label{fig:KVG->IC}
        \label{fig:vg_icrn_construction}
    \end{subfigure}
    \vspace{-.4cm}
    \caption{ (a) A rule application in the \emph{simulated} $k$-void genesis CRN $\crn{C_{KVG}}$ and the equivalent rule application sequence in the \emph{simulating} inhibitory CRN $\crn{C_{IC}}$. The dashed arrows represent mapping a configuration of $\crn{C_{IC}}$ to a configuration of $\crn{C_{KVG}}$. (b) The other direction. }\label{fig:KVG-IC}
    \vspace{-.2cm}
\end{figure}

\begin{restatable}{lemma}{kvgicrn} \label{lem:VG > i}
k-VG CRNs can simulate any Inhibitory CRN under polynomial simulation.
\end{restatable}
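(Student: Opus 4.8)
\para{Proof plan}
The plan is to instantiate the construction in Table~\ref{tab:VG > i}(b). Given an Inhibitory CRN $\crn{C_{IC}} = ((\species, \reactions), \mathcal{I})$, build a $k$-VG CRN whose species are the species of $\species$, one global leader $\single{G}$, a per-rule leader $\single{G}_i$ for each $\reaction_i$, the zero species $\single{z}_\lambda$ for every $\lambda\in\species$, and the intermediate species $\single{P}_i^j$ of the sequential product-creation subroutine $\config{\RM_i}\SeqProds\config{\PM_i}$ of Table~\ref{tab:prod_break}(b); the mapping $\Zero$ is defined on exactly the species of $\species$ with $\Zero(\lambda)=z_\lambda$, so the leader species are ordinary (non-zero-counting) species. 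The rules are those of Table~\ref{tab:VG > i}(b): rule~$1$, $\single{G}+\config{\RM_i}+\config{Z_i}\to\single{G}_i+\config{\RM_i}+\config{Z_i}$ with $\config{Z_i}=\sum_{\lambda\in\mathcal{I}(\reaction_i)}\single{z}_\lambda$, lets the leader commit to simulating $\reaction_i$ only when its reactants are present \emph{and} a zero species for each inhibitor of $\reaction_i$ is present; rule~$2$ then runs $\config{\RM_i}\SeqProds\config{\PM_i}$, which consumes $\config{\RM_i}$ in its first step, emits the tokens of $\config{\PM_i}$ one at a time, and returns the leader to $\single{G}$. The configuration map $\cmap$ sends a configuration to $\undefConfig$ unless it is \emph{clean} --- it has exactly one $\single{G}$, no $\single{G}_i$ or $\single{P}_i^j$, and for every $\lambda\in\species$ exactly one $\single{z}_\lambda$ when the count of $\lambda$ is $0$ and none otherwise --- in which case it returns the underlying iCRN configuration (delete all species outside $\species$); accordingly $[\![\config{C}\,]\!]$ is the single clean configuration over $\config{C}$, and $\cmap$ and $[\![\cdot]\!]$ are plainly polynomial-time computable.

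The technical heart is the following invariant: along every transition sequence issued from a clean configuration, after each single step the configuration is either clean or maps to $\undefConfig$, and $\single{z}_\lambda$ has count exactly $1$ precisely when the current count of $\lambda$ is $0$. I would prove this by walking through the three phases of a macrotransition. Rule~$1$ is catalytic and changes nothing but $\single{G}\mapsto\single{G}_i$. The first step of $\SeqProds$, $\single{G}_i+\config{\RM_i}\to\single{P}_i^1$, removes the reactants, and by Definition~\ref{def:kvdynamics} exactly those reactants driven to $0$ gain a fresh $\single{z}$, re-establishing the count/zero correspondence; a species already absent in the starting clean configuration cannot be a reactant, so no $\single{z}_\lambda$ is pushed to count $2$ here. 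Each product step creates the next token $p$ either by the ``remove zero'' rule --- applicable exactly when $p$ is currently $0$, hence $\single{z}_p$ is present by the invariant and is consumed as $p$ becomes positive --- or by the ``no zero'' rule, applicable exactly when $p$ is already positive (using a copy of $p$ as witness); either way the correspondence is preserved, and any $\single{z}_p$ created earlier in this macrotransition because $p$ was a reactant driven to $0$ is consumed again the first time $p$ reappears as a product. Two consequences: no intermediate leader state $\single{P}_i^j$ is ever stuck (for the next product token either its zero species or a copy of it is available), so the macrotransition reaches the final rule $\single{P}_i^{\size{\config{\PM_i}}+1}\to\single{G}$; and it terminates in exactly the clean configuration over $\config{B}=\config{A}-\config{\RM_i}+\config{\PM_i}$, i.e.\ in $[\![\config{B}\,]\!]$.

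Both simulation conditions then follow. For ``$\crn{C_{KVG}}$ models $\crn{C_{IC}}$'': if $\config{A}\rightarrow_{\crn{C_{IC}}}\config{B}$ via $\reaction_i$, then $\config{\RM_i}\le\config{A}$ and $\config{A}[\lambda]=0$ for all $\lambda\in\mathcal{I}(\reaction_i)$, so in the representative configuration $\config{A}'$ the reactants $\config{\RM_i}$ and the zero species $\config{Z_i}$ are all present, rule~$1$ fires, and by the previous paragraph the macrotransition runs into $[\![\config{B}\,]\!]$. For ``$\crn{C_{IC}}$ follows $\crn{C_{KVG}}$'': every non-clean configuration maps to $\undefConfig$, and every rule other than rule~$1$ requires a non-leader species ($\single{G}_i$ or $\single{P}_i^j$), so the only macrotransitions between $\cmap$-defined configurations are full rule-$i$ cycles; a cycle for $\reaction_i$ can start only when $\config{Z_i}$ is present, which by the invariant forces every inhibitor of $\reaction_i$ to $0$ while $\config{\RM_i}$ is present, so $\reaction_i$ is applicable in $\crn{C_{IC}}$ and its net effect on $\species$ is exactly $-\config{\RM_i}+\config{\PM_i}$; hence $\cmap(\config{A}')\rightarrow_{\crn{C_{IC}}}\cmap(\config{B}')$. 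Polynomial efficiency is routine bookkeeping: $\size{\species'}=\mathcal{O}(\size{\reactions}\cdot\max_i\size{\config{\PM_i}})$, $\size{\reactions'}=\mathcal{O}(\size{\species}+\size{\reactions})$, the largest rule has size $\mathcal{O}(\size{\config{\RM_i}}+\size{\mathcal{I}(\reaction_i)}+\size{\config{\PM_i}})$, each macrotransition uses $\mathcal{O}(\size{\config{\PM_i}})$ single steps, and a representative configuration has volume at most $1+\size{\species}$ above its image, with all intermediate volumes bounded the same way. Combined with Lemma~\ref{lem:i > VG} and Theorem~\ref{thm:transitivity}, this places Inhibitory CRNs in the equivalence hub.

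The step I expect to be the main obstacle is exactly this zero-species bookkeeping --- in particular the ``count at most $1$'' half of the invariant when a species plays several roles at once (a catalyst, or a species consumed to $0$ and then rebuilt by the same rule). This is precisely why the construction routes products through the sequential $\SeqProds$ subroutine rather than emitting $\config{\PM_i}$ in a single rule: it gives each product token its own step in which a stale $\single{z}_p$ is cleaned up before $p$ becomes positive. The proof must track the interplay between $k$-VG's automatic zero-genesis and the deliberate zero-consumption of the ``remove zero'' rules carefully, and one must make the definition of $\cmap$ and $[\![\cdot]\!]$ tight enough that every partially-completed $\SeqProds$ state genuinely maps to $\undefConfig$, so that no spurious macrotransition between $\cmap$-defined configurations can arise.
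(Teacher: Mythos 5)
Your proposal is correct and follows essentially the same route as the paper's proof: the identical construction from Table~\ref{tab:VG > i}(b) (leader $\single{G}$ selecting $\reaction_i$ gated on the catalytic presence of $\config{Z_i}$, followed by the sequential $\SeqProds$ product phase), the same $G$-based configuration map, and the same follows/models/polynomial-efficiency structure. Your only deviation is making the map undefined on non-``clean'' configurations rather than on all configurations lacking $\single{G}$, and making the zero-species invariant explicit --- the paper relies on the same invariant but states it more informally.
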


\para{Construction}
Given an Inhibitory CRN $\crn{C_{IC}} = ((\species, \reactions), \mathcal{I})$, we construct a $k$-VG CRN $\crn{C_{KVG}} = ((\species', \reactions'), Z_{\emptyset})$. Let $\species' = \species \cup \{G, G_1, \ldots, G_{\size{\reactions}}, P_i^1, \ldots, P_{\size{\reactions}}^{\size{\PM_{\size{\reactions}}}+1}, z_{\lambda_1}, \ldots, z_{\lambda_{\size{\species}}}\}$. Let $Z_i$ represent the set of $z$ species corresponding to inhibitors of a reaction $\reaction_i$. $G$ is consumed to produce $G_i$ when selecting reaction $\reaction_i \in \reactions$, as long as no inhibitor species of $\reaction_i$ are presented (indicted by the complete presence of $Z_i$). $G_i$ then applies $\reaction_i$ and restores $G$ back.
Figure $\ref{fig:KVG->IC}$ shows an example of a $k$-VG CRN simulating an Inhibitory CRN with $\species=\{a,b\}$ and a reaction that consumes $b$ if $a$ is absent.

\begin{restatable}{theorem}{vgicrnthm} \label{thm:vgicrnthm}
    The Void Genesis model is equivalent under polynomial simulation to the Inhibitory model.    
\end{restatable}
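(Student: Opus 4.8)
The plan is to obtain this theorem purely by composition, using the two directional lemmas already established (Lemma~\ref{lem:i > VG} and Lemma~\ref{lem:VG > i}), the Void Genesis / $k$-Void Genesis equivalence (Theorem~\ref{thm:vg-kvg}), and transitivity of polynomial simulation (Theorem~\ref{thm:transitivity}). Unpacking the definition of \emph{equivalent under polynomial simulation}, it suffices to show separately that the Inhibitory model polynomially simulates $\crn{C_{VG}}$ and that $\crn{C_{VG}}$ polynomially simulates the Inhibitory model, and then invoke the definition of polynomial equivalence.

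For the first direction, Theorem~\ref{thm:vg-kvg} (via Lemma~\ref{lem:kvg-vg} and its efficiency analysis) gives that $k$-VG polynomially simulates $\crn{C_{VG}}$, and Lemma~\ref{lem:i > VG} gives that the Inhibitory model polynomially simulates $k$-VG. Applying Theorem~\ref{thm:transitivity} with $T_1 = \crn{C_{VG}}$, $T_2 = \crn{C_{KVG}}$, $T_3 = \crn{C_{IC}}$ then yields that $\crn{C_{IC}}$ polynomially simulates $\crn{C_{VG}}$. For the second direction, Lemma~\ref{lem:VG > i} gives that $k$-VG polynomially simulates the Inhibitory model, and Theorem~\ref{thm:vg-kvg} (via Lemma~\ref{lem:vg-kvg}) gives that $\crn{C_{VG}}$ polynomially simulates $k$-VG; applying transitivity again with $T_1 = \crn{C_{IC}}$, $T_2 = \crn{C_{KVG}}$, $T_3 = \crn{C_{VG}}$ yields that $\crn{C_{VG}}$ polynomially simulates the Inhibitory model. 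With both directions in hand, the two models are polynomially equivalent by definition, which is the statement of the theorem.

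The substantive content has all been discharged in the supporting lemmas, so the only thing to be careful about here is that the cited results genuinely satisfy the hypotheses of transitivity: Theorem~\ref{thm:transitivity} preserves polynomial simulation only when each input simulation meets all four efficiency conditions (polynomial species and rules, polynomial rule size, polynomial transition sequences, polynomial volume). The main checkpoint is therefore to confirm that the constructions of Table~\ref{tab:iCRN > VG} and the $\crn{C_{VG}}\leftrightarrow\crn{C_{KVG}}$ simulation really are polynomial efficient --- that the auxiliary leader/counter families ($G_i$, $P_i^j$, $R_i^j$, $e_\lambda$, $z_\lambda$) inflate the species and rule counts only polynomially, that the sequential reactant/product-breakdown macrotransitions of Table~\ref{tab:react_break} take a number of steps polynomial in $|\species| + |\reactions|$, and that every intermediate configuration has polynomially bounded volume. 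Granting those facts, which the preceding lemmas assert, the chained simulations are each polynomial efficient, so the two compositions above go through and the theorem follows immediately.
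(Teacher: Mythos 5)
Your proposal is correct and matches the paper's own argument: the paper likewise derives the theorem by combining Lemma~\ref{lem:i > VG} and Lemma~\ref{lem:VG > i} (giving polynomial equivalence of the Inhibitory and $k$-VG models), Theorem~\ref{thm:vg-kvg} (equivalence of VG and $k$-VG), and the Transitivity Theorem~\ref{thm:transitivity}. Your additional remark about verifying the four efficiency conditions before invoking transitivity is a sound caution, but those checks are already discharged in the cited lemmas, just as you note.
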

\subsection{Coarse-Rate CRNs}

\begin{restatable}{lemma}{crkvg}\label{lem:cr > VG}
Coarse-Rate CRNs can simulate any given k-VG CRN under polynomial simulation.
\end{restatable}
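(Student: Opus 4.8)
The plan is to build a Coarse-Rate CRN $\crn{C}_{CR}$ that keeps a single global leader $\single{G}$ to serialize the simulated reactions and uses the fast/slow (rank-$2$/rank-$1$) discipline to detect exactly which species the last simulated reaction drove to zero. From a \emph{clean} configuration --- one $\single{G}$ and no copies of the auxiliary species $\single{x},\single{y}$ or of any evidence token $\single{e}_\lambda$ --- a fast rule~$1$ consumes $\single{G}$ and the reactants $\config{\RM_i}$ of some applicable simulated rule $\reaction_i$, and produces $\single{x}$, the products $\config{\PM_i}$, and one token $\single{e}_\lambda$ for each normal reactant species $\lambda\in\Set{\RM_i}\cap\species_1$. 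While $\single{x}$ is present, fast rule~$2$ deletes every $\single{e}_\lambda$ whose species still has positive count; because the $\species$-counts are frozen once rule~$1$ has fired, $\single{e}_\lambda$ survives exactly when $\config{A}[\lambda]\neq 0$ and the post-reaction count of $\lambda$ is $0$. The slow rule~$4$, $\single{x}\to\single{y}$, can fire only once no fast rule~$2$ is applicable, i.e.\ once the surviving tokens are precisely those; in the $\single{y}$ phase the fast rule~$3$ turns each surviving $\single{e}_\lambda$ into $\single{z}_\lambda$ (and simply consumes $\single{e}_\lambda$ when $\lambda$ is not mapped by $\Zero$), and only after all of them are gone can the slow rule~$5$, $\single{y}\to\single{G}$, restore the leader, yielding a clean configuration whose $\species$-restriction is the $k$-VG successor of the starting one.

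I would take $\cmap$ to send a $\crn{C}_{CR}$ configuration to its restriction to $\species$ when it is clean and to $\undefConfig$ otherwise, so $[\![\config{C}\,]\!]=\{\config{C}+\single{G}\}$ is a singleton computable in polynomial time. The crux of the argument is that, after the first fast rule~$1$, a macro-transition $\config{A}'\macrotrans_{T'}\config{B}'$ between clean configurations is completely forced: the total count of $\{\single{G},\single{x},\single{y}\}$-tokens is invariantly $1$ (so $\single{G}$ is held throughout, blocking any second rule~$1$, and every intermediate configuration maps to $\undefConfig$), and the priority constraints dictate the order --- all applicable rule~$2$; then $\single{x}\to\single{y}$; then all applicable rule~$3$; then $\single{y}\to\single{G}$ --- since $\single{x}\to\single{y}$ is disabled while a rule~$2$ is applicable and $\single{y}\to\single{G}$ is disabled while a rule~$3$ is applicable. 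Hence the only clean configuration a macro-transition can reach is $\config{B}+\single{G}$ where, by the frozen-counts remark, $\config{B}$ is exactly the $k$-VG image of $\config{A}$ under $\reaction_i$; this gives ``$T$ follows $T'$''. Conversely, if $\config{A}\rightarrow_T\config{B}$ via $\reaction_i$, then from the unique representative $\config{A}+\single{G}$ the corresponding rule~$1$ is applicable, and firing it and then the forced remainder reaches $\config{B}+\single{G}\in[\![\config{B}\,]\!]$, giving ``$T'$ models $T$''.

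For polynomial efficiency: $\species'=\species\cup\{G,x,y\}\cup\{e_\lambda:\lambda\in\species_1\}$ has size $\mathcal{O}(\size{\species})$, and there are $\size{\reactions}+2\size{\species_1}+2=\mathcal{O}(\size{\reactions}+\size{\species})$ rules; only rules~$1$ are non-constant in size, each of size $\mathcal{O}(\size{\config{\RM_i}}+\size{\Set{\RM_i}}+\size{\config{\PM_i}})$, linear in the maximum rule size of $T$. A macro-transition uses one rule~$1$ step, at most $\size{\Set{\RM_i}}\le\size{\species}$ rule~$2$ steps, one slow step, at most $\size{\Set{\RM_i}}$ rule~$3$ steps, and one slow step --- so $\mathcal{O}(\size{\species})$ transitions --- and since every rule applicable at any point of a macro-transition belongs to this forced sequence, uniform sampling of applicable rules yields the same expected bound. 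Finally, a representative configuration has volume $\size{\config{C}}+1$, and since the volume never increases after rule~$1$ (rules~$2$--$5$ are catalytic or one-for-one), every intermediate configuration has volume $\mathcal{O}(\size{\cmap(\config{A}')}+\size{\cmap(\config{B}')})$.

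The main obstacle I expect is the ``follows'' direction: one must show the rank discipline admits no shortcut --- in particular that $\single{G}$ cannot reappear before every surviving evidence token has been converted, and that no $\single{z}_\lambda$ can ever be produced without a surviving $\single{e}_\lambda$ --- together with the bookkeeping for reactant species that are simultaneously products (catalysts, whose tokens are removed during the $\single{x}$ phase), for unmapped normal species (whose $\single{e}_\lambda$ are consumed in the $\single{y}$ phase without producing a zero species), and for reactions whose reactant multiset contains zero species (which are never tracked, hence get no evidence token).
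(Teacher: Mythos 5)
Your construction is essentially identical to the paper's: the same five-rule fast/slow gadget with leader $G$, tokens $x,y$, and evidence species $e_\lambda$, and the same phased argument that the rank discipline forces all rule-$2$ applications before $x\to y$ and all rule-$3$ applications before $y\to G$. The only (harmless) divergence is your stricter configuration map --- the paper sends a configuration to $\undefConfig$ only when some $e_\lambda$ is present, so its representative sets also contain the $x$- and $y$-phase configurations and its ``models'' argument must check that the $G$-configuration is reachable from each of them, a case analysis your singleton representatives sidestep.
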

\para{Construction}
Given a $k$-VG CRN $\crn{C_{KVG}} = ((\species,\reactions),Z_\emptyset)$, we construct a Course-Rate CRN $\crn{C_{CR}} = ((\species',\reactions'),rank)$. Let $\species' = \species \cup \{ x, y, G, e_{\lambda_1}, \ldots, e_{\lambda_{\size{\species_1}}} \}$ where $G$ is used to select a reaction and species $x$ and $y$ check if any non-zero species in $\species$ have a count of zero. Finally, $e_{\lambda_1}, \ldots, e_{\lambda_{\size{\species_1}}}$ are used to check the counts of the $\lambda_i \in \species_1$ that are relevant to the simulated reaction, where $\species_1$ is the set of non-zero species in $\species$.
Figure \ref{fig:cr_vg_construction} shows an example of a Coarse-Rate CRN simulating a $k$-VG CRN with $\species=\{a,b\}$ and a reaction that consumes $a$.

\begin{table}[t]
\vspace{-.2cm}
    \centering
    \CRiffVG
    \caption{(a) Reactions for a Coarse-Rate CRN to simulate any given $k$-VG CRN. (b) Reactions for a $k$-VG CRN to simulate any given Coarse-Rate CRN.}\label{tab:cr > VG}\label{tab:VG > cr}
    \vspace{-.2cm}
\end{table}

\begin{figure}[t]
\vspace{-.2cm}
    \centering
    \includegraphics[width=.9\textwidth]{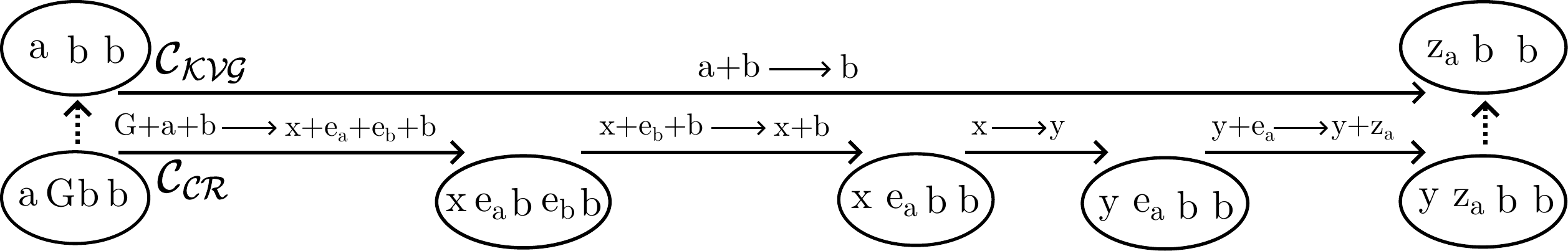}
    \vspace{-.2cm}
    \caption{A rule application in the \emph{simulated} $k$-void genesis CRN $\crn{C_{KVG}}$ and the equivalent rule application sequence in the \emph{simulating} Coarse-Rate CRN $\crn{C_{CR}}$. }
    \label{fig:cr_vg_construction}
    \vspace{-.2cm}
\end{figure}

\begin{restatable}{lemma}{kvgcr} \label{lem:VG > cr}
k-VG CRNs can simulate any given Coarse-Rate CRN under polynomial simulation. 
\end{restatable}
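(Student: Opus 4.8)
The plan is to take as $\crn{C_{KVG}}$ the $k$-VG CRN whose reactions are those of Table~\ref{tab:VG > cr}(b), fixing the partial zero-map $\Zero$ to send every original species $\lambda\in\species$ of the given Coarse-Rate CRN and every fast-reaction gate species $\single{g}_i$ ($1\le i\le\size{\reactions^2}$) to a fresh zero species, while leaving all bookkeeping species ($\single{G}$, the $\single{r}_j$, $\single{s}$, $\single{t}$, and the counter species used inside $\SeqProds$) outside the domain of $\Zero$, so that no spurious zero species can ever be created. A single leader is then either in \emph{fast mode} ($\single{G}$ and every gate present) or \emph{slow mode} ($\single{s}$ present). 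In fast mode the leader nondeterministically either applies some fast reaction $\reaction_i^2$ via rule~$1$ (which needs $\single{g}_i$ and the reactants $\config{\RM_i}$) and then runs the short counter sweep of rules~$2$--$4$ to restore $\single{G}$ and every gate and to consume any $\single{z}_{g_j}$ that appeared, or it deactivates a gate $\single{g}_i$ via rule~$1b$, which is enabled only when some reactant $\lambda\in\Set{\RM_i}$ is at zero (witnessed by the zero species $\single{z}_\lambda$) --- a certificate that $\reaction_i^2$ is presently inapplicable. Once all gates are deactivated the $k$-VG dynamics have produced every $\single{z}_{g_1},\dots,\single{z}_{g_{\size{\reactions^2}}}$, rule~$5$ consumes them to enter slow mode, rule~$7$ applies one slow reaction $\reaction_i^1$ through the sequential-product technique $\SeqProds$ of Table~\ref{tab:react_break}(b), and rule~$6$ returns to fast mode with every gate replenished.

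I would define the configuration map $\cmap$ to restrict a $k$-VG configuration to $\species$, but only on \emph{clean} configurations --- those in fast mode with $\single{G}$, all gates, no bookkeeping species, and a zero species present for exactly the $\lambda\in\species$ whose count is $0$, plus the analogous clean slow-mode configurations --- and to output $\undefConfig$ on every other configuration; then $[\![\config{C}\,]\!]$ is the (nonempty, polynomial-time computable) set of clean configurations restricting to $\config{C}$. For ``$\crn{C_{CR}}$ follows $\crn{C_{KVG}}$'', every clean-to-clean macro transition is either (i) a rule-$1$ application of a fast reaction $\reaction_i^2$ followed by the sweep, which is a legal Coarse-Rate step because rule~$1$ forces $\config{\RM_i}\le\cmap(\config{A})$ and a rank-$2$ reaction carries no priority condition, or (ii) a sequence of gate deactivations, then rule~$5$, a rule-$7$ slow reaction $\reaction_i^1$, and rule~$6$, which is legal because rule~$7$ forces the reactants of $\reaction_i^1$ to be present and, since the original-species counts are unchanged until rule~$7$ runs, each gate deactivation witnesses a reactant of the corresponding fast reaction at $0$ in $\cmap(\config{A})$, so \emph{no} fast reaction is applicable there --- exactly the side-condition for a rank-$1$ step. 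Conversely, for ``$\crn{C_{KVG}}$ models $\crn{C_{CR}}$'', a fast step is matched by rule~$1$ plus the sweep, and a slow step by deactivating every gate (possible since inapplicability of each fast reaction forces one of its reactants to $0$, hence its zero species occurs in every representative), then rules~$5$, $7$, and $6$.

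The step I expect to require the most care is an invariant: in every reachable clean configuration, and at the clean endpoints of every macro transition, a zero species $\single{z}_\lambda$ for an original species $\lambda$ is present \emph{iff} the count of $\lambda$ is $0$. Creation of $\single{z}_\lambda$ is automatic and correct under the $k$-VG dynamics; the danger is a \emph{stale} $\single{z}_\lambda$ lingering after $\lambda$ has been regenerated by a later reaction, since rule~$1b$ would then wrongly deactivate the gate of an applicable fast reaction, letting $\crn{C_{KVG}}$ slip into slow mode where $\crn{C_{CR}}$ must fire a fast reaction --- violating ``follows''. This is exactly why reaction application is funneled through $\SeqProds$, whose ``remove zero'' substep consumes $\single{z}_p$ precisely when a product $p$ is (re)created; I would prove by induction on macro-transition structure both that clean states are reached only with accurate zero species and that all intermediate (non-clean) configurations map to $\undefConfig$, so they can never certify an illegal step. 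The polynomial-efficiency clauses are then routine bookkeeping: $\crn{C_{KVG}}$ adds $O\!\bigl(\size{\species}+\sum_i(\size{\RM_i}+\size{\PM_i})\bigr)$ species and $O\!\bigl(\size{\reactions}\cdot\max_i(\size{\RM_i}+\size{\PM_i})+\size{\reactions^2}\bigr)$ rules, each of size $O\!\bigl(\max_i(\size{\RM_i}+\size{\PM_i})+\size{\reactions^2}\bigr)$; each simulated step uses $O\!\bigl(\size{\reactions^2}+\max_i(\size{\RM_i}+\size{\PM_i})\bigr)$ transitions (one gate sweep and one counter walk, each of bounded length); and the volume grows only by the constantly many auxiliary tokens, so it stays polynomially bounded in the simulated volume.
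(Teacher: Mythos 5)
Your construction, configuration map, and case analysis mirror the paper's own proof of this lemma almost exactly: the same reaction table, the same zero-map on $\species\cup\{g_1,\dots,g_{\size{\reactions^2}}\}$ with the bookkeeping species left unmapped, the same fast-mode/slow-mode alternation with the $r_j$ gate-restoring sweep, and the same two-case ``follows''/``models'' argument. (The paper's map is defined whenever $\single{G}$ is present rather than only on your ``clean'' configurations, but either choice supports the macro-transition bookkeeping.) You also deserve credit for isolating the zero-species accuracy invariant as the delicate step; the paper's writeup does not discuss it explicitly.

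However, the resolution you offer for that invariant does not go through as stated, for two reasons. First, only the slow reactions (rule~7) are funneled through $\SeqProds$; the fast reactions of rule~1 emit $\config{\PM_i}$ directly, so if a fast reaction regenerates a species $p$ whose count was previously zero, the stale $\single{z}_p$ survives into the next clean configuration, and rule~1b can then wrongly deactivate the gate of an applicable fast reaction --- precisely the ``slipping into slow mode'' failure you describe, which would let the simulator take a slow step where the Coarse-Rate CRN must take a fast one, breaking ``follows''. Your induction on macro-transition structure would therefore fail at the fast-reaction case. Second, your ``models'' argument for slow steps asserts that inapplicability of a fast reaction forces one of its reactants to zero; this fails when a reactant appears with multiplicity greater than one (e.g.\ $2a\rightarrow b$ with a single copy of $a$ present), in which case no $\single{z}_a$ exists, the corresponding gate can never be deactivated, and slow mode is unreachable even though the Coarse-Rate CRN must fire a slow reaction. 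Both defects are inherited from the paper's table (whose own proof quietly assumes them away, e.g.\ by positing that all $\single{z}_{g_i}$ are already present), so your proposal is faithful to the intended argument; but to actually discharge the invariant you would need to route fast-reaction products through $\SeqProds$ as well (or add explicit $\single{z}$-cleanup rules) and to detect insufficient, rather than merely zero, reactant counts, e.g.\ via the sequential reactant-checking technique $\SeqReacts$.
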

\para{Construction}
Given a Coarse-Rate CRN $\crn{C_{CR}} = ((\species, \reactions), rank)$, we construct a $k$-VG CRN $\crn{C_{KVG}} = ((\species', \reactions'), Z_{\emptyset}')$.
Let $\species' = \species
\cup \{ G, s, t\}
\cup \{g_i, z_{g_i},
r_1,\ldots, r_{\size{\reactions^2}+1},
z_{\lambda_1}, \ldots, z_{\lambda_{\size{\Lambda}}},
P_1^1, \ldots$, $ P_{\size{\reactions^1}}^{\size{\PM_{\size{\reactions^1}}} + 1} :
i\in\{1,\ldots,\size{\reactions^2}\}\}$. $G$ and a random $g_i$ species is consumed to select a random fast reaction $\reaction_i^2\in\reactions^2$, and any missing $g_i$ species is restored with $r_i$ species. Alternatively, if no $g_i$ species is present, then $G$ and all $z_{g_i}$ species are consumed to produce $s$. $s$ selects a random slow reaction $\reaction_i^1\in\reactions^1$ and produces $t$, which restores $G$ and all $g_i$ species. Figure \ref{fig:vg_cr_construction} shows an example of a $k$-VG CRN simulating a Course-Rate CRN with $\Lambda=\{a,b\}$ and a fast reaction that deletes $a$ and a slow reaction that deletes $b$.

\begin{figure}[t]
\vspace{-.2cm}
    \centering
    \includegraphics[width=.8\textwidth]{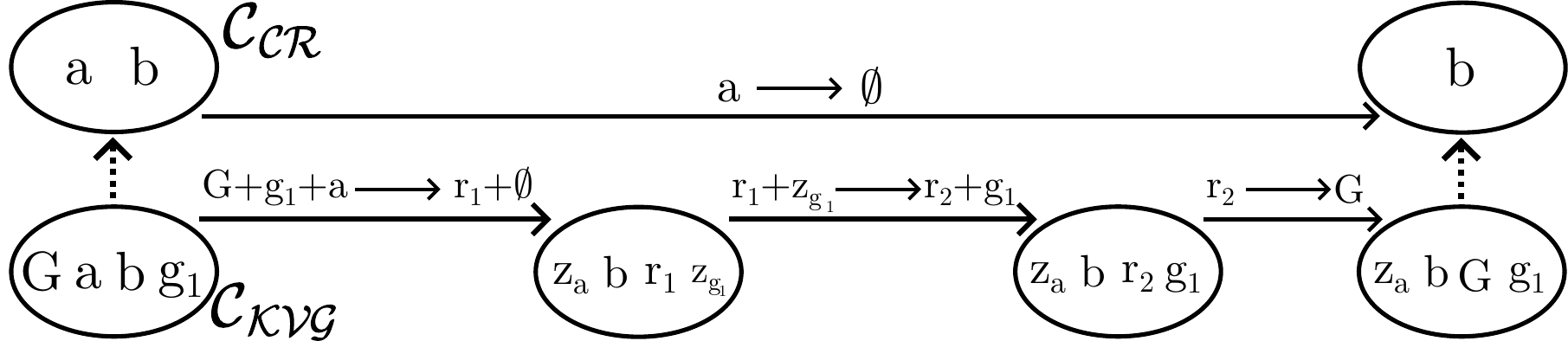}
    \vspace{-.2cm}
    \caption{A rule application in the \emph{simulated} Coarse-Rate CRN $\crn{C_{CR}}$ and the equivalent rule application sequence in the \emph{simulating} $k$-void genesis CRN  $\crn{C_{KVG}}$.}
    \label{fig:vg_cr_construction}
    \vspace{-.2cm}
\end{figure}

\begin{restatable}{theorem}{vgcrthm}
    The Void Genesis model is equivalent under polynomial simulation to the Course-Rate model.    
\end{restatable}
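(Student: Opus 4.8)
The statement is a two-way polynomial equivalence, so the plan is to obtain each direction by routing through the $k$-Void Genesis model and then invoking transitivity. For ``Coarse-Rate simulates Void Genesis'', I would combine Lemma~\ref{lem:kvg-vg} ($k$-VG simulates VG) with Lemma~\ref{lem:cr > VG} (Coarse-Rate simulates $k$-VG) and apply Theorem~\ref{thm:transitivity}. For ``Void Genesis simulates Coarse-Rate'', I would combine Lemma~\ref{lem:VG > cr} ($k$-VG simulates Coarse-Rate) with Lemma~\ref{lem:vg-kvg} (VG simulates $k$-VG) and again apply Theorem~\ref{thm:transitivity}. (Equivalently, since Theorem~\ref{thm:vg-kvg} already establishes VG $\equiv$ $k$-VG, the theorem is just that equivalence composed on each side with Lemmas~\ref{lem:cr > VG} and~\ref{lem:VG > cr}.) Thus the theorem itself is a short corollary; the substantive content lives in Lemmas~\ref{lem:cr > VG} and~\ref{lem:VG > cr}, whose constructions are given in Table~\ref{tab:cr > VG}, and the rest of the plan is devoted to those.

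To prove Lemma~\ref{lem:cr > VG} I would verify Table~\ref{tab:cr > VG}(a). The configuration map $\cmap$ would send a $\crn{C_{CR}}$ configuration containing the leader $\single{G}$ and none of the auxiliary species $\single{x},\single{y},e_{\lambda}$ to the $k$-VG configuration obtained by projecting those auxiliaries away, and send every other reachable configuration to $\undefConfig$; the representative set of a $k$-VG configuration $\config{C}$ is then the singleton $\{\config{C}+\single{G}\}$. The correctness argument would hinge on rank priority: every rank-$2$ (fast) rule must be exhausted before any rank-$1$ (slow) rule fires, which forces each macro transition into clean phases --- rule~$1$ fires once (selecting and applying a $k$-VG rule while minting one $e_\lambda$ per reactant species), then rule~$2$ catalytically deletes exactly those $e_{\lambda_i}$ whose $\lambda_i$ is still present, then the slow rule~$4$ converts $\single{x}\to\single{y}$, then rule~$3$ turns each surviving $e_{\lambda_i}$ (precisely the reactants whose count just hit zero) into $z_{\lambda_i}$, and finally the slow rule~$5$ restores $\single{G}$. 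Since rule~$1$ needs $\single{G}$ and rule~$3$ needs $\single{y}$, no fast rule is enabled when the two slow rules fire, so this phase order is forced; this yields both ``$k$-VG follows Coarse-Rate'' and ``Coarse-Rate models $k$-VG''. The four efficiency conditions would follow by inspection: $O(\size{\reactions}+\size{\species})$ extra species and rules, rule size bounded by the simulated rule size plus $\size{\species}$, $O(\size{\species})$ steps per macro transition, and volume inflated only by the $O(\size{\species})$ auxiliary tokens.

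To prove Lemma~\ref{lem:VG > cr} I would verify Table~\ref{tab:cr > VG}(b); this is the harder direction, because the $k$-VG model has no built-in rule priority and its zero species persist once created. The gadget keeps $\single{G}$ alongside one token $\single{g}_i$ per fast rule: the leader grabs some $\single{g}_i$ and either applies the fast rule $\reaction_i^2$ (rule~$1$, emitting $\single{r}_1$) or, when a reactant of $\reaction_i^2$ is currently absent so its zero species is present, discards $\single{g}_i$ without firing (rule~$1b$); after a fast rule has fired, the counter rules~$2$/$3$ advance $\single{r}_1\to\cdots\to\single{r}_{\size{\reactions^2}+1}$ while regenerating every $\single{g}_j$, and rule~$4$ restores $\single{G}$. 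Only when every $\single{g}_i$ has been discarded --- i.e.\ no fast rule is applicable --- do all the $z_{g_i}$ coexist, which lets rule~$5$ hand control to $\single{s}$; then a slow rule is applied via the product-sequentialization gadget of Table~\ref{tab:react_break}(b), yielding $\single{t}$, which regenerates $\single{G}$ and all $\single{g}_i$. The map and representative sets are defined as before, keeping $\single{G}$, exactly one of each $\single{g}_i$, and no stray zero species.

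The step I expect to be the main obstacle is the faithfulness-and-efficiency analysis of this last construction. First, I would need to show that a macro transition can reach an $\single{s}$-state (hence simulate a slow rule) \emph{only} when no fast rule is genuinely applicable in the mapped Coarse-Rate configuration --- in particular that rule~$1b$ can never pre-empt a still-applicable fast rule, and that the persistent zero species $z_{g_i}$ and $z_{\lambda_j}$ are always either consumed immediately or accounted for by the invariant, so that nothing spurious accumulates between macro transitions and the volume bound holds. Second, I would need to bound the expected length of a macro transition under uniform sampling of applicable rules: because the leader may repeatedly pick $\single{g}_i$'s in an unlucky order and because the sequentialization gadget has its own internal nondeterminism, a coupon-collector-style estimate is needed to keep the expectation polynomial in $\size{\reactions}+\size{\species}$. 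Once Lemmas~\ref{lem:cr > VG} and~\ref{lem:VG > cr} are established, the equivalence theorem follows immediately by the transitivity chains above.
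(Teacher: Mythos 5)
Your proposal matches the paper's proof: the theorem is established exactly as you describe, as a corollary of Lemma~\ref{lem:cr > VG} and Lemma~\ref{lem:VG > cr} combined with the VG\,$\equiv$\,$k$-VG equivalence (Theorem~\ref{thm:vg-kvg}) via the Transitivity Theorem~\ref{thm:transitivity}, and your sketches of the two lemmas follow the same constructions and phase-ordering/faithfulness arguments given in Table~\ref{tab:cr > VG} and the appendix (with only a minor cosmetic difference in where you declare the configuration map undefined). The obstacles you flag for Lemma~\ref{lem:VG > cr} are handled in the paper essentially as you anticipate, except that no coupon-collector argument is needed since each discarded $\single{g}_i$ persists as $\single{z}_{g_i}$ until the reset, bounding the macro transition length by $\mathcal{O}(\size{\reactions^2}+m)$ directly.
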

\subsection{Step-Cycle CRNs}
\begin{restatable}{lemma}{sckvg}\label{lem:cycle > VG}
Step-Cycle CRNs can simulate any given k-VG CRN under polynomial simulation.
\end{restatable}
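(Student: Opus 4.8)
\para{Proof proposal}
Given a $k$-VG CRN $\crn{C_{KVG}}=((\species,\reactions),\Zero)$, the plan is to build a two-step Step-Cycle CRN $\crn{C_{SC}}$ that uses the step boundary as a phase separator for the same leader/witness routine used elsewhere in this section. Introduce a leader $\single{G}$, a witness $\single{e}_\lambda$ for each $\lambda$ in the domain of $\Zero$, a token $\single{y}$, and reservoir species $\single{w},\single{x}$; set $\config{S}_0=\config{S}_1=\single{y}$; and let the initial step-cycle configuration be the initial configuration of $\crn{C_{KVG}}$ together with one $\single{G}$ in step $0$. In step $0$ the rule $\single{G}+\config{\RM_i}\rightarrow\config{e}_{\Set{\RM_i}\cap\mathrm{dom}(\Zero)}+\config{\PM_i}$ consumes the leader (so at most one $k$-VG rule fires per cycle) while depositing one witness per tracked reactant species, and the rules $\single{\lambda}_i+\single{e}_{\lambda_i}\rightarrow\single{\lambda}_i$ delete every witness whose species is still present; since no $\single{y}$ is present in step $0$, the reporting rule $\single{y}+\single{e}_{\lambda_i}\rightarrow\single{y}+\single{z}_{\lambda_i}$ cannot yet fire, so once step $0$ is terminal the surviving witnesses are \emph{exactly} the species the applied rule drove to $0$. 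Adding $\config{S}_0=\single{y}$ moves to step $1$, where the reporting rule converts each surviving witness into its zero species; adding $\config{S}_1=\single{y}$ returns to step $0$ with two tokens, and $\single{y}+\single{y}\rightarrow\single{G}$ restores the leader, closing the cycle. Importantly, when a cycle applies no $k$-VG rule the leader is never consumed, so when $\config{S}_0=\single{y}$ is added the leader is still present; the reversible reactions $\single{G}+\single{y}\leftrightarrow\single{G}+\single{w}$ and $\single{w}\leftrightarrow\single{x}$ then absorb the token into a perpetual $\single{w}\leftrightarrow\single{x}$ oscillation, so the system never leaves step $1$, never re-runs $\single{y}+\single{y}\rightarrow\single{G}$, and does not accumulate leaders---keeping the volume bounded when the simulated configuration is terminal.

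Take $\cmap$ to send a step-cycle configuration to its restriction to $\species$ exactly when it is a \emph{clean leader configuration}---step $0$, exactly one $\single{G}$, and no witness, token, or reservoir species---and to $\undefConfig$ otherwise, so $[\![\config{C}\,]\!]$ is the single configuration $\config{C}+\single{G}$ in step $0$. For ``$\crn{C_{SC}}$ models $\crn{C_{KVG}}$'': given $\config{A}\rightarrow_{\crn{C_{KVG}}}\config{B}$ via $\reaction_i$, the macro transition that fires the step-$0$ selection rule for $\reaction_i$, then every applicable copy of the deletion rule, crosses to step $1$, fires every applicable copy of the reporting rule, crosses back to step $0$, and fires $\single{y}+\single{y}\rightarrow\single{G}$ lands exactly on $\config{B}+\single{G}$ in step $0$, with all intermediate configurations non-clean and hence mapped to $\undefConfig$. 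For ``$\crn{C_{KVG}}$ follows $\crn{C_{SC}}$'': the only reaction applicable at a clean configuration is the step-$0$ selection rule, and once it fires the remainder of the cycle is forced, so the first clean configuration reached after $\config{A}+\single{G}$ is $\config{B}+\single{G}$ for exactly the $\config{B}$ with $\config{A}\rightarrow_{\crn{C_{KVG}}}\config{B}$; and if $\config{A}$ is $k$-VG-terminal, no selection rule is ever applicable and the system falls into the oscillation above, never reaching another clean configuration---which correctly models a terminal state. The polynomial bounds are then routine: we add $O(\size{\species})$ species and $O(\size{\reactions}+\size{\species})$ rules, the largest rule has size $O(\size{\RM_i}+\size{\PM_i}+\size{\species})$, each macro transition uses $O(\size{\species})$ reactions, and every configuration along it has volume polynomially bounded by the volumes of the simulated configurations.

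The main obstacle is the \emph{follows} direction, and within it the faithfulness of zero-detection: one must verify that the step boundary genuinely prevents the reporting rule from racing witness deletion, so that $\single{e}_\lambda$ survives into step $1$ if and only if $\lambda$ actually reached count $0$ when the $k$-VG rule was applied. Concretely this requires showing that no $\single{y}$ (and nothing that can revert to $\single{y}$) is present during step $0$ of any cycle reachable from a clean configuration, enumerating and discarding all alternative macro transitions out of a clean configuration, and confirming that each surviving one lands on the unique clean successor; the reservoir reactions and the restriction of $\cmap^{-1}$ to clean configurations are precisely what make these go through. A secondary point is the partiality of $\Zero$: witnesses and reporting rules exist only for species in its domain, and $\species_2$ species are never themselves tracked, which is needed both for correctness and so that step $1$ terminates in the non-terminal case.
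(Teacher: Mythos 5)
Your proposal is correct and follows essentially the same route as the paper's proof: the same leader/witness species $\single{G}$, $\single{e}_\lambda$, $\single{y}$, $\single{w}$, $\single{x}$, the same rules (selection, catalytic witness deletion, $\single{y}$-triggered zero reporting, $\single{y}+\single{y}\rightarrow\single{G}$ reset, and the reversible $\single{w}\leftrightarrow\single{x}$ absorber for terminal configurations), and the same use of terminality at the step boundary to separate witness deletion from zero reporting. The only differences are cosmetic --- you phrase it as a two-step cycle where the paper uses a single step with $\config{S}_0=\single{y}$, your configuration map is restricted to the single clean leader configuration whereas the paper also accepts the intermediate single-$\single{y}$ configurations, and you restrict witnesses to $\mathrm{dom}(\Zero)$, a minor refinement the paper glosses over.
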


\para{Construction} 
Given a $k$-VG CRN $\crn{C_{KVG}} = ((\species, \reactions), \Zero)$, we construct a Step-Cycle CRN $\crn{C_{SC}} = ((\species', \reactions'), \config{S}_0)$. We let $\species' = \species \cup \{ G, y, x, w, e_{\lambda_1}, \ldots, e_{\lambda_{\size{\species_1}}}, z_{\lambda_1}, \ldots, z_{\lambda_{|\species_2|}} \}$ and $\config{S}_0 = \single{y}$. The $G$ species is used to select a reaction, and $y$ is used to check if any non-zero species in $\species$ have a count of zero. Finally, $e_{\lambda_1}, \ldots, e_{\lambda_{\size{\species_1}}}$ are used to check the counts of the $\lambda_i \in \species_1$ that are relevant to the simulated reaction, where $\species_1$ is the set of non-zero species in $\species$. 
Figure \ref{fig:cycle_vg_construction} shows an example of a Step-Cycle CRN simulating a $k$-VG CRN with $\species=\{a,b\}$ and a reaction that consumes the species $a$.

\begin{table}[t]
\vspace{-.2cm}
    \centering
    \SCiffKVG
    \caption{(a) Reactions for a Step-Cycle CRN to simulate any given k-VG CRN. (b) Reactions for a k-VG CRN to simulate any given Step-Cycle CRN.}\label{tab:cycle > VG}\label{tab:VG > cycle}
    \vspace{-.2cm}
\end{table}

\begin{table}[t]
\vspace{-.2cm}
    \centering
    \SCReactants
    \caption{(a) Checking reactants sequentially $\config{\RM_i} \SeqReacts \config{\PM_i}$ (b) Undoing reaction selection if not enough reactants exist for rule $i$. }\label{tab:check_reactants}\label{tab:undo_reactants}
    \vspace{-.2cm}
\end{table}

\begin{figure}[t]
\vspace{-.2cm}
    \centering
    \includegraphics[width=.9\textwidth]{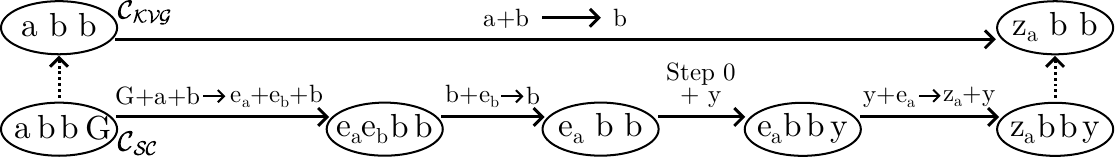}
    \vspace{-.2cm}
    \caption{A rule application in the \emph{simulated} $k$-VG CRN $\crn{C_{KVG}}$ and the equivalent rule application sequence in the \emph{simulating} Step-Cycle CRN $\crn{C_{SC}}$. }
    \label{fig:cycle_vg_construction}
    \vspace{-.2cm}
\end{figure}

\begin{restatable}{lemma}{kvgsc} \label{lem:VG > cycle}
k-VG CRNs can simulate any given Step-Cycle CRN under polynomial simulation.
\end{restatable}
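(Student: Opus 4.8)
The plan is to reuse, with a cyclic step-counter added, the leader-token architecture from the $k$-VG$\,\to\,$Coarse-Rate direction (Lemma~\ref{lem:VG > cr}), namely the construction of Table~\ref{tab:VG > cycle}(b). Given a Step-Cycle CRN $\crn{C_{SC}}=((\species,\reactions),(\config{S}_0,\ldots,\config{S}_{k-1}))$, the $k$-VG CRN $\crn{C_{KVG}}=((\species',\reactions'),\Zero)$ adjoins to $\species$ a leader $\single{G}$, a rule token $\single{g}_i$ and marker $\single{z}_{g_i}$ per rule $\reaction_i$, reset-walk species $\single{r}_1,\ldots,\single{r}_{\size{\reactions}+1}$, a terminal signal $\single{s}$, a post-advance species $\single{t}$, the $k$ step tokens $\single{s}_0,\ldots,\single{s}_{k-1}$, a marker $\single{z}_\lambda$ for every $\lambda\in\species$, and the bookkeeping species of the check/undo/sequential-product gadgets of Tables~\ref{tab:check_reactants} and~\ref{tab:react_break}; $\Zero$ maps every original $\lambda\in\species$ (and nothing else) to $\single{z}_\lambda$. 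One ``round'' runs as follows. Rule~1, $\single{G}+\single{g}_i\SeqReacts\single{r}_1+\config{\PM_i}$, expands (Table~\ref{tab:check_reactants}) into a gadget that consumes $\single{g}_i$ --- leaving the marker $\single{z}_{g_i}$ --- and walks through $\config{\RM_i}$ consuming each reactant copy into a primed copy; if a needed reactant is exhausted (detected because the VG dynamics created its marker $\single{z}_\lambda$) the undo rules unprime everything and return to $\single{G}$ \emph{without} restoring $\single{g}_i$, and otherwise $\config{\PM_i}$ is produced via $\SeqProds$ (consuming $\single{z}_{p_j}$ exactly when a product species leaves count zero) and $\single{r}_1$ is emitted. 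Rules~2--4 then run the reset walk $\single{r}_1\rightarrow\cdots\rightarrow\single{r}_{\size{\reactions}+1}\rightarrow\single{G}$, restoring every missing $\single{g}_j$ and consuming the matching $\single{z}_{g_j}$. Once every $\single{g}_i$ has been discarded by a failed attempt, rule~5, $\single{G}+\single{z}_{g_1}+\cdots+\single{z}_{g_{\size{\reactions}}}\rightarrow\single{s}$, fires; rules~7/8 then consume the current step token $\single{s}_p$ and emit $\single{t}+\single{s}_{(p+1)\bmod k}$ together with the step-$p$ additions $\config{S}_p$ (again via $\SeqProds$); and rule~6, $\single{t}\rightarrow\single{G}+\single{g}_1+\cdots+\single{g}_{\size{\reactions}}$, opens the next round.

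Define $\cmap$ to send a \emph{clean} configuration --- one holding $\single{G}$, one copy of each $\single{g}_j$, exactly one step token $\single{s}_p$, a marker $\single{z}_\lambda$ for precisely those original species of count zero, and no bookkeeping species --- to the step-configuration $(\config{C},p)$ whose CRN part $\config{C}$ is its restriction to $\species$, and send every other configuration to $\undefConfig$; let $[\![(\config{C},p)]\!]$ be the single canonical clean configuration. Then $\crn{C_{KVG}}$ \emph{models} $\crn{C_{SC}}$: a same-step transition applying $\reaction_i$ is realized by the round that picks $\single{g}_i$ and commits, and a step-advance transition out of a terminal configuration by the round that picks and fails on every rule and then fires rules $5$, $7$/$8$, $6$; the universal quantifier over the representative set is trivial since that set is a singleton. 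The real content is that $\crn{C_{SC}}$ \emph{follows} $\crn{C_{KVG}}$. The key points are: (i) a rule is selectable at most once per round, since its token is restored only by a commit, so $\single{G}$ being present together with all of $\single{z}_{g_1},\ldots,\single{z}_{g_{\size{\reactions}}}$ forces that every rule was selected and every selection failed, hence every rule is inapplicable and $\config{C}$ is terminal --- so rule~5 can fire only on terminal configurations, matching Case~2 of the Step-Cycle dynamics; (ii) from a non-terminal clean configuration the leader cannot stall, because the applicable rule's token is still present so rule~5 is disabled, and within at most $\size{\reactions}$ attempts the leader must select and commit an applicable rule; and (iii) every non-clean configuration reachable inside a round maps to $\undefConfig$ --- it carries some $\single{R}_i^j$, $\single{r}_j$, $\single{s}$, $\single{t}$ or $\single{P}_i^j$ species, or is missing a $\single{g}_j$, or has markers inconsistent with its species counts --- so no such configuration is an illegal intermediate of $\macrotrans$, and since a committed round ends at a clean configuration, a macrotransition between two clean configurations is realized by a single round (the no-op case, where the two configurations coincide, being vacuous).

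I expect the main obstacle to be verifying the rollback gadget of Table~\ref{tab:check_reactants}(b): from a partially-consumed state carrying some primed reactant copies and possibly a marker $\single{z}_\lambda$ produced by consuming the last copy of a reactant $\lambda$ whose multiplicity in $\config{\RM_i}$ the available count nevertheless exceeds, the undo rules must restore the multiset of original-species counts \emph{exactly} and leave $\single{z}_\lambda$ present iff that count is zero, so that a failed attempt is the identity on $\species\cup\{\single{z}_\lambda:\lambda\in\species\}$ and its only lasting effect is consuming $\single{g}_i$ and creating $\single{z}_{g_i}$. This is a finite argument --- walk the undo pointer backward along the same reactant sequence, unpriming one copy per level and reclaiming at most one copy from a marker --- but must be checked with care for rules with repeated reactants. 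Granting this invariant, the efficiency conditions are routine: $\size{\species'},\size{\reactions'}=\mathcal{O}(\size{\species}+\size{\reactions}\cdot\max_i\size{\RM_i}+k)$; the largest rules are rule~5, rule~6, and those inside the $\SeqProds$ for $\config{S}_p$, giving rule size $\mathcal{O}(\size{\reactions}+\max_i\size{\RM_i}+\max_i\size{\PM_i}+\max_i\size{\config{S}_i})$, polynomial in the size of $\crn{C_{SC}}$; each macrotransition comprises at most $\size{\reactions}$ failed attempts of length $\mathcal{O}(\max_i\size{\RM_i})$ plus one commit or step-advance of length $\mathcal{O}(\max_i(\size{\RM_i}+\size{\PM_i})+\size{\reactions}+\max_i\size{\config{S}_i})$, a deterministic --- hence also expected --- polynomial bound independent of the random rule choices; and every intermediate configuration has volume at most $\max(\size{\cmap(\config{A'})},\size{\cmap(\config{B'})})+\size{\reactions}+\mathcal{O}(1)$.
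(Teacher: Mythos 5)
Your proposal matches the paper's construction essentially exactly: the same leader/rule-token architecture with $\single{g}_i$ and $\single{z}_{g_i}$ per rule, the $\single{r}_1,\ldots,\single{r}_{\size{\reactions}+1}$ reset walk, the $\single{s}$/$\single{s}_p$/$\single{t}$ step-advance mechanism, the sequential check/undo reactant gadgets of Table~\ref{tab:check_reactants}, and a configuration map keyed on the presence of the leader and the step token. The only deviations are cosmetic --- the paper additionally places $g_i \mapsto z_{g_i}$ in $\Zero$ and uses a slightly looser map that requires only $\single{G}$ and $\single{s}_p$ rather than a fully canonical clean configuration --- and the rollback invariant you flag as the main obstacle is treated at the same level of detail in the paper's own proof.
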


\para{Construction}
Given a Step-Cycle CRN $\crn{C_{SC}} = ((\species, \reactions), (\config{S}_0, \config{S}_1, \ldots, \config{S}_{k-1})$, we construct a $k$-VG CRN $\crn{C_{KVG}} = ((\species', \reactions'), Z_{\emptyset})$. Let $\species' = \species \cup \{ G, \lambda_1', \ldots, \lambda_{|\species|}', z_{\lambda_1}, \ldots, z_{\lambda_{|\species|}}, r_1, \ldots, r_{\size{\reactions} + 1}$, $s, s_0, \ldots, s_{k-1}, t \} \cup \{ g_i, z_{g_i}, R_i^j, R_i^{j^-}, P_i^l : 1 \leq i \leq |\reactions|, 1 \leq j \leq |\RM_i|+1, 1 \leq l \leq |\PM_i|+1 \}$. $G$ and $g_i$ are consumed to select a random reaction. The reactants are checked sequentially, converting each reactant into $\lambda_i'$. Then the $r_i$ species reintroduce any deleted $g_i$ into the system. Species $s_i$ represent step $i$, with species $s$ and $t$ used to transition between steps. 
Figure \ref{fig:vg_cycle_construction} shows an example of a $k$-VG CRN simulating a Step-Cycle CRN with $\species = \{ a, b \}$ and reaction $a + b \rightarrow b$. 

\begin{figure}[t]
\vspace{-.2cm}
    \centering
    \includegraphics[width=.9\textwidth]{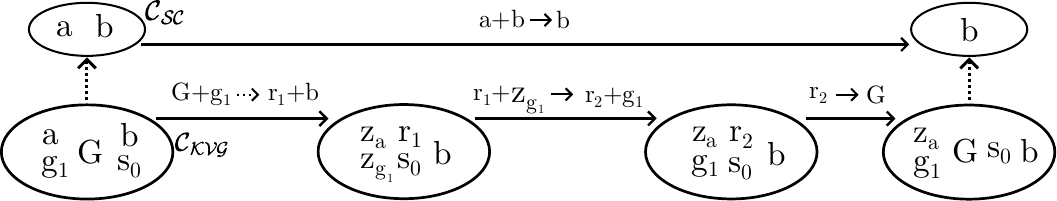}
    \vspace{-.2cm}
    \caption{A rule application in the \emph{simulated} Step-Cycle CRN $\crn{C_{SC}}$ and the equivalent rule application sequence in the \emph{simulating} $k$-VG CRN $\crn{C_{KVG}}$.}
    \label{fig:vg_cycle_construction}
    \vspace{-.2cm}
\end{figure}

\begin{restatable}{theorem}{vgscthm}
    The Void Genesis model is equivalent under polynomial simulation to the Step-Cycle model.     
\end{restatable}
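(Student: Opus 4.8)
The plan is to obtain this theorem essentially immediately from the two preceding lemmas together with the transitivity of polynomial simulation (Theorem~\ref{thm:transitivity}) and the already-established equivalence of the Void Genesis and $k$-Void Genesis models (Theorem~\ref{thm:vg-kvg}). No new construction is required; all of the substantive gadget-building has been done in Lemmas~\ref{lem:cycle > VG} and \ref{lem:VG > cycle}, and the job here is purely to chain the simulations together in both directions.

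For the first direction, I would show that the Void Genesis model simulates the Step-Cycle model. By Theorem~\ref{thm:vg-kvg} (specifically the direction witnessed by the construction of Lemma~\ref{lem:vg-kvg}, which keeps the single $z$ species at count $0$ and uses the sequential reactant-consumption gadget of Table~\ref{tab:vgreact_break} to manufacture the per-species zero tokens $z_{r_j}$), the Void Genesis model simulates the $k$-Void Genesis model under polynomial simulation. By Lemma~\ref{lem:VG > cycle}, the $k$-Void Genesis model simulates any Step-Cycle CRN under polynomial simulation. Applying Theorem~\ref{thm:transitivity} with $T_1$ the Step-Cycle system, $T_2$ the $k$-VG system, and $T_3$ the VG system then yields that the Void Genesis model simulates the Step-Cycle model under polynomial simulation.

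The converse direction is symmetric. By Theorem~\ref{thm:vg-kvg} (this time the direction of Lemma~\ref{lem:kvg-vg}, whose construction merely sets $\Zero(\lambda) = z$ for every $\lambda \neq z$), the $k$-Void Genesis model simulates the Void Genesis model under polynomial simulation, and by Lemma~\ref{lem:cycle > VG} the Step-Cycle model simulates any $k$-VG CRN under polynomial simulation. Applying Theorem~\ref{thm:transitivity} with $T_1$ the VG system, $T_2$ the $k$-VG system, and $T_3$ the Step-Cycle system gives that the Step-Cycle model simulates the Void Genesis model under polynomial simulation. Since each model polynomially simulates the other, the two are equivalent under polynomial simulation, which is the claim.

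The only point requiring care --- and what I would flag as the \emph{main obstacle}, though it is really bookkeeping rather than a genuinely hard step --- is checking that the hypotheses of Theorem~\ref{thm:transitivity} are actually met at each composition: that the chained configuration maps remain polynomial-time computable, that the representative-configuration sets compose as the transitivity proof requires (non-empty whenever both underlying maps are defined), and that all four polynomial-efficiency conditions (species/rule count, rule size, expected transition-sequence length, and volume) survive composition. Theorem~\ref{thm:transitivity} already discharges exactly these obligations, given that each of Lemmas~\ref{lem:cycle > VG}, \ref{lem:VG > cycle}, \ref{lem:kvg-vg}, and \ref{lem:vg-kvg} was verified to be polynomial efficient, so the composition argument applies verbatim and no further analysis is needed.
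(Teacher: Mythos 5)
Your proposal is correct and matches the paper's own proof, which likewise derives the theorem by combining Lemmas~\ref{lem:cycle > VG} and \ref{lem:VG > cycle} with Theorem~\ref{thm:vg-kvg} and the Transitivity Theorem~\ref{thm:transitivity}. Your write-up is in fact more explicit than the paper's one-line argument about which direction of each simulation is chained with which, but the underlying reasoning is identical.
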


\para{Extension to deletion-only rules} Given the recent result in \cite{Luchsinger:2025:UCNC}, these results extend to give the following corollary.

\begin{restatable}{corollary}{31sccor}
    Even when restricted to at most $(3,1)$ void rules, the Step-Cycle model is equivalent under polynomial simulation to the Void Genesis model.
\end{restatable}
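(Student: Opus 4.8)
The plan is to obtain the corollary by composing already-established polynomial simulations via the transitivity of polynomial simulation (Theorem~\ref{thm:transitivity}), using the reduction to $(3,1)$-void step rules of \cite{Luchsinger:2025:UCNC} as the single new ingredient. Write $\mathrm{SC}_{(3,1)}$ for the Step-Cycle model restricted to void rules with at most three reactants and one product, and $\mathrm{SC}$ for the unrestricted Step-Cycle model. One direction is immediate: every $\mathrm{SC}_{(3,1)}$ system is in particular an $\mathrm{SC}$ system, so by Lemma~\ref{lem:VG > cycle} it is polynomially simulated by a $k$-VG system, which by Lemma~\ref{lem:vg-kvg} (see also Theorem~\ref{thm:vg-kvg}) is polynomially simulated by a VG system; Theorem~\ref{thm:transitivity} then gives that VG polynomially simulates $\mathrm{SC}_{(3,1)}$.

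For the converse direction I would chain three polynomial simulations. First, by Lemma~\ref{lem:cycle > VG} (composed with Lemma~\ref{lem:kvg-vg}), there is an $\mathrm{SC}$ system $\crn{C_{SC}}$ that polynomially simulates the given VG system; note that $\crn{C_{SC}}$ uses general, non-void rules, e.g.\ rules of the form $\single{G}+\config{\RM_i}\rightarrow\config{e}_{\{\RM_i\}}+\config{\PM_i}$. Second, I invoke \cite{Luchsinger:2025:UCNC}, which shows that an arbitrary step CRN can be simulated, with only polynomial overhead, by a step CRN all of whose rules are $(3,1)$-void --- the key idea being that the ``creation'' performed by general rules is offloaded onto the per-step species additions while all internal bookkeeping is done by bounded-size deletion rules. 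Recast in the present framework and applied step-by-step to $\crn{C_{SC}}$, this yields an $\mathrm{SC}_{(3,1)}$ system that polynomially simulates $\crn{C_{SC}}$. Third, Theorem~\ref{thm:transitivity} composes these to conclude that $\mathrm{SC}_{(3,1)}$ polynomially simulates VG, and together with the first direction this establishes the claimed equivalence.

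The main obstacle is the middle step: verifying that the $(3,1)$-void reduction of \cite{Luchsinger:2025:UCNC} is (or can be restated as) a \emph{polynomial} simulation in the precise sense of Section~\ref{subsec:sim} --- polynomial blow-up in species, rules, and rule size; expected transition-sequence length polynomial under uniform sampling of applicable rules; and polynomially bounded volume across every macro transition --- and that it lifts from one-shot step CRNs to the \emph{cyclic} Step-Cycle setting. The cyclic aspect is the more delicate part: the reduction must leave the simulating system in a canonical, garbage-free state at each step boundary, so that the wrap-around transition $\config{A}_{k-1}\rightarrow\config{A}_0$ of the simulated $\mathrm{SC}$ system is matched correctly and no auxiliary counters accumulate over successive cycles. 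If the cited construction is already clean in this sense the wrap-around is handled for free, since step addition (including modulo $k$) is itself a single-step transition of the Step-Cycle model; otherwise one would append a small reset gadget, itself expressible with $(3,1)$-void rules together with the step additions, at the end of the cycle.
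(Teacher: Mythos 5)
Your proposal matches the paper's intended argument: the paper offers no explicit proof of this corollary beyond the one-line remark that the equivalence results ``extend'' given the cited $(3,1)$-void reduction, and your decomposition --- VG $\leftrightarrow$ $k$-VG $\leftrightarrow$ Step-Cycle composed with the void-rule reduction via Theorem~\ref{thm:transitivity} --- is exactly the chain the authors rely on. You are in fact more careful than the paper, since you explicitly flag the two obligations the paper leaves unverified (that the cited reduction meets the polynomial-simulation criteria of Section~\ref{subsec:sim}, and that it survives the cyclic wrap-around without accumulating garbage across cycles).
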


\subsection{Unique-Instruction Parallel Model}

\begin{restatable}{lemma}{uikvg} \label{thm:UI > VG}
UI parallel CRNs can simulate any given VG CRN.
\end{restatable}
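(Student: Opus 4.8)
The plan is to give an explicit Unique-Instruction Parallel CRN that executes each Void Genesis rule as a fixed-length burst of parallel transitions, and then to verify the two simulation conditions together with the four efficiency bounds (an alternative is to simulate the $k$-VG model and appeal to Theorems~\ref{thm:vg-kvg} and~\ref{thm:transitivity}, but the direct construction is cleaner). Given a VG CRN $((\species,\reactions),z)$, I keep a single leader $\single{G}$. To fire a rule $\reaction_i=(\config{\RM_i},\config{\PM_i})$, use $\single{G}+\config{\RM_i}\rightarrow\single{G_i}+\config{\PM_i}+\config{e}$ with $\config{e}=\sum_{\lambda\in\Set{\RM_i}\setminus\{z\}}\single{e}_\lambda$, so one checker flag is deposited per distinct trackable reactant while $\single{G_i}$ marks rule $i$ as in progress. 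Since every rule of this form consumes the unique $\single{G}$, a unique-instruction maximal set can contain at most one of them, so exactly one VG rule fires, nondeterministically, matching VG's rule choice. Each $\single{e}_\lambda$ then splits, $\single{e}_\lambda\rightarrow\single{r}^1_\lambda+\single{t}^1_\lambda$; next, in one transition, the maximal set advances every timer $\single{t}^1_\lambda\rightarrow\single{t}_2$ and simultaneously fires every applicable probe $\single{r}^1_\lambda+\single{\lambda}\rightarrow\single{r}^2_\lambda$, that is, exactly for the $\lambda$ whose post-rule count is still positive. The timer then reveals the outcome: $\single{t}_2+\single{r}^1_\lambda\rightarrow\single{z}+\single{E}$ emits a zero-species precisely when the probe failed (so $\lambda$ reached $0$), while $\single{t}_2+\single{r}^2_\lambda\rightarrow\single{\lambda}+\single{E}$ returns the one borrowed copy otherwise; each pairing yields one $\single{E}$, and finally $\single{G_i}+|\Set{\RM_i}\setminus\{z\}|\cdot\single{E}\rightarrow\single{G}$ restores the leader.

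For the configuration map, call a UI configuration \emph{clean} if it contains exactly one $\single{G}$ and none of the machinery species ($G_i$, $e_\lambda$, $r^1_\lambda$, $r^2_\lambda$, $t^1_\lambda$, $t_2$, $E$); map a clean configuration to its restriction to $\species$ and every other configuration to $\undefConfig$, and take $[\![\config{C}\,]\!]$ to be the single clean configuration $\config{C}+\single{G}$. To show \emph{$T$ follows $T'$}, I argue that any macro transition between two clean configurations with different $M$-values realizes exactly one VG step: once $\single{G}$ is spent on $\reaction_i$ the entire burst is forced by the current species counts (the two branches of the $\single{t}_2$ rule are mutually exclusive because exactly one of $r^1_\lambda, r^2_\lambda$ is present), it can halt only by restoring $\single{G}$, and the resulting $\species$-restriction equals $\config{A}-\config{\RM_i}+\config{\PM_i}$ plus one $z$ per distinct trackable reactant driven to zero --- exactly $\config{A}\rightarrow_{\crn{C_{VG}}}^{(\species,\reactions)}\config{B}$ (bursts that change nothing in $\species$ simply leave $M$ fixed and are absorbed as $\undefConfig$-intermediates). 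To show \emph{$T'$ models $T$}, run the converse: from the clean representative of $\config{A}$, select the first-rule instance of the $\reaction_i$ witnessing $\config{A}\rightarrow_{\crn{C_{VG}}}\config{B}$ and follow the forced burst to the clean representative of $\config{B}$; every intermediate configuration is non-clean, hence maps to $\undefConfig$, so the burst is a valid macro transition.

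The main obstacle I expect is the careful accounting of unique-instruction maximal-set semantics at each transition: one must argue that the probe $\single{r}^1_\lambda+\single{\lambda}$ and the timer step $\single{t}^1_\lambda\rightarrow\single{t}_2$ are \emph{forced to fire together} (so a missing $\lambda$ is genuinely detected), that no stray rule slips into a maximal set (the $\single{t}_2$ rules cannot fire before the probe step since $\single{t}_2$ and $r^2_\lambda$ are not yet present, the first rule cannot fire mid-burst since $\single{G}$ is absent, and the reset cannot fire until all $\single{E}$'s exist), and that no two probes contend for the same $\lambda$, which follows from the single-leader design creating at most one $\single{e}_\lambda$ per firing. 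The cases where a reactant is also a product (a probe measures the net post-rule count, which is precisely what VG reacts to), where $\reaction_i$ has no trackable reactants (the burst collapses to two transitions), and the exclusion of $z$ itself from the checker flags are small but necessary details. Given these, the efficiency bounds are routine: the construction adds $O(\size{\reactions}+\size{\species})$ species and rules; the largest rule has size $O(\max_i(\size{\config{\RM_i}}+\size{\config{\PM_i}}))$; each macro transition uses at most five transitions, so the conditional expected length is trivially polynomial; and on clean configurations the only volume overhead is the leader, while within a burst the $\species$-part has volume at most $\size{\config{A}}+\size{\config{\PM_i}}=O(\size{\config{A}}+\size{\config{B}})$ and the machinery adds only $O(\size{\config{\RM_i}})=O(\size{\config{A}})$, so every intermediate configuration has volume linear in that of $M(\config{A})$ and $M(\config{B})$.
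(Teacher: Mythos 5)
Your construction is essentially identical to the paper's: the same leader/marker species $\single{G}$ and $\single{G_i}$, the same per-reactant checker flags $\single{e}_\lambda$ splitting into probe $\single{r}^1$ and timer $\single{t}^1$, the same parallel probe-versus-timer race resolved by $\single{t}_2$, and the same $\single{E}$-counting reset, with the same $G$-present configuration map, five-transition macro steps, and polynomial accounting. The proposal is correct and matches the paper's proof in both construction and argument; your explicit exclusion of $z$ from the checker flags is a small point of added care rather than a different approach.
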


\para{Construction} Given a Void-Genesis CRN $\crn{C_{VG}} = ((\species, \reactions), z)$, we construct the UI parallel CRN $\crn{C_{UI}} = (\species', \reactions')$. We create $\species'=\species\cup\{G,E,t^2\}\cup\{e_{\lambda_j},t_j^1,r_j^1,r_j^2,G_i:i\in\{1,\dots,|\reactions|\}, j\in\{1,\dots,|\{\RM_i\}|\}\}$. The set of species $\species'$ each has a role. The global species G, selects a rule $\reaction_1' \in \reactions'$ non-deterministically, and produces a $G_i$ to disallow the reaction 1 running again, a set of species $e_{\lambda}$, and produces the products. The $e_{\lambda}$ species then creates a timer species $t^i_j$, and a checker species $r^1_j$, which is used to check if a species exists in the system. In reaction 3 the rules run in parallel, so the timer species goes down, and at the same time, we check if any checker species have incremented. Reaction 4, reintroduces the consumed species or creates the zero species $z$, based on the checker species. Both reactions produce the $E$, which is later used to reintroduce the $G$ species. Figure $\ref{fig:ui_vg_construction}$ shows an example of a UI CRN simulating a simple VG CRN. 

\begin{figure}[t]
\vspace{-.2cm}
    \centering
    \includegraphics[width=.9\textwidth]{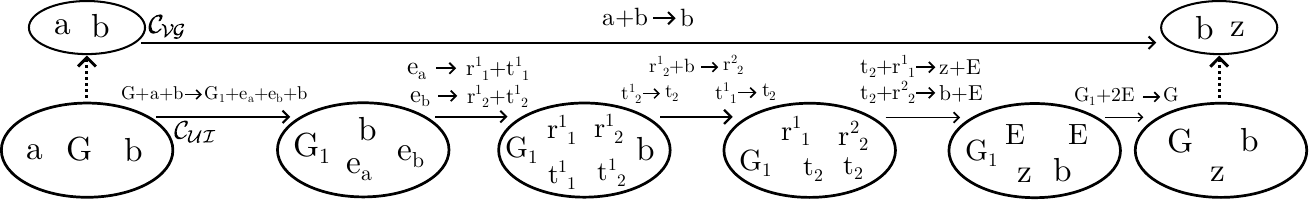}
    \vspace{-.2cm}
    \caption{A rule application in the \emph{simulated} void genesis CRN $\crn{C_{VG}}$ and the equivalent rule application sequence in the \emph{simulating} Unique Instruction CRN $\crn{C_{UI}}$. }
    \label{fig:ui_vg_construction}
    \vspace{-.4cm}
\end{figure}

\begin{table}[t!]
    \vspace{-.2cm}
    \centering
    \UIiffVG
    \caption{(a) Reactions for a UI parallel CRN to simulate any given VG CRN. Here, $\config{e_{\{\RM_i\}}}$ is an $e_{\lambda_j}$ species created for each of the $j$ different reactants in $\RM_i$ (only one $e$ is created if multiple copies of that species are used). 
    (b) Reactions for a kVG CRN to simulate any given UI parallel CRN.}\label{tab:UI > VG}\label{tab:kVG > UI}
    \vspace{-.2cm}
\end{table}

\begin{restatable}{lemma}{kvgui} \label{thm:VG > UI}
$k$-VG CRNs can simulate any given UI Parallel CRN.
\end{restatable}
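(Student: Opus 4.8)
The plan is to verify the construction of Table~\ref{tab:kVG > UI}(b), which simulates a given UI parallel CRN $\crn{C_{UI}}$ by a $k$-VG CRN $\crn{C_{KVG}}$ whose machinery carries one leader token $G_i$ per rule $\gamma_i \in \reactions$ between ``clean'' states and, in one simulation cycle, (i) non-deterministically selects one $G_i$ at a time and tries to grab the reactants of $\gamma_i$ one at a time through the states $R_i^j$, \emph{committing} $\gamma_i$ (emitting $I_i$) once its last reactant is consumed, and \emph{rejecting} it (emitting $N_i$ and returning whatever was grabbed) as soon as a needed reactant is found to be at zero --- which is signalled by the corresponding zero-species being present; (ii) once every rule has emitted a ``processed'' token $X_i$, collapses all of them into a single global $F$; (iii) deposits $\config{\PM_i}$ for every committed rule and discards every rejected token; and (iv) regenerates all $G_i$ for the next cycle. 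Writing $T$ for the given UI system and $T'$ for the $k$-VG system being built, I would put $\species$ in the normal-species partition class and the zero-species together with all machinery species in the other class, set $\Zero(\lambda) = z_\lambda$ for every $\lambda \in \species$, and seed $T'$'s initial configuration with one $z_\lambda$ for each $\lambda$ absent from $T$'s initial configuration. The configuration map $\cmap$ projects a configuration of $T'$ onto $\species$ exactly when it is \emph{clean} --- all $G_i$ present, no other machinery species present, and each $z_\lambda$ present iff $\lambda$ has count zero --- and returns $\undefConfig$ otherwise; $[\![\config{C}\,]\!]$ consists of the unique clean configuration mapping to $\config{C}$. Every configuration reached strictly inside a cycle carries a transient machinery species, hence maps to $\undefConfig$, so a full cycle is a legal macro transition in the sense of Section~\ref{subsec:sim}.

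For ``$T$ follows $T'$'' I would argue that any cycle carrying a clean configuration $\config{C}$ to a clean configuration with a different image realizes a maximal unique-instruction plausibly-parallel set: each $G_i$ is consumed exactly once, so the committed rules form a set $S$; a rule joins $S$ only after all of its reactants have been removed, so $\sum_{i \in S} \config{\RM_i} \leq \config{C}$, i.e.\ $S$ is plausibly parallel; and a rule is rejected only because one of its reactants is unavailable in the configuration remaining after the grabs made so far, and since grabbing only decreases counts, unavailability there forces unavailability against $\config{C} - \sum_{i \in S} \config{\RM_i}$, so no rejected rule can be added to $S$ and $S$ is maximal --- making the net effect of the cycle on $\species$ exactly the UI transition that applies $S$. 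For ``$T'$ models $T$'' I would, given a UI transition applying the maximal set $S$, exhibit the cycle that processes the rules of $S$ first (every grab succeeds because $S$ is plausibly parallel) and then each remaining rule $\gamma_j$: because $S \cup \{\gamma_j\}$ is not plausibly parallel, some reactant of $\gamma_j$ runs out while it is being grabbed, its zero-species appears, and $\gamma_j$ is rejected; steps (ii)--(iv) then produce the unique clean representative of the target configuration.

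The step I expect to be the main obstacle is the interaction between the $k$-VG model's automatic creation of zero-species and the maximality semantics: a rejected rule returns its partially grabbed reactants, and a committed rule deposits its products, either of which can push a species back above zero while a \emph{stale} $z_\lambda$ lingers or accumulates, and a stale $z_\lambda$ with $\lambda$ positive makes the ``reject'' branch of some other rule spuriously applicable, which would break ``follows''. I would resolve this by routing both the reactant return in step~(i) and the product deposit in step~(iii) through the sequential product-creation technique $\config{\RM_i} \SeqProds \config{\PM_i}$ of Table~\ref{tab:react_break}(b), which consumes $z_{p_j}$ exactly when $p_j$ is re-created and otherwise catalyzes off $p_j$; this restores the ``clean'' invariant and leaves the choice of which $G_i$ to process next as the only genuine source of nondeterminism --- precisely the choice that selects which maximal set gets produced. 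Finally I would check efficiency: the construction adds only a bounded number of new species per rule (the leader $G_i$, the status tokens $I_i, N_i, X_i, F_i$, and $\mathcal{O}(\size{\RM_i} + \size{\PM_i})$ intermediate states) plus one zero-species per original species, the largest new rules --- collapsing the $X_i$ into $F$ and the reset --- have size $\mathcal{O}(\size{\reactions})$, a cycle uses a number of transitions polynomial in $\size{\reactions}$ and the maximum rule size regardless of the processing order (so the expected-length bound is trivial), and the machinery contributes only $\mathcal{O}(\size{\reactions} + \size{\species})$ extra volume, with every intermediate configuration bounded by the surrounding clean configurations; combining this with Lemma~\ref{thm:UI > VG} and Theorem~\ref{thm:transitivity} establishes the polynomial equivalence of the Void Genesis and UI parallel models.
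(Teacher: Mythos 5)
Your construction and proof outline coincide with the paper's (Table~\ref{tab:kVG > UI}): one leader $G_i$ per rule, sequential reactant-grabbing through the $R_i^j$ states, commit/reject tokens $I_i$ and $N_i$ signalled by the zero-species, the $X_1+\ldots+X_{\size{\reactions}}\rightarrow F$ synchronization, product deposit via $F$, and a reset rule restoring all $G_i$; the configuration map (defined exactly when all $G_i$ are present) and the follows/models arguments are likewise the paper's. Your point about stale zero-species is a genuine refinement: the paper's reactions 4 and 7 return reactants and deposit products as plain products without consuming the corresponding $z_\lambda$, so routing these through the $\SeqProds$ technique (as you propose) is needed to maintain the invariant that $z_\lambda$ is present iff $\lambda$ has count zero; the paper's proof does not address this.

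There is, however, a gap in your maximality argument for ``$\crn{C_{UI}}$ follows $\crn{C_{KVG}}$'' that your fix does not close. All $\size{\reactions}$ leaders are present simultaneously in a clean configuration and nothing forces one rule to finish its grab-or-reject phase before another begins, so several rules can be in flight at once. Your key step --- \emph{grabbing only decreases counts, so unavailability at rejection time forces unavailability against} $\config{C}-\sum_{i\in S}\config{\RM_i}$ --- fails under such interleavings, because a rule can be rejected off a zero-species created by the \emph{partial} grab of another rule that is itself later rejected and returns those copies. Concretely, take $\config{C}$ with one $a$, no $b$, one $c$, and rules $\gamma_1\colon a+b\rightarrow p_1$, $\gamma_2\colon a\rightarrow p_2$, $\gamma_3\colon c\rightarrow p_3$. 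The schedule in which $G_1$ grabs $a$ (creating $z_a$), $G_2$ then fires its reject reaction off $z_a$, $G_1$ rejects off $z_b$ and returns $a$, and $G_3$ commits, ends a legal macro transition at the image $\config{C}-\single{c}+\single{p}_3$; but the only unique-instruction maximal set at $\config{C}$ is $\{\gamma_2,\gamma_3\}$, so this is not a valid transition of $\crn{C_{UI}}$ and ``follows'' is violated. To repair this you must serialize the selection phase --- e.g.\ a single token held by at most one $G_i$ at a time and released only upon emission of $X_i$ --- after which your observation that the already-committed set $S_j$ is a subset of the final $S$ does yield maximality. The paper's construction and proof share this omission, so this is a repair to the common construction rather than a divergence from the paper's route.
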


\para{Construction} Given a UI parallel CRN $\crn{C_{UI}} = (\species, \reactions)$, we construct a $k$-Void-Genesis CRN $\crn{C_{KVG}} = ((\species', \reactions'), \Zero)$ as described. 
Let $\species'=\species\cup\{F\}\cup\{G_i,N_i,I_i,F_i,R_i^j:i\in\{1,\dots,|\reactions|\}, j\in\{1,\dots,|\RM_i|\}\}\cup\{z_{r_j}:r_j\in \RM_i s.t. 1 \leq i \leq |\reactions|\}$. A $G_i$ species exists for each $\reaction_i\in\reactions$, which sequentially consumes the reactants. If successful, it is committed to running $I_i$. If some reactant is missing, it returns the previously consumed reactions and notes that it can not run $N_i$. Both commitments create an $X_i$, and when all $\size{\reactions}|\cdot X$ exist, a maximal set has been chosen and an $F$ is created. This allows $N_i$ or $I_i$ to turn into $F_i$, and output the products if the rule was an $I$. Once all rules are converted to an $F_i$, they all combine to reset the rule selections by creating all the $G_i$'s again. Figure $\ref{fig:kvg_ui_construction}$ shows an example of a $k$-VG CRN simulating a UI CRN with $\species = \{ a, b \}$ and reaction $a + b \rightarrow b$. 

\begin{figure}[t!]
    \centering
    \includegraphics[width=.9\textwidth]{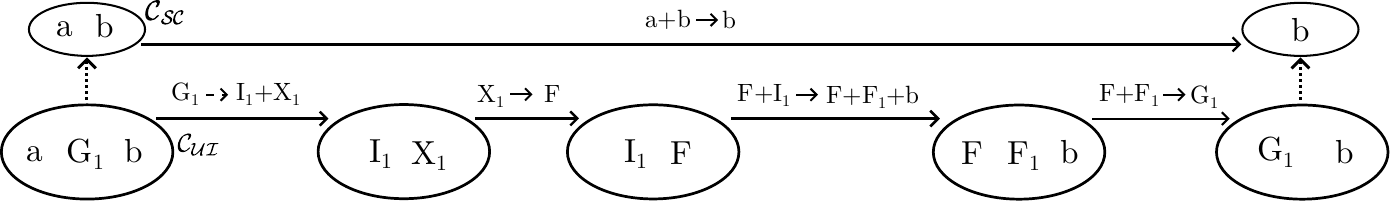}
    \vspace{-.2cm}
    \caption{A rule application in the \emph{simulated} Unique Instruction CRN $\crn{C_{UI}}$ and the equivalent rule application sequence in the \emph{simulating} $k$-void genesis CRN $\crn{C_{KVG}}$.}
    \label{fig:kvg_ui_construction}
    \vspace{-.2cm}
\end{figure}

\begin{restatable}{theorem}{vguithm}
\label{thm:VG=UI}
    The Void Genesis model is equivalent under polynomial simulation to the Unique-Instruction parallel model.
\end{restatable}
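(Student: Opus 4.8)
The plan is to establish the two simulation directions separately and then merge them, since by definition two systems are polynomially equivalent exactly when each polynomially simulates the other. The direction ``UI parallel simulates Void Genesis'' is delivered directly by Lemma~\ref{thm:UI > VG} (construction in Table~\ref{tab:UI > VG}(a)), after checking the four polynomial-efficiency conditions. The direction ``Void Genesis simulates UI parallel'' is obtained by composition: Lemma~\ref{thm:VG > UI} gives a polynomial simulation of UI parallel by $k$-VG (Table~\ref{tab:kVG > UI}(b)), Theorem~\ref{thm:vg-kvg} gives a polynomial simulation of $k$-VG by VG, and Theorem~\ref{thm:transitivity} (transitivity of polynomial simulation) then yields a polynomial simulation of UI parallel by VG. Conjoining the two directions gives the theorem, and no new gadget has to be built.

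For the forward direction I would fix the configuration map $\cmap$ for the construction of Table~\ref{tab:UI > VG}(a): a configuration of $\crn{C_{UI}}$ maps to the $\crn{C_{VG}}$ configuration obtained by erasing every auxiliary species ($G$, the $G_i$, $E$, $t^2$, and all $e_{\lambda_j}, t_j^1, r_j^1, r_j^2$), while mapping to $\undefConfig$ unless the auxiliary part is in one of the legal in-progress states of a single simulated rule application; the representative set $[\![\config{C}\,]\!]$ is then the $\crn{C_{VG}}$ configuration $\config{C}$ augmented with exactly one $\single{G}$ and nothing else. I would then verify: \textbf{(follows)} every macrotransition between defined configurations realizes exactly one Void Genesis reaction — this is the correctness core of the lemma and the place where the maximal-parallel semantics of UI must be tracked so that intermediate ``garbage'' states indeed map to $\undefConfig$ and cannot leak into a different defined configuration; \textbf{(models)} every $\config{A}\rightarrow_{VG}\config{B}$ is realized, from any representative of $\config{A}$, by the rounds of Table~\ref{tab:UI > VG}(a) ending at a representative of $\config{B}$, with the zero species $z$ produced precisely when some tracked species reaches zero. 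The polynomial conditions are then: species and rules are $O(|\species|\cdot|\reactions|)$; maximum rule size is $O(\max_i |\{\RM_i\}|)$ up to a constant; a macrotransition is a constant number of parallel rounds, hence $O(|\{\RM_i\}|)$ single steps and an expected transition count polynomial in $|\species|+|\reactions|$; and volume increases only additively by $O(|\{\RM_i\}|)$, so every intermediate configuration is volume-bounded by (in fact linear in) its endpoints.

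For the backward direction no additional work is needed beyond citing Lemma~\ref{thm:VG > UI} for correctness and polynomial efficiency of the $k$-VG-simulates-UI construction (the $G_i$-per-rule sequential reactant check, commit to $I_i$ or abort to $N_i$, the $X_i$-count detecting that a maximal set has been fixed, and the $F$/$F_i$ reset phase), citing Theorem~\ref{thm:vg-kvg} for the polynomial simulation of $k$-VG by VG, and applying Theorem~\ref{thm:transitivity}.

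The step I expect to be the main obstacle is the \emph{polynomial transition sequences} condition in the two lemmas that this theorem inherits. In the UI-simulating-VG direction the subtlety is that UI always fires a \emph{maximal} unique-instruction set, so the gadget must be arranged so that on every configuration reachable inside a macrotransition the unique-instruction-maximal set is forced to be exactly the intended one; showing that uniform sampling of applicable rules cannot stall by bouncing among intermediate states, and that the expected number of steps to finish a round stays polynomial, is the crux. In the VG-simulating-UI direction the analogous worry is that the $k$-VG machine enumerates all $|\reactions|$ candidate rules and, for each, sequentially probes up to $|\RM_i|$ reactants with possible aborts; one must argue that the abort/restore cycles keep the expected macrotransition length in $O(|\species|+|\reactions|)$ and that volume stays bounded throughout. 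Once these efficiency bounds are secured for the two lemmas, the theorem itself is merely an invocation of transitivity.
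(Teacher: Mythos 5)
Your proposal is correct and follows essentially the same route as the paper: the theorem is proved by combining Lemma~\ref{thm:UI > VG} (UI simulates VG directly), Lemma~\ref{thm:VG > UI} ($k$-VG simulates UI), Theorem~\ref{thm:vg-kvg} (VG/$k$-VG equivalence), and the Transitivity Theorem~\ref{thm:transitivity}. The additional detail you give about the configuration maps and the efficiency conditions is material the paper places inside the cited lemmas rather than in the theorem's own (one-line) proof.
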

\subsection{Register Machine}\label{sub:undecide}

In this section, we show that Void Genesis CRNs are Turing Universal by constructing a simulation of a standard register machine.


\begin{theorem}\label{thm:VG_undecide}
$k$-VG CRNs are Turing Universal.
\end{theorem}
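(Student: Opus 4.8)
The plan is to establish Turing universality by giving a faithful, lock‑step simulation of a deterministic Minsky register machine; since the halting problem for (two‑)counter machines is undecidable, an effective such simulation suffices. Given a register machine $M$ with registers $R_1,\dots,R_m$ and numbered instructions $1,\dots,n$, I would build a $k$-VG CRN whose species are $\{p_1,\dots,p_n\}$ (one \emph{state} species per instruction), $\{r_1,\dots,r_m\}$ (one \emph{register} species per register), and $\{z_{r_1},\dots,z_{r_m}\}$, with $\species_1=\{p_1,\dots,p_n\}\cup\{r_1,\dots,r_m\}$, $\species_2=\{z_{r_1},\dots,z_{r_m}\}$, and the partial map $\Zero$ defined \emph{only} on the register species, $\Zero(r_j)=z_{r_j}$. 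The configuration of $M$ with program counter at instruction $i$ and $R_j=c_j$ is encoded by the CRN configuration with one copy of $p_i$, $c_j$ copies of each $r_j$, and one copy of $z_{r_j}$ for every $j$ with $c_j=0$. The invariant I would carry through the whole simulation is: exactly one state species is present, and for every $j$ either $r_j\ge 1 \wedge z_{r_j}=0$ or $r_j=0 \wedge z_{r_j}=1$.

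Then I would translate instructions. The instruction ``increment $R_j$, goto $\ell$'' becomes the two rules $\single{p}_i+\single{r}_j\rightarrow\single{p}_\ell+2\single{r}_j$ and $\single{p}_i+\single{z}_{r_j}\rightarrow\single{p}_\ell+\single{r}_j$: by the invariant exactly one of $r_j\ge1$, $z_{r_j}=1$ holds, so exactly one rule is applicable, and in both cases $R_j$ goes up by one and the counter species is left in the invariant state (in the second case the stale $z_{r_j}$ is consumed, which triggers nothing because $z_{r_j}\in\species_2$ lies outside the domain of $\Zero$ and hence does not chain). The instruction ``if $R_j>0$ then decrement $R_j$, goto $\ell_1$; else goto $\ell_2$'' becomes $\single{p}_i+\single{r}_j\rightarrow\single{p}_{\ell_1}$ and $\single{p}_i+\single{z}_{r_j}\rightarrow\single{p}_{\ell_2}+\single{z}_{r_j}$: if $r_j\ge1$ the first fires, and if it brings $r_j$ to $0$ the $k$-VG dynamics automatically create the single $z_{r_j}$ the invariant requires; if $r_j=0$ then $z_{r_j}=1$ and the second fires, taking the zero branch with nothing else changed. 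A halt instruction is encoded by letting $p_{\mathrm{halt}}$ appear in no rule's reactants. Taking the CRN's initial configuration to be one $p_1$, the initial register contents as copies of the $r_j$, and one $z_{r_j}$ for each register that starts empty, a short induction on the number of simulated steps shows the invariant is preserved and that each encoded configuration has exactly one applicable rule, whose effect is the encoding of $M$'s successor configuration; in particular the CRN reaches a terminal configuration iff $M$ halts.

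The routine parts are the inductive bookkeeping and the case check of the instruction gadgets. The one genuinely delicate point — and where I would spend the most care — is that the register/zero‑species correspondence is one‑directional in the model: $k$-VG \emph{creates} a $z_{r_j}$ when $r_j$ hits zero but never removes one, so the simulation itself must consume $z_{r_j}$ exactly when a register rises from $0$ back to positive (the $\single{p}_i+\single{z}_{r_j}\rightarrow\single{p}_\ell+\single{r}_j$ increment rule), and I must verify that no reaction other than an increment‑from‑zero ever touches a $z_{r_j}$, and that no auxiliary species — in particular a state species $p_i$, whose count drops to zero at every step — is in the domain of $\Zero$, since otherwise spurious zero‑species would be generated and the invariant would collapse. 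Once this is pinned down, undecidability of halting for $M$ transfers directly, giving Turing universality of $k$-VG; and since plain Void Genesis simulates $k$-VG (Lemma~\ref{lem:vg-kvg}), plain Void Genesis is Turing universal as well.
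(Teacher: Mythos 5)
Your construction is essentially identical to the paper's: one species per state and per register, $\Zero$ defined only on register species, the increment handled by a pair of rules branching on $r_j$ versus $z_{r_j}$ (consuming the stale $z_{r_j}$ when incrementing from zero), and the decrement/zero-test handled by a pair of rules where the model's dynamics create $z_{r_j}$ automatically on a decrement to zero. The paper states the same rule table with less discussion of the invariant, so your proposal is correct and takes the same approach, just with more explicit justification.
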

\begin{proof}
Given a Register Machine (RM) with $s_1, \ldots, s_n$ states--each with either a $inc(r_l, s_j)$ or $dec(r_l, s_j, s_k)$ instruction--and $r_1, \ldots, r_m$ registers, we construct the $k$-VG CRN as follows. Each register and state will be represented by a species, and the instructions will be encoded in the rules. The initial configurations should have one copy of $s_1$ (assume $s_1$ is the RM's starting state) and the register species' counts should be equal to the registers they represent.
    
\begin{table}[t]
\centering
        
        
    \begin{tabular}{| l l | l l | }\hline
        \textbf{Instruction} & \textbf{Relevant Rules} & \textbf{Instruction} & \textbf{Relevant Rules} \\ \hline
        \multirow{3}{*}{$s_j: inc(r_i, s_k)$} 
        & $\single{s}_j + \single{r}_i \rightarrow \single{s}_k + \single{r}_i + \single{r}_i$ 
        & \multirow{3}{*}{$s_j: dec(r_i, s_k, s_l)$} 
        & $\single{s}_j + \single{r}_i \rightarrow \single{s}_k$ \\ 
        & $\single{s}_j + \single{z}_{r_i} \rightarrow \single{s}_k + \single{r}_i$  
        & & $\single{s}_j + \single{z}_{r_i} \rightarrow \single{s}_l + \single{z}_{r_i}$ \\
        & & & $\Zero(r_i) = z_{r_i}$ \\  \hline
    \end{tabular}
    
    \caption{\textit{Rules for a Void Genesis CRN to simulate a given Register Machine.}}\label{tab:RM_sim}
\end{table}

Table \ref{tab:RM_sim} shows what rules should be made for each state of a given RM. Because $k$-VG CRNs can simulate any given RM, $k$-VG CRNs are Turing Universal.
\end{proof}

Although it can be inferred from this result and the simulation equivalence results that all the models studied here are Turing Universal, we do not give formal proofs of this due to space.

\section{Conclusion}\label{sec:conc}

In this paper, we demonstrate equivalence through polynomial simulation between 5 natural extensions to the CRN model. We centralize these simulations around the Void Genesis CRN model, as this model's ability to detect zero is one of the simplest augmentations to a regular CRN. We then show that Void Genesis CRNs are Turing Universal, implying that Step-Cycle CRNs, Inhibitory CRNs, Parallel CRNs, and Coarse-Rate CRNs are also Turing Universal. While this work is complete in proving equivalence between these models, there are still several interesting open problems to consider  (some of which are shown in Figure \ref{fig:vg_hub}):

\begin{itemize}
    \item We aim to explore constrained versions of our simulation definition that recover existing notions as special cases. Does restricting the configuration map to be consistent with an underlying species-species mapping immediately results in weak bisimulation? If that underlying map is a total bijective function, does that yield strong bisimulation?
    \item For iCRNs, a rule is inhibited by the existence of one or more species. Our definition effectively uses a logical \texttt{OR} (inhibition is only false when all inhibitor counts are zero). A natural extension is to consider inhibition functions using other logic (e.g., AND - a reaction is inhibited only when all of its inhibitors are present).
    \item Coarse-Rate CRNs are limited to 2 ranks for reactions. A natural generalization of this model is to allow for $k$ different ranks ($k$-rate CRNs). What is the relationship between Coarse-Rate CRNs and $k$-rate CRNs?
    \item Even when limited to only void reactions (rules where no species are created), step CRNs are able to compute threshold circuits \cite{anderson2024steps,anderson2024computing}. Does this suggest that Step-Cycle CRNs, even when limited to only void reactions (void Step-Cycle), are still Turing Universal?  
    \item Are there more efficient ways to simulate these augmented CRNs using VG CRNs?
    \item What is the complexity of reachability in restricted instances of each model, such as \cite{Fu:2025:SAND,Fu:2025:DNA}? As mentioned, we know the complexity with step-cycles \cite{Luchsinger:2025:UCNC}, but deletion-only rules have not been explored in detail in the other models.
\end{itemize}

\newpage
\bibliographystyle{plainurl}
\bibliography{crns}
\newpage
\appendix
\section{Formal Void Genesis}\label{constructions}

\begin{table}[H]
    \centering
        \iCRNsimKVG
        \caption*{Table ~\ref{tab:iCRN > VG} (restated): iCRN simulating $k$-VG}
\end{table}
\icrnkvg*
\begin{proof} For any given $k$-VG CRN $\crn{C_{KVG}}$ and a configuration $\config{C}$ of $\crn{C_{KVG}}$, we construct an Inhibitory CRN $\crn{C_{IC}}$ and configuration $\config{C}'$ that simulates $\crn{C_{KVG}}$ over $\config{C}$ as follows. 

\para{Construction}
Given a $k$-VG CRN $\crn{C_{KVG}} = ((\species, \reactions), Z_{\emptyset})$, we construct an Inhibitory CRN $\crn{C_{IC}} = ((\species', \reactions'), \mathcal{I}')$ as follows.

We first build the species set $\species'$. First, each species from $\species$ is added with no modification. We then create the set of species $e_{\lambda_1}, \ldots, e_{\lambda_{\size{\species_1}}}$ to check if the count of a reactant consumed in a simulated reaction $\reaction_i$ has reached zero. We also make the species $I$ to ensure that each reactant was checked before simulating another reaction.
The configuration of $\crn{C_{IC}}$ $\config{C}'$ is simply the configuration of $\crn{C_{KVG}}$ $\config{C}$ with no changes.

We now construct the reaction set of $\crn{C_{IC}}$ $\reactions'$. First, Reaction $1$ simulates a reaction $\reaction_i \in \reactions$. In addition to consuming $\config{\RM}_i$ and creating $\config{\PM}_i$, Reaction $1$ produces $\config{e}_{\{\RM_i\}}$ and $\size{\{\RM_i\}}$ copies of $\single{I}$ for zero-checking each reactant of $\reaction_i$. The zero-checking itself is performed by Reactions $2$ or $3$. If a copy of a reactant $\lambda_i\in\RM_i$ still remains present, then Reaction $2$ preserves that copy. Otherwise, Reaction $3$ will be fired and produce one copy of $\single{z}_{\lambda_i}$. Note that Reaction $1$ is inhibited by the species $I$, which can only be consumed by Reactions $2$ and $3$, so it cannot be fired again until the zero-checking for all reactants $\RM_i$ is completed.

We now define the configuration mapping of $\crn{C_{IC}}$ as follows. If a copy of the species $I$ \emph{does not exist} in a configuration of $\crn{C_{IC}}$ $\config{C'}$, then $\config{C}'$ maps to a configuration of $\crn{C_{KVG}}$ $\config{C}$ that holds the same count of species $\lambda_i,\forall \lambda_i\in\species$. The intuition of this mapping is that $I$ only exists in a configuration following the simulation of some reaction $\reaction_i\in\reactions$ with Reaction $1$. Then Reactions $2$ and $3$ will be used to zero-check each reactant used in $\reaction_i$, consuming the copies of $I$ in the process. Only once each reactant has been checked (and thus all $I$ copies removed) can Reaction $1$ be fired again to simulate another reaction from $\reactions$.

\[\cmap(\config{C}') =
\begin{cases}
    \sum_{\lambda' \in \species} \config{C}'[\lambda']\cdot\single{\lambda}' & \text{if}\, I \not\in \Set{\config{C}'}\\
    \undefConfig & \text{otherwise.}
\end{cases}\]

\para{$\crn{C_{KVG}}$ follows $\crn{C_{IC}}$} Given configurations $\config{A'}$ and $\config{B'}$ in $\crn{C_{IC}}$ such that $\cmap(\config{A}')$ and $\cmap(\config{B}')$ are defined, we show that $\crn{C_{KVG}}$ follows $\crn{C_{IC}}$. We first show that if $\config{A'} \macrotrans_{\crn{C_{IC}}} \config{B'}$ and $\cmap(\config{A'}) \not= \cmap(\config{B'})$ then $M(\config{A'})\rightarrow_{\crn{C_{IC}}} M(\config{B'})$. Since $\cmap(\config{A}')$ and $\cmap(\config{B}')$ are defined, the species $I$ does not exist in $\config{A'}$ and $\config{B'}$. Because $\cmap(\config{A'}) \not= \cmap(\config{B'})$, there must sequence a sequence of rule applications to transform $\config{A}'$ to $\config{B}'$. Assume a reaction of the form $\config{\RM}_i \xrightarrow{I} \config{e}_{\Set{\RM_i}} + \config{\PM}_i + \size{\Set{\RM_i}} \cdot \single{I}$ can be applied in $\config{A}'$. Let the resulting configuration be $\config{X}'_1=\config{A}'-\config{\RM_i}+\config{e}_{\Set{\RM_i}} + \config{\PM}_i + \size{\Set{\RM_i}} \cdot \single{I}$. By the presence of $\size{\Set{\RM_i}}$ copies of $I$, Reaction $1$ will be inhibited. Then only a sequence of application of Reactions $2$ or $3$ will occur in $\config{X}'_1$, consuming all copies of $\single{e}_{\lambda_i}$ and $\single{I}$ and producing some copies of $\single{z}_{\lambda_i}$. Let the set of $\single{z}_{\lambda_i}$ copies added be $\config{Z}$. Then the final resulting configuration is $\config{B}'=\config{A}'-\config{\RM_i}+\config{\PM_i}+\config{Z}$. Denote the first applied rule $\config{\RM}_i \xrightarrow{I} \config{e}_{\Set{\RM_i}} + \config{\PM}_i + \size{\Set{\RM_i}} \cdot \single{I}$ as $\reaction_i'$. By the construction of $\crn{C_{IC}}$, $\reaction'_i$ must be created from some reaction $\reaction_i\in\reactions$ where $\RM_i\rightarrow\PM_i$.
By the mapping function the configuration of $\crn{C_{KVG}}$ $M(\config{A'})$ contains the same count for all $\lambda_i\in\species$. Then if $\reaction'_i$ is applicable in $\config{A}'$, $\reaction_i$ must also be applicable in $M(\config{A'})$.  
Applying $\reaction_i$ in $M(\config{A'})$ then results in the configuration $M(\config{B'}) = M(\config{B'_t}) + \config{Z}$, where $M(\config{B'_t})=M(\config{A'}) - \RM_i + \PM_i$. Clearly, $M(\config{B'})=\config{B}'$.
Thus, it follows that $\cmap(\config{A'}) \rightarrow_{\crn{C_{KVG}}} \cmap(\config{B'})$, if $\config{A'} \macrotrans_{\crn{C_{IC}}} \config{B'}$ and $\cmap(\config{A'}) \not= \cmap(\config{B'})$.

\para{$\crn{C_{IC}}$ models $\crn{C_{KVG}}$} Assume there exists two configurations of $\crn{C_{KVG}}$ $\config{A}$ and $\config{B}$ such that $\config{A}\
\rightarrow_{C_{KVG}} \config{B}$, where for every configuration in $\crn{C_{IC}}$ for which a mapping is defined, there exist a unique mapping configuration $M(\config{A'})$ in $\crn{C_{KVG}}$. Therefore, for any configuration $\config{A}$ we only have one configuration $\config{A'} \in [\![\config{A}\,]\!]$ which contains the same count of species $\lambda_i \in \species$ as in $\config{A}$. Similarly the configuration $\config{B}$ will have only one unique configuration mapping $\config{B'} \in [\![\config{B}\,]\!]$. Both configurations $\config{A'}$ and $\config{B'}$ can not contain species $I$ because their mapping is defined. By $\textit{$\crn{C_{KVG}}$ follows $\crn{C_{IC}}$}$, there must exist an applicable reaction $\reaction_i \in \reactions$ such that $\config{B} = \config{B_t} + \config{Z}$, in which $\config{B_t}=\config{A}-\reactants+\products$ and $\config{Z} = \sum_{\lambda \in C} \single{z}_\lambda$, where $C =\{\lambda \in \species | \config{A}[\lambda] \not= 0$ and $\config{B_t}[\lambda]=0\}$ for which the all reactants of the reaction are present. If such a reaction is applicable and the system $\crn{C_{IC}}$ is in configuration $\config{A'}$, there will exist a sequence of applicable reactions such that $\config{B} = \config{B_t} + \config{Z}$ where $\config{B_t}=\sum_{\lambda_i\in\species}\config{A'}[\lambda_i]-\reactants+\products$. The resulting configuration in $\crn{C_{IC}}$ after the application of $\reaction_i$ will contain the same count of species $\lambda_i\in\species$ as in $\config{B}$. As defined by the configuration mapping, such a configuration will be $\config{B'}$. Therefore for any two configurations $\config{A}$ and $\config{B}$ such that $\config{A}\rightarrow_{\crn{C_{IC}}}\config{B}$ and $\config{A'} \in [\![\config{A}\,]\!]$ and $\config{B'} \in [\![\config{B}\,]\!]$, $\config{A'}\macrotrans_{\crn{C_{IC}}}\config{B'}$.
   
Because $\crn{C_{KVG}}$ follows $\crn{C_{IC}}$, and $\crn{C_{IC}}$ models $\crn{C_{KVG}}$, we can state that $\crn{C_{IC}}$ simulates $\crn{C_{KVG}}$.

\para{Polynomial Simulation} We now show that the above simulation is polynomial efficient as follows.

\begin{enumerate}
    \item \textbf{polynomial species and rules:} As defined in the construction, along with the original species in $\species$,
    species $I$ and $e_{\lambda_1}, \ldots, e_{\lambda_{\size{\species_1}}}$ are created for $\species'$
    , resulting in $\size{\species'}=2\cdot\size{\species}+1= \mathcal{O}(\size{\species})$ unique species in the construction. Regarding reactions, we create one reaction for each reaction in $\reactions$ and two reactions for each species in $\species_1$.
    We then have $\size{\reactions}+2\cdot \size{\species}$ reactions, thus resulting in $\size{\reactions'} = \mathcal{O}(\size{\reactions} + \size{\species})$ total rules.
    \item \textbf{polynomial rule size:} Given a reaction $\reaction_i\in\reactions$, Reaction $1$ is created, which consumes $\size{\{\RM_i\}}$ reactants and produces $\size{\{\PM_i\}}+2\cdot\size{\{\RM_i\}}$ products.
    \item \textbf{polynomial transition sequences:} To perform a macro transition $M(\config{A'}) \macrotrans_{\crn{C_{KVG}}} M(\config{B'})$, Reaction $1$ is first applied once. Then one of Reactions $2$ or $3$ will be executed for all $\size{\PM_i}$ products, resulting in $\size{\species}$ additional rule applications. Thus, for a sequence of $n$ transitions in $\crn{C_{KVG}}$, the resulting simulating sequence is of length $\mathcal{O}(\size{\species}\cdot n)$.
    \item \textbf{polynomial volume:} If the configuration $\config{C'}$ maps to a configuration $\config{C}$ in $\crn{C_{KVG}}$ then $\config{C'}[\lambda_i] = \config{C}[\lambda_i]$ and the volume of system in $\config{C'}$ is polynomial in the volume of $\crn{C_{KVG}}$. For any other configuration $\config{C}''$ with an undefined mapping, the volume is bounded by the volume of configuration $\config{A'}$ and $\config{B'}$ such that $\config{A'}\macrotrans_{\crn{C_{IC}}}\config{B'}$ through configurations $\config{C}''$, i.e. is polynomial in the simulated system. This is because Reaction $1$ introduces the products $\config{\PM}_i$ and $\size{\RM_i}$ copies of $I$ into $\config{C}''$, and Reactions 2 and 3 will decrease the volume of $\config{C}''$ by a constant amount.
\end{enumerate}

The above simulation only utilizes 1) polynomial species and rules, 2) polynomial rule size, 3) polynomial transition sequences, and 4) polynomial volume. Therefore, $\crn{C_{IC}}$ simulates $\crn{C_{KV}}$ under polynomial simulation.
\end{proof}

\kvgicrn*
\begin{proof}

For any given Inhibitory CRN $\crn{C_{IC}}$ and a configuration of $\crn{C_{IC}}$ $\config{C}$, we show that we can construct a $k$-VG CRN $\crn{C_{KVG}}$ and a configuration of $\crn{C_{KVG}}$ $\config{C}'$ that simulates $\crn{C_{IC}}$ over $\config{C}$.

\para{Construction}
Given an Inhibitory CRN $\crn{C_{IC}} = ((\species, \reactions), \mathcal{I})$, we construct a $k$-VG CRN $\crn{C_{KVG}} = ((\species', \reactions'), Z_{\emptyset})$ as described below.

We use a global leader species to select a reaction and then generate products sequentially as explained in Section \ref{subsec: techniques}. The initial configuration of $\crn{C_{KVG}}$ starts with the global species $G$, which is consumed to produce $G_i$ when selecting reaction $\reaction_i \in \reactions$. The species $P_i^1 \ldots P_i^{\size{\PM_i}+1}$ are generated as described in Table \ref{tab:prod_break} when generating products $p_1, \ldots, p_{\size{\PM_i}}$. Therefore, the species set $\species'$ will contain the original species set $\species$ as well as a single leader species $G$, reaction-specific leaders $G_1 \ldots G_{\size{\reactions}}$, and the species $P_i^1, \ldots, P_{\size{\reactions}}^{\size{\PM_{\size{\reactions}}}+1}$. Along with the leader species, the $\crn{C_{KVG}}$ will also contain one zero species for every species in $\species$. The species set $\species'$ for $\crn{C_{KVG}}$ can then be defined as $\species' = \species \cup \{G, G_1, \ldots, G_{\size{\reactions}}, P_i^1, \ldots, P_{\size{\reactions}}^{\size{\PM_{\size{\reactions}}}+1}, z_{\lambda_1}, \ldots, z_{\lambda_{\size{\species}}}\}$.

\begin{table}[H]
    \centering
    \KVGsimiCRN
    \caption*{Table~\ref{tab:VG > i} (restated): $k$-VG simulating iCRN}
\end{table}

For any reaction $\reaction_i \in \reactions$, let $Z_i$ represent the set of zero species $z_{\lambda_j}, \forall \lambda_j \in \mathcal{I}(\reaction_i)$. $\crn{C_{KVG}}$ represents each reaction $\reaction_i = (\config{\RM_i}, \config{\PM_i})$ in $\crn{C_{IC}}$ by the two reactions as given in Table \ref{tab:VG > i}. The first reaction uses the reactants $\config{\RM_i}$ and the set of zero species $Z_i$ as a catalyst. This reaction is applicable if both conditions for a reaction $\reaction_i$ are true: the reactants $\RM_i$ are present; and the inhibitors are absent, that is, the zero species in $Z_i$ are present. If the first reaction is applied for any $\reaction_i$, then the species $G$ is consumed and a new species $G_i$ is produced making the second reaction applicable. The second reaction in Table \ref{tab:VG > i} represents a sequence of reactions provided in Table \ref{tab:prod_break}. This consumes $G_i$ and generates the products of $\reaction_i$ while consuming any existing zero species corresponding to the products. Once all products have been produced, the species $G$ is added back to the system.

We now define a configuration mapping as follows. If there exists the leader species $G$ in the configuration $\config{C'}$ then the configuration $\config{C'}$ maps to a configuration in $\crn{C_{IC}}$ that contains the same count of species $\lambda'$, $\forall \lambda' \in \species$. This is because the species $G$ exists in the system before (and after) applying a reaction $\reaction_i \in \reactions$. All intermediate configurations $\config{C'}$ that do not contain $G$ do not map to anything in the original system. As described in Table \ref{tab:prod_break} when a new product species $p_j$ is generated, if there exists a zero species $z_{p_j}$, it is consumed making sure there never exists more than one copy of a zero species. Therefore, $\forall z_{\lambda_i}$ in the system, the count of $z_{\lambda_i} \in [0,1]$. Each configuration $\config{C'}$ in $\crn{C_{KVG}}$ for which $\cmap(\config{C'})$ is defined will contain one leader species $G$, a sequence of $\lambda_i \in \species$, and a unique sequence of zero species corresponding to $\lambda_i$ for which $\config{C'}[\lambda_i]=0$. Hence every $\config{C'}$ will have a unique mapping $\cmap(\config{C'})$.

\[\cmap(\config{C}') =
\begin{cases}
    \sum_{\lambda' \in \species} \config{C}'[\lambda']\cdot\single{\lambda}' & \text{if } \single{G} \in \Set{\config{C}'} \\
    \undefConfig & \text{otherwise.} 
\end{cases}\]

The configuration mapping function is polynomial-time computable because when the function is defined (i.e. $\single{G} \in \Set{\config{C}'})$ it computes the sum of vectors $\config{C}'[\lambda']\cdot\single{\lambda}'$ for all species in the simulated system. Therefore, the configuration mapping function is computable in $\mathcal{O}(\size{\species})$. Furthermore, for each configuration $\config{C}$ in $\crn{C_{IC}}$ there exists a configuration $\config{C'}$ in $\crn{C_{KVG}}$ that contains $G$ and $\forall \lambda_i \in \species, \config{C'}[\lambda_i] = \config{C}[\lambda_i]$, and $\config{C'}[z_{\lambda_i}]=1\; \text{if} \config{C}[\lambda_i]=0$. In other words, for every $\config{C}$ in $\crn{C_{IC}}$ there exists a $\config{C'}$ in $\crn{C_{KVG}}$ such that $\cmap(\config{C'}) = \config{C}$.

Now we show that $\crn{C_{KVG}}$ simulates any given $\crn{C_{IC}}$. This is done in two parts. We first show that $\crn{C_{IC}}$ follows $\crn{C_{KVG}}$, and then we prove that $\crn{C_{KVG}}$ models $\crn{C_{IC}}$.

\para{$\crn{C_{IC}}$ follows $\crn{C_{KVG}}$}
$\crn{C_{IC}}$ follows $\crn{C_{KVG}}$ if for any two given configurations $\config{A'}$ and $\config{B'}$ in $\crn{C_{KVG}}$ where $\cmap(\config{A'})$ and  $\cmap(\config{B'})$ are defined, such that $\config{A'} \macrotrans_{\crn{C_{KVG}}} \config{B'}$ and $\cmap(\config{A'}) \not= \cmap(\config{B'})$, then $\cmap(\config{A'})\rightarrow_{\crn{C_{IC}}}\cmap(\config{B'})$. We know that both $\config{A'}$ and $\config{B'}$ contain the leader $G$ because $\cmap(\config{A'})$ and $\cmap(\config{B'})$ are defined. Because $\cmap(\config{A'}) \not= \cmap(\config{B'})$, there exists an applicable reaction $\reaction_i \in \reactions$ for which there will be reactions $\single{G} + \config{\RM_i} + \config{Z_i} \rightarrow \single{G}_i + \config{\RM_i} + \config{Z_i}$ and $\single{G}_{i} + \config{\RM_i} \SeqProds \single{G} + \config{\PM_i}$ that are applicable in $\crn{C_{KVG}}$. And the configuration $\config{A'}$ is the configuration before the reaction $\single{G} + \config{\RM_i} + \config{Z_i} \rightarrow \single{G}_i + \config{\RM_i} + \config{Z_i}$ is applied and $\config{B'}$ is the resulting configuration after the sequence of reactions represented by $\single{G}_{i} + \config{\RM_i} \SeqProds \single{G} + \config{\PM_i}$ is applied. The system will start with the configuration $\config{A'}$ and transition through certain intermediate configurations consisting of species $G_i, P_i^1, \ldots, P_i^{\size{\PM_i}+1}$, finally resulting in configuration $\config{B'}$. In the original system $\crn{C_{IC}}$, the configuration $\cmap(\config{A'})$ containing the same count of species in $\species$ as in $\config{A'}$ will transition to the configuration $\cmap(\config{A'})-\config{\RM_i}+\config{\PM_i}$ as described in Definition \ref{def:inhibitory-dynamics}. The resulting configuration $\cmap(\config{B'}) = \cmap(\config{A'})-\config{\RM_i}+\config{\PM_i}$ in $\crn{C_{IC}}$ will contain the same counts of species in $\species$ as in $\config{B'}$. Hence $\cmap(\config{A'}) \rightarrow_{\crn{C_{IC}}} \cmap(\config{B'})$, if $\config{A'} \macrotrans_{\crn{C_{KVG}}} \config{B'}$ and $\cmap(\config{A'}) \not= \cmap(\config{B'})$.

\para{$\crn{C_{KVG}}$ models $\crn{C_{IC}}$}
Say $\config{A}$ and $\config{B}$ are two configurations in $\crn{C_{IC}}$ such that $\config{A} \rightarrow_{C_{IC}} \config{B}$. As described in the configuration of configuration mapping,  for any configuration $\config{A}$ we only have one configuration $\config{A'} \in [\![\config{A}\,]\!]$ which contains the same count of species $\lambda_i \in \species$ as in $\config{A}$. Similarly the configuration $\config{B}$ will have only one configuration $\config{B'} \in [\![\config{B}\,]\!]$. Both configurations $\config{A'}$ and $\config{B'}$ will contain species $G$ because their mapping is defined. We know that $\config{A} \rightarrow_{C_{IC}} \config{B}$, hence there exists an applicable reaction $\reaction_i \in \reactions$ such that $\config{B} = \config{A}-\config{\RM_i}+\config{\PM_i}$ for which \ inhibitors $\mathcal{I}(\reaction_i)$ are absent. If such a reaction is applicable and the system $\crn{C_{KVG}}$ is in configuration $\config{A'}$ containing $G$, there will exist a sequence of applicable reactions in $\crn{C_{KVG}}$ (representing $\reaction_i$) leading to configuration $\config{B'}$ that also contains $G$ such that $\sum_{\lambda_i\in\species}\config{B'}[\lambda_i]\cdot\single{\lambda}_i = \sum_{\lambda_i\in\species}\config{A'}[\lambda_i]\cdot\single{\lambda}_i - \config{\RM_i}+\config{\PM_i}$. By the definition of configuration mapping, $\config{B'} \in [\![\config{B}\,]\!]$ (i.e., $\cmap(\config{B'}) = \config{B}$). Therefore for any two configurations $\config{A}$ and $\config{B}$ such that $\config{A}\rightarrow_{\crn{C_{IC}}}\config{B}$ and $\config{A'} \in [\![\config{A}\,]\!]$ and $\config{B'} \in [\![\config{B}\,]\!]$, $\config{A'}\macrotrans_{\crn{C_{KVG}}}\config{B'}$.

Because (1) $\crn{C_{IC}}$ follows $\crn{C_{KVG}}$, and (2) $\crn{C_{KVG}}$ models $\crn{C_{IC}}$ using a polynomial-time computable function $\cmap$, we can say that $\crn{C_{KVG}}$ simulates $\crn{C_{IC}}$. 

\para{Polynomial Simulation}
We now show that the above simulation is polynomial efficient as follows.

\begin{enumerate}
    \item \textbf{polynomial species and rules:} As defined in the construction, along with the original species in $\species$, the set $\species'$ contains $\size{\species}$ zero species, $(\size{\reactions}+1)$ leader species, and $\size{\species}\cdot\size{\reactions}$ species to generate products. Hence, $\size{\species'} = 2\cdot\size{\species} + \size{\reactions}+1 + \size{\species}\cdot\size{\reactions}$. Because for any CRN the $\size{\reactions}=\mathcal{O}(\size{\species}^c)$ for some constant $c$, therefore, $\size{\species'}=\mathcal{O}(\size{\species}^c)$. As given in Table \ref{tab:VG > i}, for each reaction in $\reaction$ we have one reaction for selection and $\size{\species}+1$ reactions for sequential product generation (as discussed in Table \ref{tab:prod_break}). Therefore, $\size{\reactions'} = \size{\reactions}\cdot(\size{\species}+2)$ which is $\mathcal{O}(\size{\reactions}^2)$.

    \item \textbf{polynomial rule size:} Each rule $\reaction_i \in \reactions$ of the form $\RM_i \rightarrow \PM_i$ is simulated by $\size{\species}+2$ rules. For the selection rule we add $\mathcal{O}(\size{\RM_i})$ species to the reactants and products, while for sequential products we only add a constant number of species to the reactants and products of $\reaction_i$. Therefore, $\forall \reaction_i \in \reactions$, every rule simulating $\reaction_i$ is polynomial in the size of $\reaction_i$.
    
    \item \textbf{polynomial transition sequences:} For any rule $\reaction_i$ transitioning to a rule $\reaction_j$ in the $\crn{C_{IC}}$, $\mathcal{O}(\size{\species})$ intermediate rules are applied in the $\crn{C_{KVG}}$. Therefore, any sequence of $n$ transitions in the $\crn{C_{IC}}$ is simulated by $\mathcal{O}(\size{\species})\cdot n$ transitions in the $\crn{C_{KVG}}$.
    
    \item \textbf{polynomial volume:} Every configuration $\config{C'}$ in $\crn{C_{KVG}}$ will contain one copy of one of the leader species and at most one copy of the $z_{\lambda_i}$ species $\forall \lambda_i \in \species$ along with the original species in $\species$. If the configuration $\config{C}'$ maps to a configuration $\config{C}$ in $\crn{C_{IC}}$ then $\config{C}'[\lambda_i] = \config{C}[\lambda_i]$ and the volume of system in $\config{C}'$ is polynomial in the volume of $\crn{C_{IC}}$ in configuration $\config{C}$. For any other configuration $\config{C}''$ that maps to $\undefConfig$, the volume is bounded by the volume of configurations $\config{A'}$ and $\config{B}'$ such that $\config{A}'\macrotrans_{\crn{C_{KVG}}}\config{B'}$ through configurations $\config{C}''$, i.e. is polynomial in the simulated system. This is because, to generate the products using the second rule in Table \ref{tab:VG > i}, a copy of each product is added to the system in every one of the $\config{C}''$ configurations.
    
\end{enumerate}
The above simulation only utilizes 1) polynomial species and rules, 2) polynomial rule size, 3) polynomial transition sequences, and 4) polynomial volume. Therefore, $\crn{C_{KVG}}$ simulates $\crn{C_{IC}}$ under polynomial simulation.
\end{proof}

\vgicrnthm*
\begin{proof}
    Lemma \ref{lem:i > VG} and Lemma \ref{lem:VG > i} show that both $C_{IC}$ and $C_{KVG}$ simulate each other under polynomial simulation. Therefore the $k$-VG CRN and Inhibitory CRN models are equivalent under polynomial simulation. Theorem \ref{thm:vg-kvg} proves that the $k$-VG CRN model is equivalent to the Void Genesis model under polynomial simulation. Based on the Transitivity Theorem (Thm. \ref{thm:transitivity}), the Void Genesis model is equivalent to the Inhibitory CRN model under polynomial simulation.
\end{proof}


\crkvg*

\begin{table}[H]
    \centering
    \CRsimKVG
    \caption*{Table~\ref{tab:cr > VG} (restated): Coarse-Rate CRN simulating $k$-VG CRN}
\end{table}
\begin{proof}
For any given $k$-VG CRN $\crn{C_{KVG}}$ and a configuration $\config{C}$, we show that there exists a Course-Rate CRN $\crn{C_{CR}}$ and a configuration $\config{C}'$ that simulates $\crn{C_{KVG}}$ over $\config{C}$.

\para{Construction}
Given a $k$-VG CRN $\crn{C_{KVG}} = ((\species,\reactions),Z_\emptyset)$, we construct a Course-Rate CRN $\crn{C_{CR}} = ((\species',\reactions'),rank')$ as follows.

In addition to $\species$, we first create the species $G$ to select a reaction $\reaction_i\in\reactions$. We also add the species $e_{\lambda_1}, \ldots, e_{\lambda_{\size{\species_1}}}$ to check if any reactant $r_j\in\RM_i$ has reached a count of zero from simulating $\reaction_i$. Finally $x$ and $y$ are constructed to modify $e_{\lambda_i}$ depending on if the new count of $\lambda_i$ is non-zero or zero, respectively.
Then $\species' = \species \cup \{ x, y, G, e_{\lambda_1}, \ldots, e_{\lambda_{\size{\species_1}}} \}$. Let the configuration of $\crn{C_{CR}}$ $\config{C}'$ be $\config{C}$ in addition to one copy of $G$. 

Reaction 1 in Table \ref{tab:cr > VG} simulates $\reaction_i \in \reactions$, which is selected by $\single{G}$. In addition to producing $\PM_i$, the reaction also creates the species $\single{x}$ and $\single{e}_{\lambda_1}, \ldots, \single{e}_{\lambda_{\size{\species_1}}}$. With Reaction 2, $\single{x}$ will then remove a copy of $\single{e}_{\lambda{_i}}$ only if the new count of $\lambda_i$ is not zero. If Reaction 2 can no longer be applied, any remaining copies of $\single{e}_{\lambda{_i}}$ must be represented by a species $\lambda_i$ with a count of zero. Afterwards, Reaction 4 transforms $\single{x}$ into $\single{y}$. Now, through Reaction 3, $\single{y}$ will transform all remaining copies of $\single{e}_{\lambda_i}$ into $\single{z}_{\lambda_i}$. Finally, Reaction 5 converts $\single{y}$ into $\single{G}$, allowing us to choose another reaction from $\reactions$.

We define the mapping function as follows. The intuition is that every configuration in $\crn{C_{CR}}$ maps to a configuration in $\crn{C_{KVG}}$ if the counts of each species are exactly the same, ignoring the added $G$, $x$, and $y$ species in the $\crn{C_{CR}}$ system. Ensuring that species have not reached a count of $0$ is done through the $e_{\lambda_i}$ species, which is why any configuration that contains an $e_{\lambda_i}$ is undefined. 

\[\cmap(\config{C}') =
\begin{cases}
    \undefConfig & \text{if}\, \Set{\config{C}'} \bigcap \, \Set{e_{\lambda_1}, \ldots, e_{\lambda_{\size{\species_1}}}} \neq \emptyset\\
    \sum_{\lambda' \in \species} \config{C}'[\lambda'] \cdot \single{\lambda'} & \text{otherwise}
\end{cases}\]

\para{$\crn{C_{KVG}}$ follows $\crn{C_{CR}}$}
Let $\config{A},\config{B}$ be configurations in $\crn{C_{KVG}}$ and $\config{A'},\config{B'}$ be configurations in $\crn{C_{CR}}$ with $M(\config{A'}) \neq M(\config{B'})$. The initial configuration for $C_{CR}$ starts with the same initial configuration of $C_{KVG}$ and one copy of $G$. It is clear that $M(\config{A'}) = \config{A}$. We first show that if $\config{A'} \macrotrans_{\crn{C_{CR}}} \config{B'}$, and $M(\config{A'}) \neq M(\config{B'})$, then $\config{A} \macrotrans_{\crn{C_{KVG}}} \config{B}$. With only one copy of $G$ in the system, a rule of the form $\single{G} + \config{\RM_i} \rightarrow \single{x}+\single{e}_{\{\RM_i\}} + \config{\PM_i}$ must occur first. As a result of this reaction, we are left with $\single{x}$ along with $\single{e}_{\{\RM_i\}}$ and the products from the chosen reaction. Therefore the rule $\single{x} + \single{\lambda}_i + \single{e}_{\lambda_i} \rightarrow \single{x} + \single{\lambda}_i$ occurs next, in which a copy of $\single{e}_{\lambda_i}$ is consumed if $\lambda_i$ exists in the system. Let the resulting configuration be $\config{X}'_1$. This reaction is then applied for all $\lambda_i \in \config{\RM}_i$. After exhaustively applying the reaction, 
only the new reaction of the form $\single{x} \rightarrow \single{y}$ can be executed. If any count of $\lambda_i$ reached zero, then there still exists a $\single{e}_{\lambda_i}$ respective to $\lambda_i$ that did not get consumed. This allows a reaction of the form $\single{y} + \single{e}_{\lambda_i} \rightarrow \single{y} + \single{z}_{\lambda_i}$ to trigger, creating a $\single{z}_i$ corresponding directly to $\single{z}_i \in \config{B}$. Let this be $\config{X}'_2$. Following the exhaustive application of that reaction, only then can the rule of the form $\single{y} \rightarrow \single{G}$ be applied, thus reaching $\config{B'}$. From the mapping function, since $M(\config{A'})=\config{A}$, each $\lambda_i \in \{\config{\RM_j}\}$ satisfies $\config{A'}[\lambda_i']=\config{A}[\lambda_i]$, where $\lambda_i'$ is the mapped species of $\lambda_i$ in $\crn{C_{CR}}$. Thus, if $\reaction_j'$ can be applied in $\crn{C_{CR}}$, it must be the case that $\reaction_j$ can be applied in $\crn{C_{KVG}}$. It follows that $M(\config{A'}) \rightarrow_{\crn{C_{KVG}}} M(\config{B'})$.

\para{$\crn{C_{CR}}$ models $\crn{C_{KVG}}$} The final part of the proof is to show that if $\config{A} \rightarrow_{\crn{C_{KVG}}} \config{B}$ implies that $\forall \config{A'}\in [\![\config{A}\,]\!]$, $\exists \config{B'}\in [\![\config{B}\,]\!]$ such that $\config{A'} \Rightarrow_{\crn{C_{CR}}} \config{B'}$. Let $\reaction_j \in \reactions$ denote a rule that can be applied to $\config{A}$. By the mapping function, there exists a species $\lambda_i'$ for each $\lambda_i \in \{\config{\RM_j}\}$ satisfying $\config{A'}[\lambda_i']=\config{A}[\lambda_i]$. Thus, as long as $\config{A'} \in [\config{A}\,]\!]$ contains an $G$ species, the rule $\single{G} + \config{\RM_i} \rightarrow \single{e}_{\{\RM_i\}} + \config{\PM_i} \in \reactions'$ must also be executable. From \textit{$\crn{C_{KVG}}$ follows $\crn{C_{CR}}$}, we know that there are at most 3 representative configurations for $\config{A}$: one with a single $x$ species ($\config{X}'_1$), one with an $y$ species but no $z_{\lambda_i}$ (between $\config{X}'_1$ and $\config{X}'_2$), and one with a $y$ species and a discrete amount of $z_{\lambda_i}$ species ($\config{X}'_2$). Assume that $\config{L'}$ was the configuration prior to reaching configuration $\config{X}'_1$. In the 3 cases, we show that configuration $\config{A'}$ is reachable through the sequence $\config{L'} \Rightarrow \config{A_1'} \Rightarrow \config{A_2'} \Rightarrow \config{A'}$. Thus, since $\config{A'}$ is reachable from all other representative configurations and since $\config{A'} \Rightarrow \config{B'}$, $\config{A} \rightarrow_{\crn{C_{KVG}}} \config{B}$ implies that $\forall \config{A'}\in [\![\config{A}\,]\!]$, $\exists \config{B'}\in [\![\config{B}\,]\!]$ such that $\config{A'} \Rightarrow_{\crn{C_{CR}}} \config{B'}$.

\para{Polynomial Simulation} We finally show that the simulation is polynomial. 
\begin{enumerate}
    \item \textbf{polynomial species and rules:}  As described in the construction, the set of species of $\crn{C_{CR}}$ contains $\species$ (from $\crn{C_{KVG}}$), the added $x,y,z$ species and the checker species $e_{\lambda_1}, \ldots, e_{\lambda_{\size{\species_1}}}$. The set of rules only adds $\size{\reactions}+2\cdot\size{\species}$ fast reactions, as well as two slow reactions.

    \item \textbf{polynomial rule size:} For a rule $\reaction \in \reactions$ of size $(m,n)$, a rule of at most size $(m+1,n+m)$ is created for $\reactions'$.
    
    \item \textbf{polynomial transition sequences:} For any rule $\reaction_i$ transitioning to a rule $\reaction_j$ in the $\crn{C_{CR}}$, $\mathcal{O}(\size{\species})$ intermediate rules are applied in the $\crn{C_{KVG}}$. Therefore, any sequence of $n$ transitions in the $\crn{C_{CR}}$ is simulated by $\mathcal{O}(\size{\species})\cdot n$ transitions in the $\crn{C_{KVG}}$.
    
    \item \textbf{polynomial volume:} Each configuration $\config{C'}$ has a defined mapping only if each $e_{\lambda_i}$ species is not present in $\config{C'}$. Thus, any defined mapping will only differ by a constant amount from configuration $M(\config{C}')$. Any intermediate configuration between $\config{C}'$ and some reachable configuration $\config{C}''$ with a defined mapping will have at most an increase of $\mathcal{O}(|\species|)$ volume introduced by the $e_{\lambda_i}$ species. Thus, the difference in volume from $\config{C}'$ will differ only by a polynomial amount.
\end{enumerate}
\end{proof}

\begin{table}[H]
        \centering
        \KVGsimCR
        \caption*{Table~\ref{tab:VG > cr} (restated): $k$-VG CRN simulating Coarse-Rate CRN.}
\end{table}
\kvgcr*
\begin{proof}
For any given Coarse-Rate CRN $\crn{C_{CR}} = ((\species, \reactions), rank)$ and a configuration $\config{C}$ of $\crn{C_{CR}}$, we show that there exists a $k$-VG CRN $\crn{C_{KVG}} = ( (\species', \reactions'), \Zero' )$ that simulates $\crn{C_{CR}}$ over $\config{C}$.

\para{Construction} Given a Coarse-Rate CRN $\crn{C_{CR}} = ((\species, \reactions), rank)$ and a configuration of $\crn{C_{CR}}$ $\config{C}$, we construct a $k$-VG CRN $\crn{C_{KVG}} = ( (\species', \reactions'), \Zero' )$ and a configuration of 
$\crn{C_{KVG}}$ $\config{C'}$ as follows. Let $\reactions^2$ and $\reactions^1$ be the set of all fast (rank 2) and slow (rank 1) reactions of $\crn{C_{CR}}$, respectively.

We first construct the species set of $\crn{C_{KVG}}$ $\species'$. Each species of $\crn{C_{CR}}$ $\lambda_i\in\species$ are added with no modification and we create the new species $G,s,t$,
$g_1, \ldots, g_{\size{\reactions^2}}$,
$z_{g_1}, \ldots, z_{g_{\size{\reactions^2}}}$,
$r_1, \ldots, r_{\size{\reactions^2}+1}$,
$z_{\lambda_1}, \ldots, z_{\lambda_{\size{\Lambda}}}$, and
$P_1^1, \ldots, P_{\size{\reactions^1}}^{\size{\PM_{\size{\reactions^1}}} + 1}$.
We then define $\Zero'$ as the mapping $g_i\rightarrow z_{g_i}$ and $\lambda_i\rightarrow z_{\lambda_i}$. $G$'s role is attempting to apply a random \emph{fast} reaction $\reaction_i^2\in\reactions^2$ by interacting with the respective $g_i$ species. The presence of a zero species for a $g_i$ species ($z_{g_i}$) indicates the corresponding fast reaction cannot be applied in the current configuration by $G$. After a fast reaction is applied, we use each $r_i$ species to restore any missing $g_i$ species in the configuration to allow all fast reactions to be re-applicable again. $s$ is produced when no fast reactions are applicable and is used to attempt applying a random \emph{slow} reaction $\reaction_i^1\in\reactions^1$ instead. $t$ is made if $\reaction_i^1$ was applied and is used to restore the species $G$ and $g_1, \ldots, g_{\size{\reactions^2}}$ back in the configuration in case any fast reaction becomes applicable again following the application of $\reaction_i^1$. Finally, $P_1^1, \ldots, P_{\size{\reactions^1}}^{\size{\PM_{\size{\reactions^1}}} + 1}$ is used to generate the products of slow reactions using the sequential products technique described in Subsubsection \ref{subsec: techniques}; we note that we use $s$ to initiate the procedure and produce $t$ at the end instead of $G$.
We also define the configuration of $\crn{C_{KVG}}$ $\config{C'}$ to be the configuration of $\crn{C_{CR}}$ $\config{C}$ with the addition of one copy of $G$ and one copy for each of $g_1, \ldots, g_{\size{\reactions^2}}$.

We now construct the reaction set of $\crn{C_{KVG}}$ $\reactions'$. We describe each reaction as listed in Table \ref{tab:VG > cr}. Reaction 1 applies a fast reaction $\reaction_i^2\in\reactions^2$. Specifically, $\single{G}$, $\single{g}_i$, and the reactants $\config{\RM_i}$ will be consumed, creating the products $\config{\PM_i}$ and a species $\single{r}_1$. $\single{r}_1$ is then used to iterate through all fast reactions $\reaction_1^2, \ldots, \reaction_n^2$ by transforming into $\single{r}_2$, which then transforms into $\single{r}_3$, and so on, restoring any missing $\single{g}_i$ species if needed (Reactions 2 and 3). Once the $\single{r}_{|\reactions^2|+1}$ species is created, Reaction 4 converts it into $\single{G}$, allowing Reaction 1 to occur again.
If one of the reactants of a fast reaction $\reaction_i^2$ is detected absent (by a $\single{z}_k$ species), the respective $\single{g}_{i}$ species for the reaction is deleted, in turn creating a $\single{z}_{g_j}$ species (Reaction 1b). If all $\single{z}_{g_j}$ species $ \single{z}_{g_1}, \ldots, \single{z}_{g_{\size{\reactions^2}}}$ are present, then no fast reaction is applicable, allowing any slow reaction to be selected instead. To do this, we first consume $\single{G}$ and $\single{z}_{g_1}, \ldots, \single{z}_{g_{\size{\reactions^2}}}$ to create the species $\single{s}$ (Reaction 5). Reaction 7 uses $\single{s}$ to apply a slow reaction $\reaction_i^1$. This reaction uses the sequential products procedure from Subsubsection \ref{subsec: techniques}, meaning that we produce each product of $\reaction_i^1$ sequentially, allowing for the deletion of all $\single{z_{p_j}}$ species that correspond to a product of $\reaction_i^1$. The procedure ends by creating the species $\single{t}$, which restores $\single{G}$ and $\single{g}_1, \ldots, \single{g}_{|\reactions|}$ with Reaction 6, allowing any potentially new applicable fast reactions to be selected over the slow reactions.

We finally define the mapping function $\cmap(\config{C}')$ as follows. If $\single{G}$ exists in a configuration $\config{C}'$, then that configuration maps to a configuration $\config{C}$ in which the count of all species $\lambda_i \in \Lambda$ in $\config{C}$ is equal to the count of all species $\lambda_i' \in \Lambda'$ in $\config{C}'$. The intuition of our mapping is that $\single{G}$ appears in the configuration before and after both a slow and fast reaction application. With a fast reaction $\reaction_i^2$, $\single{G}$ is used to apply it in Reaction 1. Then, after Reaction 4, $\single{G}$ is added back, allowing a new fast reaction to be applied. In order for a slow reaction to be applied, $\single{G}$ is deleted to create the species $\single{s}$ (Reaction 5). When a slow reaction is applied with $\single{s}$, $\single{G}$ will be restored back following Reaction 6.

\[\cmap(\config{C}') =
\begin{cases}
    \sum_{\lambda' \in \species} \config{C}'[\lambda']\cdot\single{\lambda}' & \text{if } \single{G} \in \Set{\config{C}'} \\
    \undefConfig & \text{otherwise.}

\end{cases}\]

\para{$\crn{C_{CR}}$ follows $\crn{C_{KVG}}$} Let $\config{A'}$ and $\config{B'}$ be configurations in $\crn{C_{KVG}}$. We show that if $\config{A'} \macrotrans_{\crn{C_{KVG}}} \config{B'}$ and $M(\config{A'}) \not = M(\config{B'})$, $\cmap(\config{A'})\rightarrow_{\crn{C_{CR}}}\cmap(\config{B'})$. First, $\config{A'}$ and $\config{B'}$ must contain $\single{G}$ in order for both configurations to be defined. However, since $M(\config{A'}) \not = M(\config{B'})$, there must exist an applicable sequence of reactions to transform the count of the species $\lambda' \in \Lambda'$ of $\config{A'}$ to the count of the species $\lambda' \in \Lambda'$ of $\config{B'}$. We consider two cases for this transition:

\textit{Case 1: Fast Reaction.}
Assume a rule of the form $\single{G}+\single{g}_i+\config{\RM_i}\rightarrow\single{r}_1+\config{\PM_i}$ is applicable to $\config{A'}$. Let the resulting new configuration be $\config{X'_1}=\config{A'}-\single{G}-\single{g}_i-\config{\RM_i}+\single{r}_1+\config{\PM_i}$. We then fire the reactions $\single{r}_j + \single{g}_j \rightarrow \single{r}_{j+1} + \single{g}_j$ or $\single{r}_j + \single{z}_{g_j} \rightarrow \single{r}_{j+1} + \single{g}_j$ for all $1, \ldots, |\reactions^2|$, restoring any $\single{g}_j$ species missing in the configuration. Denote the set of restored $\single{g}_j$ species (besides $\single{g}_i$ from the first applied rule) as $\mathcal{G}$ and the set of removed $\single{z}_{g_j}$ species as $\mathcal{Z_G}$. After the reaction $\single{r}_{|\reactions^2|+1} \rightarrow \single{G}$ is ran, the configuration $\config{B'}=\config{A'}-\mathcal{Z_G}-\config{\RM_i}+\mathcal{G}+\config{\PM_i}$ is reached. Clearly, $\config{X'_1}$ and any intermediate configuration between  $\config{X'_1}$ and $\config{B'}$ have an undefined mapping because $\single{G}$ is not present. $\config{A'}$ and $\config{B'}$ then have a defined mapping, but because $\config{B'}=\config{A'}-\mathcal{Z_G}-\config{\RM_i}+\mathcal{G}+\config{\PM_i}$, $M(\config{A'}) \not = M(\config{B'})$.
Denote the first applied rule $\single{G} + \single{g}_i + \config{\RM_i} \rightarrow \single{r}_1 + \config{\PM_i}$ as ${\reaction_i'}^2$. By the construction of $\crn{C_{KVG}}$, ${\reaction_i'}^2$ must be derived from a fast reaction $\reaction_i^2 \in \reactions^2$ in which $\RM_i \rightarrow \PM_i$. If ${\reaction_i'}^2$ is applicable for all $\lambda' \in \Lambda'$ in $\config{A'}$, because $M(\config{A'})=\config{A}$, it must follow that $\reaction_i^2$ is applicable for all $\lambda \in \Lambda$ in $\config{A}$. Applying $\reaction_i^2$ to $\config{A}$ then results in the configuration $\config{B}=\config{A}-\RM_i+\PM_i$. Clearly, $M(\config{B'})=\config{B}$. Thus, $\cmap(\config{A'})\rightarrow_{\crn{C_{CR}}}\cmap(\config{B'})$ when considering fast reactions.

\textit{Case 2: Slow Reaction.}
Assume no rule of the form $\single{G} + \single{g}_i + \config{\RM_i} \rightarrow \single{r}_1 + \config{\PM_i}$ is applicable to $\config{A'}$. For simplicity, assume a $\single{z}_{g_i}$ species already exists for all of $1,\ldots,\size{\reactions^2}$ in $\config{A'}$ and denote this set of all $\single{z}_{g_i}$ species $\mathcal{Z_G}$. 
Then, only the reaction $\single{G} + \single{z}_{g_1} + \ldots + \single{z}_{g_{|\reactions^2|}} \rightarrow \single{s}$ can be applicable to $\config{A'}$. Let the resulting new configuration be $\config{X'_1}=\config{A'}-G- \mathcal{Z_G}+\single{s}$. Assume a rule of the form $\single{s} + \config{\RM_i} \SeqProds \single{t} + \config{\PM_i}$ can be applied to $\config{X'_1}$, which is then followed by the reaction $\single{t} \rightarrow \single{G} + \single{g}_1 + \ldots + \single{g}_{|\reactions^2|}$. Denote the set of all $\single{g}_i$ species $\mathcal{G}$. The resulting configuration is then $\config{B'}=\config{A'}-\mathcal{Z_G}-\config{\RM_i}+\mathcal{G}+\config{\PM_i}$. Similar to Case 1, only $\config{A'}$ and $\config{B'}$ have a defined mapping, but $M(\config{A'}) \not = M(\config{B'})$. 
Denote the applied rule $\single{s} + \config{\RM_i} \SeqProds \single{t} + \config{\PM_i}$ as ${\reaction_i'}^1$.
By the construction of $\crn{C_{KVG}}$, because no rule of the form $\single{G} + \single{g}_i + \config{\RM_i} \rightarrow \single{r}_1 + \config{\PM_i}$ was applicable in $\config{A'}$, it follows from Case 1 that there does not exist a fast reaction ${\reaction_i}^2$ that is applicable in $\config{A}$. Now consider ${\reaction_i'}^1$, which is created from a slow reaction $\reaction_i^1 \in \reactions^1$ in which $\RM_i \rightarrow \PM_i$. Because $\reaction_i^1$ is applicable to $\config{X'_1}$ for all $\lambda' \in \Lambda'$, since $\lambda' \in \Lambda'$ is unmodified across both $\config{A'}$ and $\config{X'_1}$ \emph{and} $M(\config{A'})=\config{A}$, it must follow that $\reaction_i^1$ is applicable for all $\lambda \in \Lambda$ in $\config{A}$. Applying $\reaction_i^1$ to $\config{A}$ then results in the configuration $\config{B}=\config{A}-\RM_i+\PM_i$, and $M(\config{B'})=\config{B}$. Thus, $\cmap(\config{A'})\rightarrow_{\crn{C_{CR}}}\cmap(\config{B'})$ when considering slow reactions.

\para{$\crn{C_{KVG}}$ models $\crn{C_{CR}}$} Let $\config{A}$ and $\config{B}$ be configurations in $\crn{C_{CR}}$, and $\config{A'}$ and $\config{B'}$ be configurations in $\crn{C_{KVG}}$. We show that $\config{A} \rightarrow_{\crn{C_{CR}}} \config{B}$ implies that $\forall \config{A'}\in [\![\config{A}\,]\!]$, $\exists \config{B'}\in [\![\config{B}\,]\!]$ such that $\config{A'} \Rightarrow_{\crn{C_{KVG}}} \config{B'}$. First, consider the case of using fast reactions. Let $\reaction_i^2\in\reactions^2$ be a fast reaction that reaches $\config{B}$ from $\config{A}$. By the mapping function, $M(\config{A'})=\config{A}$, so $\Set{\lambda' \in \Lambda'}=\Set{\lambda \in \Lambda}$. Furthermore, $M(\config{A'})$ is defined, so $\single{G}$ must be present in $\config{A'}$. Then by the construction of $\crn{C_{KVG}}$, there must exist a rule of the form $\single{G}+\single{g}_i+\config{\RM_i}\rightarrow\single{r}_1+\config{\PM_i}$ that is created from a fast reaction $\reaction_i^2 \in \reactions^2$ and is applicable in $\config{A'}$. As shown in \emph{$\crn{C_{CR}}$ follows $\crn{C_{KVG}}$}, applying the rule results in $\config{B'}=\config{A'}-\mathcal{Z_G}-\config{\RM_i}+\mathcal{G}+\config{\PM_i}$, so $M(\config{A'}) \not = M(\config{B'})$. Because $M(\config{B'})=\config{B}$, $\config{A} \rightarrow_{\crn{C_{CR}}} \config{B}$ implies $\forall \config{A'}\in [\![\config{A}\,]\!]$, $\exists \config{B'}\in [\![\config{B}\,]\!]$ such that $\config{A'} \Rightarrow_{\crn{C_{KVG}}} \config{B'}$. A similar case is made for using slow reactions.

Because $\crn{C_{CR}}$ follows $\crn{C_{KVG}}$ and $\crn{C_{KVG}}$ models $\crn{C_{CR}}$, $\crn{C_{KVG}}$ simulates $\crn{C_{CR}}$.

\para{Polynomial Simulation} We finally show that the simulation is polynomial efficient. Let $m$ be the maximum amount of products from any slow reaction, and $n$ the maximum amount of reactants from a fast reaction.

\begin{enumerate}
    \item \textbf{polynomial species and rules:} The species set of $\crn{C_{KVG}}$ is made from $2\cdot\size{\species}+3\cdot\size{\reactions^2}+m\cdot\size{\reactions^1}+4$ created species, resulting in $\mathcal{O}(\size{\species}+\size{\reactions^2}+m\cdot\size{\reactions^1})$ total species. The reaction set of $\crn{C_{KVG}}$ consists of $(3+n)\cdot\size{\reactions^2}+2\cdot\size{\reactions^1}\cdot(m+1)+3$ created reactions, resulting in $(n+1)\cdot\size{\reactions^2}+(m+1)\cdot\size{\reactions^1}$ total reactions.

    \item \textbf{polynomial rule size:} The maximum reactant size for a reaction in $\crn{C_{KVG}}$ is the consumption of all $\size{\reactions^2}\cdot\single{z_{g_i}}$ species for applying a slow reaction. Likewise, the maximum product size is re-introducing all $\size{\reactions^2}\cdot\single{z_{g_i}}$ species after applying the slow reaction.
    
    \item \textbf{polynomial transition sequences:} In the worst case, $M(\config{A'}) \macrotrans_{\crn{C_{KVG}}} M(\config{B'})$ only through a slow reaction $\reaction_i^1\in\reactions^1$. As shown in the \emph{$\crn{C_{CR}}$ follows $\crn{C_{KVG}}$} section, this can require the removal of all $\single{g}_{i}$ species for $1,\ldots,\size{\reactions^2}$ in $\config{A'}$, resulting in an application sequence of length $\mathcal{O}(\size{\reactions^2})$. The remaining reaction sequence is of size $2\cdot m+4$, so the overall transition sequence length is of size $\mathcal{O}(\size{\reactions^2}+m)$.
    
    \item \textbf{polynomial volume:} Consider a configuration $\config{C'}$ that has a defined mapping. $\config{C'}$ then must contain one copy of $\single{G}$, one copy of either $\single{g}_i$ or $\single{z}_{g_i}$ for all $1,\ldots,\size{\reactions^2}$, and all $\config{C'}[\lambda']$ for all $\lambda'\in\Lambda$. Clearly, the volume of $\config{C'}$ is polynomial bounded by the volume of $\config{C}$. For  any macro transition $\config{A'} \macrotrans_{T'} \config{B'}$, the worst case for an intermediate configuration in the case of fast reactions is the introduction of $\size{\reactions^2}+1$ $\single{r_i}$ species. For slow reactions, only $s$ and $t$ are introduced. Thus, $\config{C'}$'s volume remains polynomial bounded by $\config{C}$'s.
\end{enumerate}
\end{proof}

\vgcrthm*
\begin{proof}
    Similar to Theorem \ref{thm:vgicrnthm}, the proof follows from Lemmas \ref{lem:cr > VG} and \ref{lem:VG > cr}, Theorem \ref{thm:vg-kvg} and the Transitivity Theorem \ref{thm:transitivity}. 
\end{proof}


\begin{table}[H]
    \centering
        \SCsimKVG
        \caption*{Table~\ref{tab:cycle > VG} (restated): Step-Cycle CRN simulating a $k$-VG CRN}
\end{table}
\sckvg*
\begin{proof}

For any given $k$-VG CRN $\crn{C_{KVG}} = ( (\species, \reactions), \Zero )$, we show that there exists a Step-Cycle CRN $\crn{C_{SC}}= ((\species', \reactions'), \config{S}_0)$ that simulates it. We start by describing the construction, then detail the formal proof.

\para{Construction} Given a $k$-VG CRN $\crn{C_{KVG}} = ((\species, \reactions), \Zero)$, we construct a Step-Cycle CRN $\crn{C_{SC}} = ((\species', \reactions'), \config{S}_0)$ as described. Let $\species_1, \species_2$ denote the set of non-zero/zero species in $\species$ respectively.
In addition to $\species$, $\crn{C_{SC}}$ has species $G$ to select a reaction and species $y$ to check if any $\lambda_i \in \species_1$ have reached a count of zero. This is accomplished with the help of a unique species for each $\lambda_i \in \species_1$, namely species $e_{\lambda_1}, \ldots, e_{\lambda_{\size{\species_1}}}$, which are produced as part of the selected rule. 
Thus, the initial configuration of $\crn{C_{SC}}$ should have one copy of $G$ in addition to the initial configuration of $\crn{C_{KVG}}$, with $\species' = \species \cup \{ G, y, x, w, e_{\lambda_1}, \ldots, e_{\lambda_{\size{\species_1}}}, z_{\lambda_1}, \ldots, z_{\lambda_{|\species_2|}} \}$, $\config{S}_0 = \single{y}$, and the reactions shown in Table \ref{tab:cycle > VG}.

Reaction $1$ in Table \ref{tab:cycle > VG} simulates $\reaction_i \in \reactions$, where species $G$ essentially `chooses' a reaction. Once a reaction is chosen, species $e_{\lambda_1}, \ldots, e_{\lambda_{\size{\species_1}}}$ are created as part of the product, with each $e_{\lambda_i}$ being removed through reaction $2$ only if $\lambda_i$ does not have a count of zero. Once the system is terminal, a $y$ is added, with reaction $3$ occurring for any $e_{\lambda_i}$ where the count of $\lambda_i$ is zero. As all $e_{\lambda_i}$ are removed or turned into a zero species, a $y$ is then added, enabling reaction $4$ to occur. This reaction `resets' the system by reintroducing the $G$ species, allowing for a new reaction $1$ to occur. If there no longer exists an applicable reaction $1$ (the system is now terminal), the addition of $y$ to the system allows reaction $5$ to be applied, which results in reaction $6$ executing infinitely. This reaction sequence simulates $\crn{C_{KVG}}$ reaching a terminal configuration and prevents the volume from growing infinitely.

Define the mapping function as follows. The intuition is that every step configuration in $\crn{C_{SC}}$ maps to a configuration in $\crn{C_{KVG}}$ if the count of each species $\lambda_i \in \Lambda$ matches the count of the corresponding species $\lambda_i' \in \Lambda'$ and if there exists either 1 $G$ species or 1 $y$ species. Ensuring that species have not reached a count of $0$ is done through the $e_{\lambda_i}$ species, which is why any configuration that contains an $e_{\lambda_i}$ is undefined.  

$$\cmap(\config{C_i}') =
\begin{cases}
    \sum_{\lambda' \in \species} \config{C_i}'[\lambda'] \cdot \single{\lambda}' & \text{if}\, \Set{\config{C_i}'} \bigcap \, \Set{e_{\lambda_1}, \ldots, e_{\lambda_{\size{\species_1}}}} = \emptyset \text{ and } | \{\config{C_i'}\} \bigcap \Set{G, y}| = 1 \\ 
    \undefConfig & \text{otherwise}
\end{cases}$$

The configuration mapping function is polynomial-time computable because when the function is defined, it computes the sum of vectors $\config{C}'[\lambda']\cdot\single{\lambda}'$ for all species in the simulated system. Therefore, the configuration mapping function is computable in $\mathcal{O}(\size{\species})$.
    
\para{$\crn{C_{KVG}}$ follows $\crn{C_{SC}}$} 
To start, note that for any step configuration $\config{C_i}$ in $\crn{C_{SC}}$, $i$ will be $0$. Thus, for simplicity, we exclude $i$ from the notation. Let $\config{A'}$, $\config{B'}$ be step configurations in $\crn{C_{SC}}$ with $M(\config{A'}) \not = M(\config{B'})$.
We first show that if $\config{A'}\macrotrans_{\crn{C_{SC}}} \config{B'}$, and $M(\config{A'})\neq M(\config{B'})$, then $M(\config{A'})\rightarrow_{\crn{C_{KVG}}} M(\config{B_n'})$. 
Since $M(\config{A'})$ and $M(\config{B'})$ are defined, there must not exist any $e_{\lambda_i}$ in either configuration, and either $G$ or $y$ must be present in $\config{A'}$. Assume $G \in \config{A'}$, and a rule of the form $\single{G} + \config{\RM_j} \rightarrow \single{e}_{\{\RM_j\}} + \config{\PM_j}$ is applicable to $\config{A'}$. As a result of this reaction, there are no $G$'s left in the system. Since $y$ is not initially present in $\config{A'}$, the rule $\single{\lambda}_i + \single{e}_{\lambda_i} \rightarrow \single{\lambda}_i$ occurs next, in which each $e_{\lambda_i}$ is consumed if $\lambda_i$ exists in the system. Denote the terminal configuration $\config{I'}$. A $y$ is then added to $\config{I'}$, resulting in the rule $\single{y} + \single{e}_{\lambda_i} \rightarrow \single{y} + \single{z}_{\lambda_i}$ occurring for each $e_{\lambda_i}$ with $\config{I'}[e_{\lambda_i}] > 0$. Denote the terminal configuration $\config{B'}$. 
Consider $2$ cases. If $\config{A'}-\config{\RM_j}$ does not result in a species $\lambda_i$ reaching a count of $0$, then $\config{I'}$ will not contain any $e_{\lambda_i}$ species, resulting in the new configuration $\config{I'}=\config{A'}-\config{\RM_j} - \single{G} + \config{\PM_j}$ with mapping $M(\config{I'}) = \undefConfig$. After the $y$ species is added, the resulting configuration $\config{B'}=\config{I'} + \single{y} = \config{A'}-\config{\RM_j}-\single{G} + \config{\PM_j} + \single{y}$ then has mapping $M(\config{B'}) \not = \undefConfig$. In a similar manner, if $\config{A'}-\config{\RM_j}$ results in some species $\lambda_i$ reaching a count of $0$, then $\config{I'}$ will contain $e_{\lambda_i}$, resulting in the mapping $M(\config{I'})=\undefConfig$. As a $y$ is added, the following configuration $\config{B'}=\config{I'} + \single{y} + \config{\mathcal{Z}} = \config{A'}-\config{\RM_j}-\single{G} + \config{\PM_j} + \single{y} + \config{\mathcal{Z}}$, where $\config{\mathcal{Z}}$ is the configuration of a single count of each $z_{\lambda_i}$ satisfying $\config{I}'[e_{\lambda_i}] > 0$, then has mapping $M(\config{B'}) \neq  \undefConfig$. 
In both cases, we reach a configuration $\config{B'}= \config{A'}-\config{\RM_j}-\single{G} + \config{\PM_j} + \single{y} + \config{\mathcal{Z}}$. It is clear that $M(\config{B'})$ is defined, with $M(\config{B'}) \not = M(\config{A'})$. From our initial assumption, $\config{B'}$ is only reachable from $\config{A'}$ if a rule of the form $\single{G} + \config{\RM_j} \rightarrow \single{e}_{\{\RM_j\}} + \config{\PM_j}$ can be executed. Denote this rule $\reaction_j'$. By construction, this implies there exists a rule $\reaction_j= \config{\RM_j} \rightarrow \config{\PM_j} \in \reactions$. From the mapping function, since $M(\config{A'})=\config{A}$, each $\lambda_i \in \{\config{\RM_j}\}$ satisfies $\config{A'}[\lambda_i']=\config{A}[\lambda_i]$, where $\lambda_i'$ is the mapped species of $\lambda_i$ in $\crn{C_{SC}}$. Thus, if $\reaction_j'$ can be applied in $\crn{C_{SC}}$, it must be the case that $\reaction_j$ can be applied in $\crn{C_{KVG}}$. Applying $\reaction_j$ to $\config{A}$ then results in configuration $\config{B}=\config{A}-\config{\RM_j}+\config{\PM_j} + \config{\mathcal{Z}}$, where $\config{\mathcal{Z}}$ denotes the set of zero species explicitly added by $\crn{C_{KVG}}$. It is clear that $M(\config{B'})=\config{B}$. It follows that $M(\config{A'}) \rightarrow_{\crn{C_{KVG}}} M(\config{B'})$. 
Now assume that $y \in \config{A'}$. Since $M(\config{A'})$ is defined, there must not exist any $e_{\lambda_i}$ and $G$ must not be present. Thus, $\config{A'}$ is terminal. In the next step, a $y$ is added. The rule $\single{y} + \single{y} \rightarrow \single{G}$ must then occur, resulting in new configuration $\config{A_1'}=\config{A'} - \single{y} + \single{G}$ with $M(\config{A_1'})=M(\config{A'})$. It follows from the above that $M(\config{A'}) \rightarrow_{\crn{C_{KVG}}} M(\config{B'})$. 

\para{$\crn{C_{SC}}$ models $\crn{C_{KVG}}$} The final part of the proof is to show that if $\config{A} \rightarrow_{\crn{C_{KVG}}} \config{B}$ implies that $\forall \config{A'}\in [\![\config{A}\,]\!]$, $\exists \config{B'}\in [\![\config{B}\,]\!]$ such that $\config{A'} \Rightarrow_{\crn{C_{SC}}} \config{B'}$. Let $\reaction_j \in \reactions$ denote a rule that can be applied to $\config{A}$. By the mapping function, there exists a species $\lambda_i'$ for each $\lambda_i \in \{\config{\RM_j}\}$ satisfying $\config{A'}[\lambda_i']=\config{A}[\lambda_i]$. Since $M(\config{A'})$ is defined, there must exist either a $G$ or $y$ species. In the latter, we have shown from \textit{$\crn{C_{KVG}}$ follows $\crn{C_{SC}}$} that configuration $\config{A'}$ is reached from $\config{A_1'}$ with $M(\config{A'})=M(\config{A_1'})$. We then consider when $G \in \config{A'}$. Assuming this is the case, from the fact that $\reaction_j$ is executable in $\crn{C_{KVG}}$, the rule $\single{G} + \config{\RM_j} \rightarrow \single{e}_{\{\RM_j\}} + \config{\PM_j} \in \reactions'$ must also be executable in $\crn{C_{SC}}$. As described in \textit{$\crn{C_{KVG}}$ follows $\crn{C_{SC}}$}, configuration $\config{B'}$ is then reachable such that $\config{B'}=\config{A'} - \config{\RM_j} + \config{\PM_j} + \config{\mathcal{Z}}$. Since $M(\config{B'})=\config{B'}$, $\config{A} \rightarrow_{\crn{C_{KVG}}} \config{B}$ implies that $\forall \config{A'}\in [\![\config{A}\,]\!]$, $\exists \config{B'}\in [\![\config{B}\,]\!]$ such that $\config{A'} \Rightarrow_{\crn{C_{SC}}} \config{B'}$.

\para{Polynomial Simulation} We now show that the above simulation is polynomial efficient as follows.
\begin{enumerate}
    \item \textbf{polynomial species and rules:} In addition to the already existing $|\species|$ species from $\crn{C_{KVG}}$, another $|\species|$ species are added, resulting in $|\species'|=\mathcal{O}(|\species|)$ total species. In addition to the already existing $|\reactions|$ rules from $\crn{C_{KVG}}$, another $|\reactions| + |\species|$ rules are added, resulting in $|\reactions'|= \mathcal{O}(|\reactions| + |\species|)$ total rules.

    \item \textbf{polynomial rule size:} For a rule $\reaction \in \reactions$ of size $(m,n)$, a rule of at most size $(m+1,n+m)$ is created for $\reactions'$.
    
    \item \textbf{polynomial transition sequences:} For each rule $\reaction_i \in \reactions$ applied in $\crn{C_{KVG}}$, as shown in \textit{$\crn{C_{KVG}}$ follows $\crn{C_{SC}}$}, rule $1$ produces at most $|\species|$ species that then take part in either rules $2$ or $3$, resulting in at most $|\species|$ additional rule applications. Thus, for a sequence of $n$ transitions in $\crn{C_{KVG}}$, the resulting simulating sequence is of length $\mathcal{O}(|\species| \cdot n)$.
    
    \item \textbf{polynomial volume:} Each configuration $\config{C}'$ has a defined mapping only if each $e_{\lambda_i}$ species is not present in $\config{C}'$. Thus, any defined mapping will only differ by a constant amount from configuration $M(\config{C}')$. Any intermediate configuration between $\config{C}'$ and some reachable configuration $\config{C}''$ with a defined mapping will have at most an increase of $\mathcal{O}(|\species|)$ volume introduced by the $e_{\lambda_i}$ species. Thus, the difference in volume from $\config{C}'$ will differ only by a polynomial amount.
\end{enumerate}
The above simulation only utilizes 1) polynomial species and rules, 2) polynomial rule size, 3) polynomial transition sequences, and 4) polynomial volume. Therefore, $\crn{C_{SC}}$ simulates $\crn{C_{KVG}}$ under polynomial simulation.
\end{proof}

\begin{table}[H]
    \centering
        \KVGsimSC
        \caption*{Table~\ref{tab:VG > cycle} (restated): $k$-VG CRN simulating a Step-Cycle CRN}
\end{table}

\begin{table}[H]
    \centering
    \SCReactants
    \caption*{Table~\ref{tab:check_reactants} (restated): (a) Checking reactants sequentially $\config{\RM_i} \SeqReacts \config{\PM_i}$ (b) Undoing reaction selection if not enough reactants exist for rule $i$.}
\end{table}

\kvgsc*
\begin{proof}

\para{Construction} Given a Step-Cycle CRN $\crn{C_{SC}} = ((\species, \reactions), (\config{S}_0, \config{S}_1, \ldots, \config{S}_{k-1})$, we construct a $k$-VG CRN $\crn{C_{KVG}} = ((\species', \reactions'), Z_{\emptyset})$. In addition to $\species$, $\crn{C_{KVG}}$ has species $G$, $\lambda_1', \ldots, \lambda_{|\species|}'$, $z_{\lambda_1}, \ldots, z_{\lambda_{|\species|}}$, $g_i$, $z_{g_i}$, $r_1, \ldots, r_{\size{\reactions} + 1}$, $R_i^j$, $R_i^{j^-}$, $P_i^l$, $s$, $s_0, \ldots, s_{k-1}$, and $t$, for $1 \leq i \leq |\reactions|, 1 \leq j \leq |\RM_i|+1$ and $1 \leq l \leq |\PM_i|+1$. Species $z_{\lambda_1}, \ldots, z_{\lambda_{|\species|}}$ and $z_{g_1}, \ldots, z_{g_|\reactions|}$ are the zero species; thus, define $Z_\emptyset$ to map each $\lambda_i \to z_{\lambda_i}$ and each $g_i \to z_{g_i}$.
\sloppy Species $G$ is used to select a reaction from the set of species $g_i$, which denotes that rule $i$ is potentially applicable. In the event of a rule application, species $r_1, \ldots, r_{|\reactions|+1}$ is used to reintroduce all missing $g_i$ species. Species $R_i^j$ check to ensure that enough reactants exist to execute rule $\reaction_i$, converting each reactant to a species $\lambda_1', \ldots, \lambda_{|\species|}'$, with species $R_i^{j^-}$ reintroducing the reactants if the rule is not applicable. Species $s$ signifies that no $\reaction \in \reactions$ is applicable, and species $s_0, \ldots, s_{k-1}$ represent the steps. Species $t$ is used to introduce the `steps' corresponding to each step. The initial configuration of $\crn{C_{KVG}}$ should have one copy of $g_1, \ldots, g_{\size{\reactions}}$, $s_0$, and $G$  in addition to the initial configuration of $\crn{C_{SC}}$, with $\species' = \species \cup \{ G, \lambda_1', \ldots, \lambda_{|\species|}', z_{\lambda_1}, \ldots, z_{\lambda_{|\species|}}, r_1, \ldots, r_{\size{\reactions} + 1}, s, s_0, \ldots, s_{k-1}, t \} \cup$  $\{ g_i, z_{g_i}, R_i^j, R_i^{j^-}, P_i^l : 1 \leq i \leq |\reactions|, 1 \leq j \leq |\RM_i|+1, 1 \leq l \leq |\PM_i|+1 \}$ and the reactions shown in Table \ref{tab:VG > cycle}.

Let $S = \{ S_1, \ldots, S_{k-1} \}$ be the set of steps. Reaction $1$ in Table \ref{tab:VG > cycle} attempts to apply $\reaction_i$ by checking if enough of each reactant exists, as shown in Table \ref{tab:check_reactants}. If it is successful, then it simulates the reaction and begins the process of adding $g_1, \ldots, g_{\size{\reactions}}$ back into the system (which is carried out in reactions $2$, $3$, and $4$). If a species in $\reaction_i$ does not exist in the system, then the reactants are reintroduced into the system as shown in Table \ref{tab:undo_reactants}, which also removes $g_i$ from the system, producing $z_{g_i}$. If there are no executable reactions, then reaction $5$ is executed to produce an $s$ species. This $s$ species then reacts with the current `step' of the system, denoted by species $s_i$. Reaction 7 or 8 then runs, introducing the species corresponding to the step. These reactions also introduce the species $t$, which then reintroduces $G$ along with each $g_i$ through reaction $6$. 

Define the mapping function as follows. The intuition is that every configuration in $\crn{C_{KVG}}$ maps to a step configuration in $\crn{C_{SC}}$ if the count of each species $\lambda_i \in \Lambda$ matches the count of the corresponding species $\lambda_i' \in \Lambda'$ and if there exists exactly 1 $G$ species. The step is mapped by the existence of a species $s_1, \ldots, s_{k-1}$, as described below. For simplicity, we represent the step-configuration as a pair, where the first element denotes the configuration and the second element denotes the step.

\[\cmap(\config{C}') =
\begin{cases}
    (\sum_{\lambda' \in \species} \config{C}'[\lambda']\cdot\single{\lambda}', i) & \text{if } \single{G} \in \Set{\config{C}'} \land s_i \in \{ \config{C}' \} \\
    \undefConfig & \text{otherwise}
    
\end{cases}\]

The configuration mapping function is polynomial-time computable because when the function is defined, it computes the sum of vectors $\config{C}'[\lambda']\cdot\single{\lambda}'$ for all species in the simulated system. Therefore, the configuration mapping function is computable in $\mathcal{O}(\size{\species})$. Mapping the configuration to the step is also achieved in $\mathcal{O}(|\species|)$ time, as we are only looking for the species $s_i$ in configuration $\config{C}'$.

\para{$\crn{C_{SC}}$ follows $\crn{C_{KVG}}$} 
Let $\config{A'}$, $\config{B'}$ be configurations in $\crn{C_{KVG}}$ with $M(\config{A'}) \not = M(\config{B'})$. We first show that if $\config{A'}\macrotrans_{\crn{C_{KVG}}} \config{B'}$, and $M(\config{A'})\neq M(\config{B'})$, then $M(\config{A'})\rightarrow_{\crn{C_{SC}}} M(\config{B'})$. Since $M(\config{A}')$ and $M(\config{B}')$ are defined, $G$ must exist in both $\config{A}'$ and $\config{B}'$, and $s_a, s_b$ must exist in $\config{A}'$, $\config{B}'$, respectively. We consider 2 cases. 

\textit{Case 1:} $a = b$. Both $G$ and $s_a$ must exist in $\config{A}'$ and $\config{B}'$. Assume a rule of the form $\single{G} + \single{g}_i \SeqReacts \single{r}_1 +  \config{\PM_i}$ exists and is fully applicable. A new configuration $\config{A}_1' = \config{A}' - \single{G} - \single{g}_i - \config{\RM_i} + \single{r}_1 + \config{\PM}_i$ is then reached without the $G$ species. Reactions $\single{r}_j + \single{g}_j \rightarrow \single{r}_{j+1} + \single{g}_j$ or $\single{r}_j + \single{z}_{g_j} \rightarrow \single{r}_{j+1} + \single{g}_j$ must then fire for $j=1, \ldots, |\reactions|$, which reintroduce any $g_i$ species that were not already in $\config{A}'$. Let $\mathcal{G}$ denote this set of species. Once reaction $\single{r}_{|\reactions|+1} \rightarrow \single{G}$ fires, the configuration $\config{B}' = \config{A}' + \mathcal{G} - \config{\RM}_i + \config{\PM}_i$ is reached. With no $G$ species, it is clear that $M(\config{A_1}) = \undefConfig$, along with any intermediate configuration between $\config{A}_1'$ and $\config{B}'$. From the fact that $\config{B}' = \config{A}' + \mathcal{G} - \config{\RM}_i + \config{\PM}_i$, $M(\config{A}') \neq M(\config{B}')$. Additionally, from the initial assumption, a rule of the form $\single{G} + \single{g}_i \SeqReacts \single{r}_1 +  \config{\PM_i}$ was executable. Denote this rule $\reaction_i'$. By construction, this implies the existence of a rule $\reaction_i \in \reactions$ with $\RM_i \rightarrow \PM_i$. From the mapping function, since $M(\config{A'})=(\config{A}$, $a$), each $\lambda_i \in \{\config{\RM_j}\}$ satisfies $\config{A'}[\lambda_i']=\config{A}[\lambda_i]$, where $\lambda_i'$ is the mapped species of $\lambda_i$ in $\crn{C_{SC}}$. Thus, if $\reaction_i'$ can be applied in $\crn{C_{KVG}}$, it must be the case that $\reaction_i$ can be applied in $\crn{C_{SC}}$. Applying $\reaction_i$ to $\config{A}$ then results in configuration $\config{B}=\config{A}-\config{\RM_j}+\config{\PM_j}$. It is clear that $M(\config{B'})=(\config{B}, a)$. It follows that $M(\config{A'}) \rightarrow_{\crn{C_{SC}}} M(\config{B'})$. 

\textit{Case 2:} $a \neq b$. For simplicity, we start with the assumption that $a + 1 \mod k = b$, that is, $a,b$ are consecutive steps. Assume no rule of the form $\single{G} + \single{g}_i \SeqReacts \single{r}_1 +  \config{\PM_i}$ exists. There must not exist a species $g_i$ in $\config{A}'$. The rule $\single{G} + \single{z}_{g_1} + \ldots + \single{z}_{g_{|\reactions|}} \rightarrow \single{s}$ then executes. Denote the configuration after the execution of this rule $\config{A_1}'$. Since $s, s_a \in \config{A_1}'$, the rule $\single{s} + \single{s}_a \SeqProds \single{t} + \single{s}_{a+1} + \config{S}_a$ or 
$\single{s} + \single{s}_{a} \SeqProds \single{t} + \single{s}_0 + \config{S}_{a}$ must then execute, followed by the rule $\single{t} \rightarrow \single{G} + \single{g}_1 + \ldots + \single{g}_{|\reactions|}$. Let this final configuration be $\config{B}'$. From the sequence of configurations, we get that either $\config{B}' = \config{A}' - \single{s}_a + \single{s}_{a + 1} + \single{S}_{a + 1} + g_1 + \ldots + g_{|\reactions|}$ or $\config{B}' = \config{A}' - \single{s}_a + \single{s}_{0} + \single{S}_{0} + g_1 + \ldots + g_{|\reactions|}$. In both cases, it is clear that $M(\config{B}') \neq M(\config{A}')$ is defined, as either $G, s_{a+1} \in \config{B}'$ or $G, s_{0} \in \config{B}'$ are true. Furthermore, from the initial assumption, no rule of the form $\single{G} + \single{g}_i \SeqReacts \single{r}_1 +  \config{\PM_i}$ exists. This implies that for each rule $i$, $\config{\RM}_i \not \subset \config{A}'$. From the mapping function, since $M(\config{A'})=(\config{A}$, $a$), each $\lambda_j \in \{\config{\RM_i}\}$ satisfies $\config{A'}[\lambda_j']=\config{A}[\lambda_j]$, where $\lambda_j'$ is the mapped species of $\lambda_j$ in $\crn{C_{SC}}$. Thus, if no such reaction can be applied in $\crn{C_{KVG}}$, it must be the case that no reaction can be applied in $\crn{C_{SC}}$. As $\config{A}_a$ is terminal, we then have to reach configuration $\config{B}_b = \config{A}_a + \config{S_b}$. It follows that $M(\config{A'}) \rightarrow_{\crn{C_{SC}}} M(\config{B'})$.

For any $a,b$ that differ by more than 1 step, consider the sequence of configurations $\config{A}', \ldots, \config{B}'$ with defined mappings to get from configuration $\config{A}'$ to configuration $\config{B}'$. For any $2$ configurations in the same step, we get from case $1$ that $M(\config{A'}) \rightarrow_{\crn{C_{SC}}} M(\config{B'})$. For any $2$ configurations in adjacent steps, we get from the above that $M(\config{A'}) \rightarrow_{\crn{C_{SC}}} M(\config{B'})$. Thus, the same holds for any non-consecutive steps $a,b$.

\para{$\crn{C_{KVG}}$ models $\crn{C_{SC}}$} The final part of the proof is to show that if $\config{A}_a \rightarrow_{\crn{C_{SC}}} \config{B}_b$ implies that $\forall \config{A'}\in [\![\config{A}_a\,]\!]$, $\exists \config{B'}\in [\![\config{B}_b\,]\!]$ such that $\config{A'} \Rightarrow_{\crn{C_{SC}}} \config{B'}$. Consider when $a=b$. Let $\reaction_j \in \reactions$ denote a rule that can be applied to $\config{A}_a$. By the mapping function, there exists a species $\lambda_i'$ for each $\lambda_i \in \{\config{\RM_j}\}$ satisfying $\config{A'}[\lambda_i']=\config{A}_a[\lambda_i]$. Since $M(\config{A'})$ is defined, there must exist species $G$ and $s_a$. From the fact that $\reaction_j$ is executable in $\crn{C_{SC}}$, the rule $\single{G} + \single{g}_j \SeqReacts \single{r}_1 +  \config{\PM_j}$ must be executable in $\crn{C_{KVG}}$. As described in \textit{$\crn{C_{SC}}$ follows $\crn{C_{KVG}}$}, configuration $\config{B}'$ is then reached such that $\config{B}' = \config{A}' + \mathcal{G} - \config{\RM}_j + \config{\PM}_j$. Since $M(\config{B'})=\config{B_b}$, $\config{A_a} \rightarrow_{\crn{C_{SC}}} \config{B}_b$ implies that $\forall \config{A'}\in [\![\config{A}_a\,]\!]$, $\exists \config{B'}\in [\![\config{B}_b\,]\!]$ such that $\config{A'} \Rightarrow_{\crn{C_{SC}}} \config{B'}$. A similar argument applies for when $a \neq b$, following from case $2$ from \textit{$\crn{C_{SC}}$ follows $\crn{C_{KVG}}$}.

\para{Polynomial Simulation}
We now show that the above simulation is polynomial efficient as follows. Let $m$ denote the maximum number of reactants in any rule from $\reactions$, and let $k$ denote the number of steps of $\crn{C_{SC}}$.

\begin{enumerate}
    \item \textbf{polynomial species and rules:} In addition to the already existing $|\species|$ species from $\crn{C_{SC}}$, another $1 + 2 \cdot |\species| + 2 \cdot |\reactions| + |\reactions + 1| + 3 \cdot m \cdot |\reactions| + k + 2$ species are added, resulting in $|\species'| = \mathcal{O}((m + 1) \cdot |\reactions| + |\species| + k)$ total species. For each rule in $\reactions$, a total of $3m$ rules are created to `check' the reactants and then `produce' the products. In total, the number of rules becomes $2 + |\reactions| + 1 + 3 \cdot m \cdot |\reactions| + 2 \cdot k$, resulting in $|\reactions'|=\mathcal{O}( (m+1) \cdot |\reactions| + k)$ total rules.

    \item \textbf{polynomial rule size:} For each of the created rules, the maximum number of reactants is $m$ to consume the primed reactants after the check is complete. The maximum number of products is $|\reactions| + 1$ to reintroduce all the $z_{g_i}$ species into the system.
    
    \item \textbf{polynomial transition sequences:} In the worst case, no rule $\reaction_i$ is applicable in $\crn{C_{SC}}$, and so the current step-configuration must transition to the next step. As shown in \textit{$\crn{C_{SC}}$ follows $\crn{C_{KVG}}$}, this implies that each $g_i$ must be attempted and removed from the system, resulting in a sequence of $|\reactions| \cdot 2 \cdot m$ additional rule applications. As the remaining rule applications are constant, for a sequence of $n$ transitions in $\crn{C_{SC}}$, the resulting simulating sequence is of length $\mathcal{O}(|\reactions| \cdot m)$.
    
    \item \textbf{polynomial volume:} Each configuration $\config{C}'$ has a defined mapping only if $G$ and $s_i$ are present in the system. Thus, the volume of $\config{C}'$ will differ only by a constant amount from configuration $M(\config{C}')$. Any intermediate configuration between $\config{C}'$ and some reachable configuration $\config{C}''$ with a defined mapping will have at most an increase of $\mathcal{O}(|\species|)$ volume introduced by the $z_{\lambda}$ species. Thus, the difference in volume from $\config{C}'$ will differ only by a polynomial amount.
    
\end{enumerate}
The above simulation only utilizes 1) polynomial species and rules, 2) polynomial rule size, 3) polynomial transition sequences, and 4) polynomial volume. Therefore, $\crn{C_{KVG}}$ simulates $\crn{C_{SC}}$ under polynomial simulation.
\end{proof}

\vgscthm*
\begin{proof}
    Similar to Theorem \ref{thm:vgicrnthm}, the proof follows from Lemmas \ref{lem:cycle > VG} and \ref{lem:VG > cycle}, Theorem \ref{thm:vg-kvg} and the Transitivity Theorem \ref{thm:transitivity}. 
\end{proof}

\begin{table}[H]
        \centering
        \UIsimsVG
        \caption*{Table~\ref{tab:UI > VG} (restated): UI CRN simulating VG CRN.}
\end{table}
\uikvg*
\begin{proof}
For any given Void-Genesis CRN $\crn{C_{VG}}$ and a configuration of $\crn{C_{VG}}$ $\config{C}$, we show that there exists a UI parallel CRN $\crn{C_{UI}}$ and configuration $\config{C}'$ that simulates $\crn{C_{VG}}$ over $\config{C}$.

\para{Construction} Given a Void-Genesis CRN $\crn{C_{VG}} = ((\species, \reactions), z)$, we construct the UI parallel CRN $\crn{C_{UI}} = (\species', \reactions')$ as follows. We create $\species'=
\species\cup\{G,
G_1,\cdots,G_{\size{\reactions}},
e_{\lambda},\cdots,e_{\lambda_{\size{\species_1}}}, 
t_1^1,\cdots,  \\ t_{\size{\species_1}}^1,
r_j^1,\cdots, r_{\size{\species_1}}^1,
r_j^2,\cdots, r_{\size{\species_1}}^2,
t_2,E\}$. $G$ will be used to select a reaction $\reaction_i\in\reactions$ to simulate. $e_{\lambda},\cdots,e_{\lambda_{\size{\species_1}}}$ are created to initiate checking the counts of each reactant in the chosen reaction $\reaction_i$. $r_j^1,\cdots, r_{\size{\species_1}}^1$ and $r_j^2,\cdots, r_{\size{\species_1}}^2$ are constructed to represent if a reactant has a zero or non-zero count, respectively, following the simulation of $\reaction_i$. $t^1_1,\cdots,t^1_{\size{\species_1}}$ and $t_2$ are ``timer'' species created to help divide the zero-checking process s.t. the involved reactions do not interfere each other. Finally, $G_1,\cdots,G_{\size{\reactions}}$ and $E$ are made to prevent another reaction from being picked until the zero-checking process for all reactants in $\reaction_i$ is complete. Let the configuration of $\crn{C_{UI}}$ $\config{C}'$ be $\config{C}$ with the addition of a single copy of $G$.

First, in Reaction 1, species $\single{G}$ selects a reaction $\reaction_i \in \reactions$ and consumes the reactants $\RM_i$, producing $\single{G}_i$ to prevent Reaction 1 from running again immediately, the set of species $\config{e}_{\{\RM_i\}}$, and the products $\PM_i$. With Reaction 2, each $\single{e}_{\lambda_j}$ copy is transformed into a copy of the species $\single{t}^1_j$ and $\single{r}^1_j$. Note that, under unique-instruction parallel dynamics, all of the new copies are created within a single time-step.
We then move to Reaction 3, which again is composed of two reactions that will run in parallel due to $\crn{C_{UI}}$'s dynamics.
If a copy of a reactant of $\reaction_i$ $\single{\lambda}_j$ remains present, then $\single{r}^1_j$ and $\single{\lambda_i}$ transform into $\single{r}^2_j$. At the same time, all copies of $\single{t}^1_j$ turn into $\single{t}_2$.
Reaction 4 is composed similarly to Reaction 3. $\single{t}_2$ will transform all copies of $\single{r}^2_j$ back into $\single{\lambda}_i$, and any remaining copies of $\single{r}^1_j$ into $\single{z}$. In both cases, the species $\single{E}$ is also produced.
Finally, in Reaction 5, $\single{G}_i$ and $\size{\{\RM_i\}}$ copies of $\single{E}$ are consumed to reintroduce $\single{G}$, allowing the system to choose a new reaction. 

Each configuration $\config{C'}$ in $\crn{C_{UI}}$ for which $\cmap(\config{C'})$ is defined will contain one leader species $G$, a sequence of $\lambda_i \in \species$, and a unique sequence of zero species corresponding to $\lambda_i$ for which $\config{C'}[\lambda_i]=0$. Hence every $\config{C'}$ will have a unique mapping $\cmap(\config{C'})$:

\vspace{-.2cm}
\[\cmap(\config{C}') =
\begin{cases}
    \sum_{\lambda' \in \species} \config{C}'[\lambda']\cdot\single{\lambda}' & \text{if } \single{G} \in \Set{\config{C}'} \\
    \undefConfig & \text{otherwise.}
\end{cases}\]
\vspace{-.2cm}

\para{$\crn{C_{VG}}$ follows $\crn{C_{UI}}$}
Given configurations $\config{A'}$ and $\config{B'}$ in $\crn{C_{UI}}$ such that both $M(\config{A'})$ and $M(\config{B'})$ are defined, we prove that $\crn{C_{VG}}$ follows $\crn{C_{UI}}$. Using the mapping $M$, we define $M(\config{A'}) \in \config{\crn{C_{VG}}}$ as $\sum_{\lambda_i \in \species} A'[\lambda_i] \cdot \single{\lambda_i}$, which is equivalent to $\config{A'} - \single{G}$. Given that $M(\config{A'})$ is equivalent to  $\config{A'}$ minus $\single{G},$ we observe the following: if reaction $\single{G} + \config{\RM_i} \rightarrow \config{G_i} + \config{e_{\RM_i}} + \config{\PM_i}$ is applicable under $\crn{C}_{UI}$ then an equivalent reaction $\reaction_i$ can be applied in $\crn{C_{VG}}$ resulting in $\config{B}_t = M(\config{A'}) - \config{R} +\config{G_i}+ \config{P}$ and $\config{B} = \config{B}_t + \config{Z}$ as defined in Definition 17.

We define the first intermediary configuration as $X'_1 = \config{A'} - \RM_i + \PM_i + \config{e}_{\{\RM_i\}}$. As described in the construction, this simulation proceeds through five reactions, with each reaction producing the next intermediary configuration. With Reaction 2, each $e_{\lambda_j}$ is replaced with its corresponding timer $t_j^1$ and representative $r_j^1$, resulting in the configuration $X'_2 = X'_1 - \config{e}_{\{\RM_i\}} + \config{r^1}_{\{\RM_i\}} + \config{t^1}_{\{\RM_i\}}$. Through Reaction 3, each timer $t^1_j$ is replaced with a copy of $t^2$, and if possible, replaces the representative species $r_j^1$ with $r_j^2$, indicating the presence of $\lambda_j$ in the system. Consequently, the resulting configuration is \(
\config{X}'_3 = \config{X}'_2 
+ \sum_{\lambda_j \in \{ \RM_i \} \setminus Z} \single{r}_2^j 
- \sum_{\lambda_j \in \{ \RM_i \} \setminus Z} \single{r}_1^j 
- \sum_{\lambda_j \in \{ \RM_i \} \setminus Z} \single{\lambda_j} 
+ |\{ \RM_i \}| \cdot \config{t^2} 
- |\{ \RM_i \}| \cdot \config{t^1}
\), where $Z$ denotes the set of all newly detected zero-count species
. Reaction 4 involves two reactions, each consuming one instance of timer $t^2$ along with either $r_j^1$ or $r_j^2$. Since $r_j^1$ indicates the absence of $\lambda_j$, it is replaced by the species $z$. Conversely, $r_j^2$ denotes that $\lambda_j$ is present, and thus a copy of $\lambda_j$ is added back into the system. Each of these two reactions also produces a copy of species $E$ which will be used for Reaction 5. This produces the configuration $\config{X}'_4 = \config{X}'_3 - \sum_{\lambda_j \in {Z}} \single{r_j^1} + \sum_{\lambda_j \in {Z}} \single{z} - \sum_{\lambda_j \in \{ \RM_i \}  \setminus Z} \single{r_j^2} +\sum_{\lambda_j \in \{ \RM_i \}  \setminus Z} \single{\lambda_j}  + |{\{\RM_i\}}|\cdot \single{E}$. Finally, Reaction 5 takes each copy of $E$ created and the rule marker $G_i$ to produce rule selector $G$, resulting in the defined configuration $\config{B'} = \config{X}'_4 + \single{G} - |\config{e}_{\{\RM_i\}}|\cdot \config{E} = \config{A'} - \config{\RM}_i + \config{\PM}_i + \single{G}$. Using configuration mapping $M$ we see that  $M(\config{B}') =  \config{A}' - \config{\RM}_i + \config{\PM}_i$. Given that $\config{B} = M(\config{B}')$ we conclude the following: For all configurations $\config{A}'$ and $\config{B}'$, if there is a macro transition $\config{A}' \macrotrans_{\crn{C_{UI}}} \config{B}'$ and $M(\config{A}') \neq M(\config{B}')$, there is an equivalent rule application $\reaction_i$ such that $M(\config{A}') \rightarrow_{\crn{C_{VG}}} M(\config{B}')$.

\para{$\crn{C_{UI}}$ models $\crn{C_{VG}}$} Given configurations $\config{A}$ and $\config{B}$ in $\crn{C_{VG}}$, we demonstrate that $\crn{C_{UI}}$ models $\crn{C_{VG}}$. As described through our representative configuration mapping, $[\![\config{A}\,]\!]$ consists solely of the configuration $\config{A} = \sum_{\lambda_i \in \species} A'[\lambda_i] \cdot \single{\lambda_i} + \single{G}$. Assuming that rule $\reaction_i$ is applied to $\config{A}$ to produce the configuration $\config{B} = \config{A} - \config{\RM}_i + \config{\PM}_i + Z$, an equivalent macro-transition from $\config{A}'$ exists, yielding the configuration $\config{B}' = \config{A}' - \config{\RM}_i + \config{\PM}_i + \single{Z}$, as established in the proof for $\crn{C_{VG}}$ simulates $\crn{C_{UI}}$. Given that $[\![\config{B}\,]\!]$ corresponds exactly to $\config{B'}$, we conclude that if $\config{A} \rightarrow_{\crn{C_{VG}}} \config{B}$, then for every $\config{A}' \in [\![\config{A}\,]\!]$, there exists a configuration $\config{B}' \in [\![\config{B}\,]\!]$ such that $\config{A}' \Rightarrow_{\crn{C_{UI}}} \config{B}'$.

\para{Polynomial Simulation}
We now show that the above simulation is polynomial efficient as follows.

\begin{enumerate}
\item \textbf{polynomial species and rules:} As defined in the construction of UI Parallel CRNs, the species alphabet $\species'$ of $\crn{C_{UI}}$ consists of $\species \cup \{ G, E, t^2 \} \cup \{
e_{\lambda_j}, t_j^1, r_j^1, r_j^2, G_i : i \in \{1, \ldots, |\reactions|\},\\
j \in \{1, \ldots, |\{\RM_i\}|\} \}$. This results in a total species count of $|\Lambda'| = \mathcal{O}(|\reactions| + |\species|)$. For the ruleset $\reactions'$, as detailed in Table~7, one rule is created for each $\reaction \in \reactions$, along with six distinct rules for each species $\lambda_j \in \species_1$. This yields a total ruleset size of $|\reactions'| = \mathcal{O}(|\reactions| + |\species|)$.

    \item \textbf{polynomial rule size:} For rule $\reaction_i \in \reactions$ of size $(m, n)$, a rule of at most size $(n + 1, n + m+ 1)$. 
    
    \item \textbf{polynomial transition sequences:} For each rule $\reaction\in \reactions$ applied in $\crn{C_{VG}}$, as shown in $\crn{C_{VG}}$ follows $\crn{C_{UI}}$, there is an equivalent macro transition which undergoes 5 transitions. Thus for a sequence of $n$ transitions in $\crn{C_{VG}}$, the resulting simulating sequence in $\crn{C_{UI}}$ is of length  $\mathcal{O}(n).$
    
    \item \textbf{polynomial volume:} Each configuration $\config{C'}$ has a defined mapping only if $G$ is present in the system. Thus, the volume of $\config{C'}$ and the configuration $M(\config{C'})$ will only differ by a constant amount. Any intermediate configuration between $\config{C'}$ and some reachable configuration $\config{C''}$ with a defined mapping will have at most an increase of $\mathcal{O}(|\species|)$ which is the species of the reactants, which then create a checker species. Thus, the difference in volume will differ only by a polynomial amount.  
    
\end{enumerate}

The above simulation only utilizes 1) polynomial species and rules, 2) polynomial rule size, 3) polynomial transition sequences, and 4) polynomial volume. Therefore, $\crn{C_{UI}}$ simulates $\crn{C_{VG}}$ under polynomial simulation.

Because (1) $\crn{C_{VG}}$ follows $\crn{C_{UI}}$, and (2) $\crn{C_{UI}}$ models $\crn{C_{VG}}$ using a polynomial-time computable function $\cmap$, we can say that $\crn{C_{UI}}$ simulates $\crn{C_{VG}}$.
\end{proof}

\begin{table}[H]
        \centering
        \KVGsimsUI
        \caption*{Table~\ref{tab:kVG > UI} (restated): $k$-VG CRN simulating UI CRN.}
\end{table}
\kvgui*
\begin{proof}
    
\para{Construction} Given a Unique-Instruction Parallel CRN $\crn{C_{UI}} = (\species, \reactions)$ and a configuration $\config{C}$, we construct a $k$-Void-Genesis CRN $\crn{C_{KVG}} = ((\species', \reactions'), \Zero)$  and configuration $\config{C}'$ as described. 
We create $\species'=\species\cup\{F\}\cup\{G_i,N_i,I_i,F_i,R_i^j:i\in\{1,\ldots,|\reactions|\}, j\in\{1,\ldots,|\RM_i|\}\}\cup\{z_{r_j}:r_j\in \RM_i s.t. 1 \leq i \leq |\reactions|\}$. Let $\config{C}'$ be $\config{C}$ with one copy for each of $G_1,\cdots,G_{\size{\reactions}}$.

Table \ref{tab:kVG > UI} overviews the general rules. The system must simulate a maximal selection of possible rules to run. First, a $\single{G}_i$ species exists for every rule in $\reactions$, which sequentially consumes the reactants (Reactions 2/3). If each reactant of a rule $\reaction_i$ is present, Reaction 5 produces $\single{I}_i$. If some reactant is missing, Reactions 1/4 refunds the previously consumed reactants and produces $\single{N}_i$ instead. Both processes create an $\single{X}_i$ copy, and when $\size{\reactions}$ copies of $\single{X}_i$ exist, a maximal set has been chosen and $\single{F}$ is created with Reaction 6. This converts a copy of $\single{N}_i$ or $\single{I}_i$ to  $\single{F}_i$ and output the products $\config{\PM}_i$ if given $\single{I}_i$ (Reaction 7).
Once $\size{\reactions}$ copies of $\single{F}_i$ are created, Reaction 9 resets the rule selection process by consuming all $\single{F}_i$ copies and restoring all $\single{G}_i$ copies.

We define a configuration mapping as follows. If there exists the reaction species $G_1,\ldots,G_{|\reactions|}$ in the configuration $\config{C'}$ then the configuration $\config{C'}$ maps to a configuration in $\crn{C_{UI}}$ that contains the same count of species $\lambda'$, $\forall \lambda' \in \species$.  All intermediate configurations $\config{C'}$ that do not contain all $G_i$ species do not map to anything in the original system. 

\vspace{-.2cm}
\[\cmap(\config{C}') =
\begin{cases}
    \sum_{\lambda' \in \species} \config{C}'[\lambda']\cdot\single{\lambda}' & \text{if } \single{G}_1,\cdots,\single{G}_{\size{\reactions}} \in \Set{\config{C}'} \\
    \undefConfig & \text{otherwise.}
\end{cases}\]
\vspace{-.2cm}

The configuration mapping function is polynomial-time computable because when the function is defined, it computes the sum of vectors $\config{C}'[\lambda']\cdot\single{\lambda}'$ for all species in the simulated system. Therefore, the configuration mapping function is computable in $\mathcal{O}(\size{\species})$.

Now we show that $\crn{C_{KVG}}$ simulates any given $\crn{C_{UI}}$. This is done in two parts. We first show that $\crn{C_{UI}}$ follows $\crn{C_{KVG}}$, and then we prove that $\crn{C_{KVG}}$ models $\crn{C_{UI}}$.

\para{$\crn{C_{UI}}$ follows $\crn{C_{KVG}}$}
We show that $\crn{C_{UI}}$ follows $\crn{C_{KVG}}$ in two parts. We start by proving that for any two given configurations $\config{A'}$ and $\config{B'}$ in $\crn{C_{KVG}}$ where $\cmap(\config{A'})$ and  $\cmap(\config{B'})$ are defined, such that $\config{A'} \macrotrans_{\crn{C_{KVG}}} \config{B'}$ and $\cmap(\config{A'}) \not= \cmap(\config{B'})$, then $\cmap(\config{A'})\rightarrow_{\crn{C_{UI}}}\cmap(\config{B'})$. We know that both $\config{A'}$ and $\config{B'}$ contain set of global leaders $G_1, \ldots, G_{\size{\reactions}}$ because $\cmap(\config{A'})$ and $\cmap(\config{B'})$ are defined. Because $\cmap(\config{A'}) \not= \cmap(\config{B'})$, there exists an applicable reaction $\reaction_i \in \reactions$ for which there will be reactions $\single{G}_i + \single{r}_1 \rightarrow \single{R}_i^2$ and $\single{F} + \single{F}_1 + \ldots + \single{F}_{|\Gamma|} \rightarrow \single{G}_i + \ldots + \single{G}_{|\reactions|}$ that are applicable in $\crn{C_{UI}}$. And the configuration $\config{A'}$ is the configuration before the reaction $\single{G}_i + \single{r}_1 \rightarrow \single{R}_i^2$ is applied and $\config{B'}$ is the resulting configuration after the reaction represented by $\single{F} + \single{F}_1 + \ldots + \single{F}_{|\Gamma|} \rightarrow \single{G}_i + \ldots + \single{G}_{|\reactions|}$ is applied. The system will start with the configuration $\config{A'}$ and transition through certain intermediate configurations consisting of species $\single{G}_i, \single{z}_{r_j}, \single{N}_i, \single{X}_i, \single{r}_1, \ldots, \single{r}_j , \single{R}_i^2, \ldots, \single{R}_i^{\size{\reactions}}, \single{I}_i, \single{F},$ and $\single{F}_1, \ldots, \single{F}_{\size{\reactions}}$, finally resulting in configuration $\config{B'}$. In the original system $\crn{C_{UI}}$, the configuration $\cmap(\config{A'})$ containing the same count of species in $\species$ as in $\config{A'}$ will transition to the configuration $\cmap(\config{A'})-\sum_{i=1}^{k} \reactants_i + \sum_{i=1}^{k} \products_i$ as described in Definition \ref{def:unique-dynamic}. The resulting configuration $\cmap(\config{B'}) = \cmap(\config{A'})-\sum_{i=1}^{k} \reactants_i + \sum_{i=1}^{k} \products_i$ in $\crn{C_{UI}}$ will contain the same counts of species in $\species$ as in $\config{B'}$. Hence showing that it follows $\cmap(\config{A'}) \rightarrow_{\crn{C_{UI}}} \cmap(\config{B'})$, if $\config{A'} \macrotrans_{\crn{C_{KVG}}} \config{B'}$ and $\cmap(\config{A'}) \not= \cmap(\config{B'})$.

\para{$\crn{C_{KVG}}$ models $\crn{C_{UI}}$}
Say $\config{A}$ and $\config{B}$ are two configurations in $\crn{C_{UI}}$ such that $\config{A} \rightarrow_{C_{UI}} \config{B}$. As described in the configuration of configuration mapping,  for any configuration $\config{A}$ we only have one configuration $\config{A'} \in [\![\config{A}\,]\!]$ which contains the same count of species $\lambda_i \in \species$ as in $\config{A}$. Similarly the configuration $\config{B}$ will have only one configuration $\config{B'} \in [\![\config{B}\,]\!]$. Both configurations $\config{A'}$ and $\config{B'}$ will contain species the set of global leaders $G_1, \ldots, G_{\size{\reactions}}$ because their mapping is defined. We know that $\config{A} \rightarrow_{C_{UI}} \config{B}$, hence there exists an applicable reaction $\reaction_i \in \reactions$ such that $\config{B}= \config{A}-\sum_{i=1}^{k} \reactants_i + \sum_{i=1}^{k} \products_i$ where we get the maximal by Def \ref{def:unique-dynamic}. If such a reaction is applicable and the system $\crn{C_{KVG}}$ is in configuration $\config{A'}$ containing all of $G_1, \ldots, G_{\size{\reactions}}$, then there will exist a sequence of applicable reactions in $\crn{C_{UI}}$ (representing $\reaction_i$) leading to configuration $\config{B'}$ that also contains all of $G_1, \ldots, G_{\size{\reactions}}$ such that $\sum_{\lambda_i\in\species}\config{B'}[\lambda_i]\cdot\single{\lambda}_i = \sum_{\lambda_i\in\species}\config{A'}[\lambda_i]\cdot\single{\lambda}_i - \sum_{i=1}^{k} \reactants_i + \sum_{i=1}^{k} \products_i$. By the definition of configuration mapping, $\config{B'} \in [\![\config{B}\,]\!]$ (i.e., $\cmap(\config{B'}) = \config{B}$). Therefore for any two configurations $\config{A}$ and $\config{B}$ such that $\config{A}\rightarrow_{\crn{C_{UI}}}\config{B}$ and $\config{A'} \in [\![\config{A}\,]\!]$ and $\config{B'} \in [\![\config{B}\,]\!]$, $\config{A'}\macrotrans_{\crn{C_{KVG}}}\config{B'}$.

\para{Polynomial Simulation}
We now show that the above simulation is polynomial efficient as follows.

\begin{enumerate}
    \item \textbf{polynomial species and rules:} As defined in the construction, along with the original species in $\species$, the set $\species'$ contains the set of leader species which are for every reaction $\size{\reactions}$, the reset species for each reaction $\size{\reactions}$, and other species are noted to be $\size{\reactions}$. The checker species R is the size of the $\size{\reactions} \cdot \size{\species}$ since we are checking each reaction's species. Totaling the amount of species to $\mathcal{O}(\size{\reactions} \cdot \size{\species})$. In the addition to the already existing $\single{\reactions}$ from $\crn{C_{UI}}$, another we use $\size{\reactions} \cdot \size{\species}$, since we use $2 \cdot(\size{\reactions} \cdot \size{\species})$ for rules that contain the $G$ species. For rule selections, we have a total of $\size{\reactions}^2 \cdot \size{\species}$. rules selected and reset are 1 reaction. For creating the products, when a rule didn't run, both of them are $\size{\reactions}$. Totalling the amount of of rules $\size{\reactions'} = \mathcal{O}(\size{\reactions}^2 \cdot \size{\species})$.

    \item \textbf{polynomial rule size:} For a rule $\reaction \in \reactions$ applied in $\crn{C_{UI}}$, we use different rules sizes, from constant size $(2, 2), (2, 3)$. Onces a reaction is selected, the size of the product is based on the about of reactants are in the reaction, making it size of $(2, \size{\RM_i})$. For when rules are selected we use a rule size of  $(\size{\reactions}, 1)$, and for resetting $(\size{\reactions}, \size{\reactions})$. At most we will have $\mathcal{O}(\size{\reactions})$ for the reaction size, and $ \mathcal{O}(\size{\reactions} + \size{\RM_i})$ for the product sizes making it polynomial in rule size.
    
    \item \textbf{polynomial transition sequences:} For any rule $\reaction_i$ trasitioning to a rule $\reaction_{i+1}$ in the $\crn{C_{UI}}$, there are $\mathcal{O}(\size{\reactions} \cdot \size{\RM_i} )$ intermediate rules that are applied in $\crn{C_{UI}}$. Therefore, any sequence of $n$ transitions in the $\crn{C_{UI}}$ is simulated by $\mathcal{O}(\size{\reactions} \cdot \size{\RM_i}) \cdot n$ in the $\crn{C_{KVG}}$.
    
    \item \textbf{polynomial volume:}  Every configuration $\config{C'}$ has a defined mapping only if the set of global leaders $G_1, \ldots, G_{\size{\reactions}}$ are present in the system. Thus the volume of $\config{C'}$ and the configuration $M(\config{C'})$ will only differ by $\mathcal{O}(\size{\reactions})$. Any intermediate configuration between $\config{C'}$ and some reachable configuration $\config{C''}$ with a defined mapping will have at most an increase of $\mathcal{O}(\size{\reactions})$ which is the species of the reactants, which then create a checker species $I_i$. Thus, the difference in volume will differ only by a polynomial amount. 
    
\end{enumerate}

The above simulation only utilizes 1) polynomial species and rules, 2) polynomial rule size, 3) polynomial transition sequences, and 4) polynomial volume. Therefore, $\crn{C_{KVG}}$ simulates $\crn{C_{UI}}$ under polynomial simulation.

Because (1) $\crn{C_{UI}}$ follows $\crn{C_{KVG}}$, and (2) $\crn{C_{KVG}}$ models $\crn{C_{UI}}$ using a polynomial-time computable function $\cmap$, we can say that $\crn{C_{KVG}}$ simulates $\crn{C_{UI}}$.
\end{proof}
\vguithm*
\begin{proof}
    Similar to Theorem \ref{thm:vgicrnthm}, the proof follows from Lemmas \ref{thm:UI > VG} and \ref{thm:VG > UI}, Theorem \ref{thm:vg-kvg} and the Transitivity Theorem \ref{thm:transitivity}. 
\end{proof}

\end{document}